\theoremstyle{plain}
\newtheorem{theorem}{Theorem}[section]
\newtheorem{corollary}[theorem]{Corollary}
\newtheorem{lemma}[theorem]{Lemma}
\newtheorem{problem}[theorem]{Problem}
\newtheorem{proposition}[theorem]{Proposition}
\newtheorem{definition}[theorem]{Definition}
\newtheorem{assumption}[theorem]{Assumption}
\newtheorem{requirement}[theorem]{Requirement}
\theoremstyle{remark}
\newtheorem{remark}[theorem]{Remark}
\numberwithin{equation}{section}
\newcommand{\ind}{\mathbf{1}}
\newcommand{\rsto}{]\!\kern-1.8pt ]}
\newcommand{\lsto}{[\!\kern-1.7pt [}
\numberwithin{equation}{section}
\newcommand{\RR}{\mathbb{R}}
\newcommand{\QQ}{\mathbb{Q}}
\newcommand{\QQtilde}{\widetilde{\mathbb{Q}}}
\newcommand{\PP}{\mathbb{P}}
\newcommand{\CC}{\mathbb{C}}
\newcommand{\EE}{\mathbb{E}}
\newcommand{\cF}{\mathcal{F}}
\newcommand{\cH}{\mathcal{H}}
\newcommand{\cG}{\mathcal{G}}
\newcommand{\cT}{\mathcal{T}}
\newcommand{\cX}{\mathcal{X}}
\newcommand{\mfU}{\mathfrak{U}}
\newcommand{\Yhat}{\hat{Y}}
\newcommand{\xihat}{\hat{\xi}}
\newcommand{\Hline}{\bar{H}}
\newcommand{\cD}{\mathcal{D}}
\newcommand{\Excond}[3]{\mathbb{E}^{#1}\left[\left.#2\right|#3\right]}  
\newcommand{\im}{\ensuremath{\mathsf{i}}}
\newcommand{\subjclassname@JEL}{JEL Classification}
\begin{document}
\title[A general HJM framework for multiple curve modeling]{A general HJM framework for multiple yield curve modeling}

\author{Christa Cuchiero}
\address[Christa Cuchiero]{Vienna University of Technology  \newline
\indent Financial and Actuarial Mathematics,\newline%
\indent Wiedner Hauptstrasse 8-10/105-1 1040 Wien, Austria}
\email[Christa Cuchiero]{cuchiero@fam.tuwien.ac.at}%

\author{Claudio Fontana}
\address[Claudio Fontana]{Laboratoire de Probabilit\'es et Mod\`eles Al\'eatoires, Universit\'{e} Paris Diderot, \newline
\indent avenue de France, F-75205 Paris, France}
\email[Claudio Fontana]{fontana@math.univ-paris-diderot.fr}%
\thanks{The research of C.F. was partly supported by a Marie Curie Intra European Fellowship within the 7th European Community Framework Programme under grant agreement PIEF-GA-2012-332345.}

\author{Alessandro Gnoatto}
\address[Alessandro Gnoatto]{Mathematisches Institut der LMU M\"unchen,\newline%
\indent Theresienstrasse, 39 D-80333 M\"unchen}
\email[Alessandro Gnoatto]{gnoatto@mathematik.uni-muenchen.de}%

\begin{abstract}
We propose a general framework for modeling multiple yield curves which have emerged after the last financial crisis. In a general semimartingale setting, we provide an HJM approach to model the term structure of multiplicative spreads between FRA rates and simply compounded OIS risk-free forward rates.
We derive an HJM drift and consistency condition ensuring absence of arbitrage and, in addition, we show how to construct models such that multiplicative spreads are greater than one and ordered with respect to the tenor's length.
When the driving semimartingale is an affine process, we obtain a flexible and tractable Markovian structure.
Finally, we show that the proposed framework allows to unify and extend several recent approaches to multiple yield curve modeling. 
\end{abstract}

\keywords{Multiple yield curves, HJM model, semimartingale, forward rate agreement, Libor rate, interest rate, affine processes, multiplicative spreads}
\subjclass[2010]{91G30, 91B24, 91B70. \textit{JEL Classification} E43, G12}

\maketitle

\section{Introduction}\label{sec:Intro}

The last financial crisis has profoundly affected fixed income markets. Most notably, significant spreads have emerged between interbank (Libor/Euribor) rates and (risk-free) OIS rates as well as between interbank rates associated to different tenor lengths, mainly due to an increase in credit and liquidity risk. While negligible in the pre-crisis environment, such spreads represent nowadays one of the most striking features of interest rate markets, with the consequence that interbank rates cannot be considered risk-free any longer (see Section \ref{subsec:market} for more details). From a modeling perspective, this new market situation necessitates a new generation of interest rate models, which are able to represent in a consistent way the evolution of multiple yield curves and allow to value fixed income derivatives.

The present paper aims at providing a coherent and general modeling approach for multiple interest rate curves. We shall adopt an HJM framework driven by general semimartingales in the spirit of~\cite{KK13} in order to model the joint evolution of the term structure of OIS zero coupon bond prices and of the term structure of spreads between forward rates linked to interbank rates and OIS forward rates. More specifically, we shall model the term structure of \emph{multiplicative spreads} between (normalized) forward rates implied by market forward rate agreement (FRA) rates, associated to a family of different tenors, and (normalized) simply compounded OIS forward rates. 
Besides admitting a natural economic interpretation in terms of forward exchange premiums, multiplicative spreads provide a particularly convenient parametrization of the term structures of interbank rates.
Referring to Section \ref{subsec:model} for a more detailed discussion of the proposed framework, let us just mention here that, additionally to the great generality and flexibility, this modeling approach has the advantage of considering as model fundamentals easily observable market quantities and of leading to a clear characterization of order relations between spreads associated to different tenors. Moreover, specifying the driving semimartingale as an affine process leads to a Markovian structure and tractable valuation formulas.

By adopting an abstract HJM formulation, we derive a simple HJM drift and consistency condition ensuring absence of arbitrage in a general semimartingale setting. Moreover, starting from a given tuple of basic building blocks, we provide a general construction of arbitrage-free multiple yield curve models such that spreads are ordered and greater than one. To this effect, we prove existence and uniqueness of the SPDEs associated to the forward curves when translated to the Musiela parametrization and show how to guarantee the consistency condition by constructing an appropriate pure jump process whose compensator solves a random generalized moment problem. As shown in Section~\ref{sec:relations}, most of the multiple curve models proposed in the literature can be recovered as suitable specifications of our general framework, thus underlying the high flexibility of the proposed approach.

The multiple curve phenomenon has attracted significant attention from market practice as well as from the academic literature (see, e.g., the recent book~\cite{Henr14} and the references therein). 
To the best of our knowledge, the first paper highlighting the relevance of the multiple curve issue shortly before the beginning of the credit crunch was \cite{hen07}. 
From a modeling perspective, as in the case of classical interest rate models, most of the models proposed so far in the literature can be ascribed to three main mutually related families: short-rate approaches, Libor market models and HJM models. Referring to Section~\ref{sec:relations} for a detailed comparison of the different approaches, we just mention that multiple curve short rate models have been first introduced in \cite{kitawo09}, \cite{ken10}, \cite{fitr12} and, more recently, in \cite{MR14} and \cite{GM:14}, while Libor market models have been extended to the multiple curve setting in \cite{mer10b}, \cite{mer10} and, more recently, in \cite{GPSS14}. 
In a related context, \cite{merxie12} propose a model for additive spreads which can be applied on top of any classical single-curve interest rate model.
Our approach is closer to the multiple curve HJM-type models proposed in the literature, see in particular \cite{mopa10}, \cite{pata10}, \cite{fushita09}, \cite{cre12} and \cite{CGNS:13} (note also that the idea of modeling multiplicative spreads goes back to~\cite{hen07} and \cite{hen10}).
We also want to mention that the joint modeling of the risk-free term-structure together with a ``risky'' term-structure goes back to the earlier contributions \cite{JT:95} and \cite{DJ:02}. 
More recently, spreads between Libor rates and risk-free rates have also been modeled in \cite{grba12} by introducing default risk in a 
Libor market model.


The paper is organized as follows. Section~\ref{subsec:market} introduces the basic quantities considered in the paper and explains the philosophy behind the proposed modeling approach.
In Section~\ref{genFram}, we define a general HJM-type framework, inspired by~\cite{KK13},  which we then apply to multiple yield curve modeling. In particular, we derive a drift and consistency condition which ensures absence of arbitrage in general HJM models driven by It\^o-semimartingales.
In Section~\ref{sec:construction} we show how to construct arbitrage-free models with ordered spreads satisfying the drift and consistency condition. 
In Section~\ref{sec:implementation} we illustrate the main aspects related to the implementation of the proposed framework and provide general guidelines to model calibration. Moreover, we present general valuation formulas and introduce an especially tractable specification based on affine processes. 
In Section~\ref{sec:relations} we show how most of the existing multiple curve models can be easily embedded in our framework. Finally, Appendix~\ref{sec:FRArates} contains a review of pricing under collateral and its implication for the definition of fair FRA rates, Appendix~\ref{app:FX} illustrates a foreign exchange analogy, Appendix~\ref{appendix:local_ind} briefly recalls the notion of local independence of semimartingales and Appendix~\ref{app:proof} contains the technical proofs of several results of Section~\ref{sec:construction}.

\section{Modeling the post-crisis interest rate market}	\label{subsec:market}

In fixed income markets, the underlying quantities of the vast majority of traded contracts are Libor (or Euribor) rates $L_T(T,T+\delta)$, for some time interval $[T,T+\delta]$, where the tenor $\delta>0$ is typically one day (1D), one week (1W) or several months (typically 1M, 2M, 3M, 6M or 12M). 
While before the last financial crisis rates associated to different tenors were simply related by no-arbitrage arguments, nowadays, for every tenor $\delta \in \{\delta_1, \ldots, \delta_m\}$, a specific yield curve is constructed from market instruments that depend on Libor rates corresponding to the specific tenor $\delta$. 

The rate for overnight borrowing, denoted by $L_T(T,T+1/360)$, is the Federal Funds rate in the US market and the Eonia (euro overnight index average) rate in the Euro area. 
Overnight rates represent the underlying of overnight indexed swaps (OIS) and OIS rates are the market quotes for these swaps (see Section~\ref{sec:noopt}). OIS rates play an important role, being commonly assumed to be the best proxy for risk-free rates, and are also used as collateral rates in collateralized transactions, thus leading to OIS discounting (see Appendix~\ref{sec:FRArates}). 
By relying on bootstrapping techniques (see e.g.~\cite{AB:13}), the following curves can be obtained from OIS rates:
\begin{itemize}
\item (risk-free) OIS zero coupon bond prices $T\mapsto B(t,T)$;
\item instantaneous (risk-free) OIS forward rates $T \mapsto f_t(T)=-\partial_T\log B(t,T)$;
\item simply compounded (risk-free) OIS forward rates 
\[
T\mapsto L^{D}_t(T,T+\delta) := \frac{1}{\delta}\left(\frac{B(t,T)}{B(t,T+\delta)}-1\right).
\]
\end{itemize}
In particular, note that $L^{D}_t(T,T+\delta)$ corresponds to the pre-crisis risk-free forward Libor rate at time $t$ for the interval $[T, T+\delta]$ (the superscript $D$ stands for discounting).

While OIS rates provide a complete picture of the \emph{risk-free} (discounting) yield curve, the underlying quantities of typical fixed income products, such as forward rate agreements (FRAs), swaps, caps/floors and swaptions, are Libor rates $L_T(T,T+\delta)$ for some tenor $\delta>1/360$. Since these rates are affected by the credit and liquidity risk of the panel of contributing banks (interbank risk), we shall sometimes refer to Libor rates as \emph{risky} rates. 

Among all financial contracts written on Libor rates, FRAs can be rightfully considered -- due to the simplicity of their payoff -- as the most fundamental instruments and are also liquidly traded on the derivatives' market, especially for short maturities. 
Moreover, typical linear interest rate derivatives, like swaps or basis swaps  can be represented as portfolios of FRAs (see Section~\ref{sec:noopt}).
The FRA rate at time $t$ for the interval $[T,T+\delta]$, denoted by $L_t(T,T+\delta)$, is the rate fixed at time $t$ such that the fair value of a FRA contract is null.
As shown in Appendix \ref{sec:FRArates}, the no-arbitrage value of the FRA rate $L_t(T,T+\delta)$ in line with current market practice is given by the following expression:
\begin{align}	\label{eq:defLibor}
L_t(T,T+\delta)=\mathbb{E}^{\mathbb{Q}^{T+\delta}}\left[L_T(T,T+\delta)\, \big|\, \mathcal{F}_t\right],
\end{align}
where $\mathbb{Q}^{T+\delta}$ denotes a $(T+\delta)$-forward measure with the OIS bond $B(\cdot,T+\delta)$ as num\'eraire. 
In particular, $\bigl(L_t(T,T+\delta)\bigr)_{t\in[0,T]}$ is a $\mathbb{Q}^{T+\delta}$-martingale, for all $T\geq0$, which will be the crucial property that has to be satisfied when setting up a multiple yield curve model.
Formula \eqref{eq:defLibor} has been first introduced as a definition of the FRA rate in \cite{mer10b}.

The spreads mentioned at the very beginning of the present paper arise from the fact that market FRA rates are typically higher than simply compounded OIS forward rates, i.e., $L_t(T,T+\delta) > L^D_t(T,T+\delta)$.
This is related to the fact that the Libor panel is periodically updated to include only creditworthy banks. Hence, Libor rates incorporate the risk that the average credit quality of an initial set of banks deteriorates over the term of the loan, while OIS rates  reflect the average credit quality of a newly refreshed Libor panel (see, e.g.,~\cite{fitr12}).
Therefore, since the year 2007, we observe positive spreads between FRA and OIS forward rates, as illustrated in Figures~\ref{fig:1} and~\ref{fig:2}. 
In particular, observe that spreads are generally positive and increasing with respect to the tenor length $\delta$.

\begin{figure}
\begin{minipage}[hbt]{7cm}
	\centering
	\includegraphics[width=6cm]{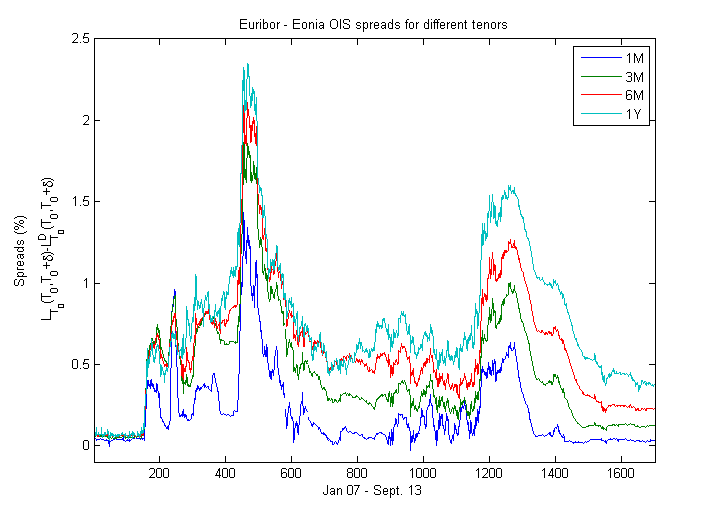}
	\caption{\small{Additive Eonia--Euribor spreads from Jan.~2007 to Sept.~2013 for $\delta=1/12, 3/12, 6/12,1$}}
	\label{fig:1}
\end{minipage}
\hfill
\begin{minipage}[hbt]{7cm}
	\centering
	\includegraphics[width=7cm]{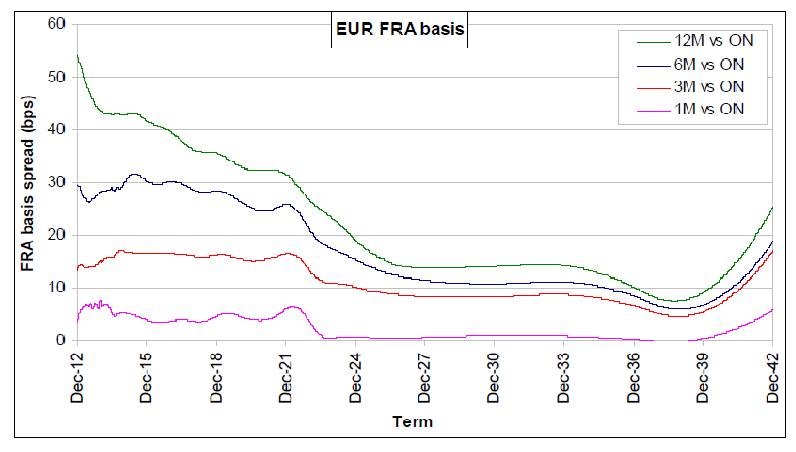}
	\caption{\small{Term structure of additive spreads between FRA rates and OIS forward rates, at Dec.~11, 2012 for $\delta=1/12, 3/12, 6/12,1$.}}
	\label{fig:2}
\end{minipage}
\end{figure}

\subsection{Problem formulation and modeling approach}	\label{subsec:model}

Motivated by the above discussion, we consider OIS zero coupon bonds and FRA contracts, for different tenors $\{\delta_1,\ldots,\delta_m\}$, as the market fundamentals. Note that, in the post-crisis interest rate market, FRA contracts must be added to the market composed of all risk-free zero coupon bonds, because they cannot be perfectly replicated by the latter any longer. Our main goal consists in solving in a general way the following problem.

\begin{problem}		\label{main_pb}
Given today's prices of OIS zero coupon bonds $B(0,T)$ and FRA rates $L_0(T,T+\delta)$, for different tenors $\delta\in\{\delta_1, \ldots, \delta_m\}$ and for all maturities $T\geq0$, model their stochastic evolution so that the market consisting of all OIS zero coupon bonds and all FRA contracts is free of arbitrage.
\end{problem}

Let us remark that throughout the paper we identify ``no-arbitrage'' with the existence of an equivalent measure under which the
OIS zero coupon bonds and FRA contracts denominated in units of the OIS bank account are martingales. 
Strictly speaking this is only a sufficient condition which guarantees ``no asymptotic free lunch with vanishing risk'' recently introduced in~\cite{CKT14}. This notion is an extension of the well-known ``no free lunch with vanishing risk'' condition by Delbaen and Schachermayer~\cite{DS94} to markets with uncountably many assets, as considered in our setting.

Apart from the presence of FRA contracts, due to the fact that Libor rates are no longer risk-free, Problem~\ref{main_pb} is the question dealt within the classical Heath-Jarrow-Morton (HJM) framework~\cite{HJM:92}, which describes the arbitrage-free stochastic evolution of the term structure of risk-free zero coupon bond prices. 
Our approach consists thus in extending the classical HJM framework in order to include FRA contracts for a finite collection of tenors $\{\delta_1,\ldots,\delta_m\}$ and for all maturities. The crucial question is how to preclude arbitrage in this setting and how to translate such a fundamental requirement into a transparent condition on the model's ingredients. 

From a modeling perspective, a first possibility would be to directly specify some dynamics for $\bigl(L_t(T,T+\delta)\bigr)_{t\in[0,T]}$, for all $\delta\in\{\delta_1,\ldots,\delta_m\}$ and $T\geq0$. However, it is easier to model spreads directly in order to capture their positivity and monotonicity with respect to the tenor, in line with the empirical findings reported above. We consider the following \emph{multiplicative} spreads
\begin{align}\label{eq:multspread}
S^{\delta}(t,T):=\frac{1+\delta L_t(T,T+\delta)}{1+\delta L^D_t(T,T+\delta)},
\end{align}
for $\delta\in\{\delta_1,\ldots,\delta_m\}$, corresponding to multiplicative spreads between (normalized) FRA rates and (normalized) simply compounded OIS forward rates. Note also that the initial curve of multiplicative spreads can be directly obtained from the OIS and FRA rates observed on the market. As an example, Figure~\ref{fig:3} displays the curve of $T \mapsto S^{\delta}(T_0,T)$ obtained from market data at $T_0=\textrm{Aug. }8, 2013$.

Let us now explain the reasoning behind this modeling choice, considering the case of a single tenor $\delta$ for simplicity of presentation. 
As a preliminary step, we illustrate a foreign exchange analogy, inspired by~\cite{JT:95} and \cite{bia10} (see Appendix~\ref{app:FX} for more details). To the risky Libor rates $L_T(T,T+\delta)$ one can associate \emph{artificial} risky bond prices $B^{\delta}(t,T)$ such that $1+\delta L_T(T,T+\delta)=1/B^{\delta}(T,T+\delta)$, for all $T\geq0$, in analogy to the classical single curve risk-free setting. 
Following for instance the discussion in~\cite{mor09}, we can think of such risky bonds as issued by a bank representative of the Libor panel\footnote{Artificial risky bonds have been introduced in a number of recent papers, see e.g.~\cite{cre12,GM:14,hen10}. We want to emphasize that artificial risky bond prices are only introduced here as an explanatory tool and shall not be considered in the following sections of the paper.}. 
If we interpret risk-free bonds as \emph{domestic} bonds and artificial risky bonds as \emph{foreign} bonds, the quantity $B^{\delta}(t,T)$ represents the price (in units of the foreign currency) of a foreign risky zero-coupon bond.
Due to equation \eqref{eq:multspread}, the (spot) multiplicative spread (for $t=T$) satisfies $S^{\delta}(T,T)=B(T,T+\delta)/B^{\delta}(T,T+\delta)$.  
According to this foreign exchange analogy as explained in detail in Appendix~\ref{app:FX}, the quantity $S^{\delta}(T,T)$ can be interpreted as the forward exchange premium between the domestic and the foreign economy over the period $[T,T+\delta]$,   measuring the change in 
the riskiness of foreign bonds with respect to domestic bonds as anticipated by the market at time $T$. 
In the present context, $S^{\delta}(T,T)$ thus represents a market valuation (at time $T$) of the credit and liquidity quality of the Libor panel (corresponding to the foreign economy) over the period $[T,T+\delta]$.
In that sense, multiplicative spreads are a natural quantity to model in a multiple yield curve framework. 

Our approach consists in formulating a general HJM framework for the term structures of OIS bond prices $B(t,T)$ and of multiplicative spreads $S^{\delta}(t,T)$. While in the case of OIS bonds the situation is analogous to the classical HJM setting, the modeling of multiplicative spreads is much less standard. To this effect, we propose an approach inspired by the HJM philosophy, as put forward in ~\cite[Section 2.1]{KK13} (compare also with~\cite{Carmona}). 

In HJM-type models there typically exists a \emph{canonical underlying asset} or a \emph{reference process} which is the underlying of the assets of interest. In our context, the assets of interest are OIS zero coupon bonds and FRA contracts.
In the case of OIS bonds, the canonical underlying asset is the (risk-free) OIS  bank account. Concerning FRA contracts, the choice is less obvious. Inspired by the foreign exchange analogy discussed above, we consider as reference process the quantity
\[
Q^{\delta}_T := S^{\delta}(T,T) = \frac{B(T,T+\delta)}{B^{\delta}(T,T+\delta)}
\quad\text{for all } T \geq 0.
\]

In order to obtain a convenient parametrization (``codebook'') of the term structures, the next step in the formulation of an HJM-type model consists in specifying \emph{simple models} for the evolution of the canonical underlying assets/reference processes.

In the case of OIS bonds, this is done by supposing that the OIS bank account, denoted by $(B_t)_{t\geq 0}$, is simply given by  $B_t=\exp\bigl(\int_0^t r_s\,ds\bigr)$, where $(r_t)_{t\geq 0}$ is a deterministic short rate. This yields the relation $r_T:=-\partial_T\log\bigl(B(t,T)\bigr)$. However, market data do not follow such a simple model and, hence, $-\partial_T\log\bigl(B(t,T)\bigr)$ yields a parameter manifold which changes randomly over time. This leads to instantaneous forward rates $f_t(T):=-\partial_T\log\bigl(B(t,T)\bigr)$, for which a stochastic evolution has to be specified. Absence of arbitrage is implied by the requirement that discounted bond prices are martingales, which then yields the well-known HJM \emph{drift condition}. The dynamics of the reference process, i.e., of the short rate $(r_t)_{t\geq0}$, are determined via the \emph{consistency condition}, that is, $r_t=f_t(t)$.

In the case of FRA contracts, we keep the simple model for the OIS bonds, assuming a deterministic short rate, and suppose additionally -- similarly to~\cite{KK13} -- the following simple model for $Q^{\delta}_T$ 
\begin{align}\label{eq:exchangerate}
Q^{\delta}_T=\exp(Z_T), 
\quad\text{for all } T \geq 0,
\end{align}
where $(Z_t)_{t\geq0}$ is a one-dimensional time-inhomogeneous L\'evy process under a given pricing measure $\mathbb{Q}$ (and thus under all forward measures due to the deterministic short rate). Its L\'evy exponent is denoted by $\psi(t,u)$, for $(t,u)\in \mathbb{R}_+ \times \mathbb{R}$. 
In view of equations~\eqref{eq:defLibor}-\eqref{eq:multspread} and recalling the relation $1+\delta L_T(T,T+\delta)=1/B^{\delta}(T,T+\delta)$, this leads to the following representation of $S^{\delta}(t,T)$:\footnote{Due to the deterministic short rate, it is not necessary to distinguish the expectations with respect to different measures, but we explicitly indicate them for consistency of the exposition with the general setting of the following sections.}
\begin{align}
S^{\delta}(t,T)
&= \frac{B(t,T+\delta)}{B(t,T)}\mathbb{E}^{\mathbb{Q}^{T+\delta}}\bigl[1+\delta L_T(T,T+\delta)\,\bigr|\,\mathcal{F}_t\bigr]
= \frac{B(t,T+\delta)}{B(t,T)}\mathbb{E}^{\mathbb{Q}^{T+\delta}}\left[\frac{1}{B^{\delta}(T,T+\delta)}\,\Bigr|\,\mathcal{F}_t\right]	\nonumber\\
&= \mathbb{E}^{\mathbb{Q}^T}\left[\frac{B(T,T+\delta)}{B^{\delta}(T,T+\delta)}\,\Bigr|\,\mathcal{F}_t\right]
= \mathbb{E}^{\mathbb{Q}^T}\bigl[Q^{\delta}_T\,|\,\mathcal{F}_t\bigr]
= \mathbb{E}^{\mathbb{Q}^T}\left[e^{Z_T}\,|\,\mathcal{F}_t\right]	\nonumber\\
&= \exp\Bigl(Z_t+\int_t^T\! \psi(s,1)\,ds\Bigr).	\label{eq:formspread}
\end{align}
In particular, this implies the following relation:
\begin{equation}	\label{eq:spreadrate}
\partial_T\log\bigl(S^{\delta}(t,T)\bigr) = \psi(T,1).
\end{equation}
As in the case of the OIS term structure, since market data do not follow such a simple model
and $\bigl(S^{\delta}(t,T)\bigr)_{t\in[0,T]}$ evolves randomly over time, 
we need to put $\psi(T,1)$ ``in motion''. To this effect, we define an instantaneous forward spread rate via the left-hand side of \eqref{eq:spreadrate}, i.e., $\eta^{\delta}_t(T):=\partial_T\log\bigl(S^{\delta}(t,T)\bigr)$, and specify general stochastic dynamics for $\eta^{\delta}_t(T)$. A typical shape of the curve $T \mapsto \eta^{\delta}_{T_0}(T)$ is shown in
Figure~\ref{fig:4}, obtained from market data at $T_0=\textrm{Aug. }8, 2013$.
From the defining property of FRA rates (see equation~\eqref{eq:defLibor}), which is equivalent to the $\mathbb{Q}^T$-martingale property of $\bigl(S^{\delta}(t,T)\bigr)_{t\in[0,T]}$, for all $T\geq0$ (see Lemma~\ref{lem:spreadQT}), an HJM \emph{drift condition} can be deduced, which then ensures absence of arbitrage. Moreover, the dynamics of the reference process $(Q^{\delta}_t)_{t\geq0}$, or, equivalently, those of $(Z_t)_{t\geq0}$ (assumed to be a general It\^o-semimartingale in the following), have to satisfy a suitable \emph{consistency condition}, similar to the requirement $f_t(t)=r_t$ in the case of the OIS term structure.

\begin{figure}
\begin{minipage}[hbt]{7cm}
	\centering
	\includegraphics[width=7cm]{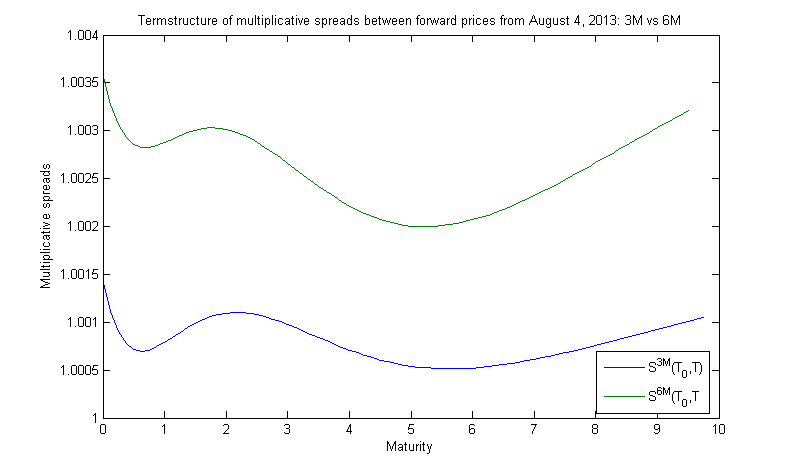}
\caption{\small{Multiplicative spreads $S^{\delta}(T_0,T)$  at Aug.~8, 2013 for $\delta=3/12, 6/12$.}}
	\label{fig:3}
\end{minipage}
\hfill
\begin{minipage}[hbt]{7cm}
	\centering
	\includegraphics[width=7cm]{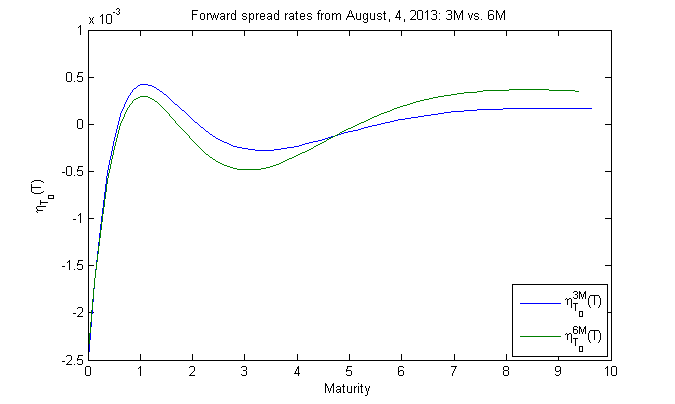}
	\caption{\small{Forward spread rates $\eta_{T_0}(T)$  at Aug.~8, 2013  for $\delta=3/12, 6/12$.}}
	\label{fig:4}
\end{minipage}
\end{figure}

\subsection{Main features of a general HJM-type framework}

Summing up, let us highlight the main features and novel contributions of the proposed approach:

\begin{itemize}
\item 
The term structure associated to Libor rates for different tenors $\{\delta_1,\ldots,\delta_m\}$ is modeled via the multiplicative spreads $S^{\delta}(t,T)$, which are directly related to observable OIS and FRA rates. Multiplicative spreads, rather than additive ones, have a natural economic interpretation in terms of forward exchange premiums (see Appendix~\ref{app:FX}) and lead to highly tractable models, especially when the semimartingale driving $f_t(T)$ and $\eta^{\delta}_t(T)$ is an affine process (see Section~\ref{affinespecification}).
\item 
The modeling of multiplicative spreads $S^{\delta}(t,T)$ is split into two components: an instantaneous \emph{forward} spread rate $\eta^{\delta}_t(T)$ and a \emph{spot} rate $Q^{\delta}_t=S^{\delta}(t,t)$, which is directly observable from market data.  In particular, this separation allows for great modeling flexibility.
\item 
By choosing a common $\mathbb{R}^n$-valued semimartingale $Y$ for the spot spreads $S^{\delta}(t,t)$ corresponding to different tenors $\delta\in\{\delta_1\ldots,\delta_m\}$, such that $S^{\delta_i}(t,t)=\exp(u_i^{\top}Y_t)$ for $u_i \in \mathbb{R}^n$, the inherent dependence between spreads associated to different tenors (as visible from Figure \ref{fig:1}) can be captured\footnote{Note that, for every tenor $\delta\in\{\delta_1,\ldots,\delta_m\}$, the process $u_i^{\top}Y$ plays the role of the process $Z$ appearing in~\eqref{eq:exchangerate}.}. Moreover, complex correlation structures between OIS and FRA rates can be built in through a common driving process for the forward rates $f_t(T)$ and the spread rates $\eta^{\delta}_t(T)$.
\item 
The desired feature that  $S^{\delta}(t,T) \geq 1$, for all $0\leq t\leq T$, can be easily achieved in full generality. Moreover, we can easily characterize order relations between spreads associated to different tenors, i.e, when $S^{\delta_j}(t,T) \geq S^{\delta_i}(t,T) $ for $\delta_j \geq \delta_i$ and for all $0\leq t\leq T$, as is the case in typical market situations.
\item
When considering finite-dimensional factor models, we are naturally led to the class of affine processes. In this case, the model for the OIS term structure becomes a classical short rate model driven by a multidimensional affine process, which also determines the dynamics of the multiplicative spreads. In this context, one can obtain tractable valuation formulas for derivatives written on Libor rates, as shown in the companion paper \cite{CFGaffine}.
\end{itemize}

\section{The general HJM modeling framework}\label{genFram}

In this section, following the ideas introduced in the previous section, we introduce a general HJM-type framework for multiple yield curve modeling. We start in Section~\ref{sec:1.1} by defining an abstract setting where we consider general families of semimartingales. In Sections~\ref{sec:HJM} and \ref{sec:model_spread}, we then apply this to the term structure modeling of OIS zero coupon bond prices and of the multiplicative spreads defined in~\eqref{eq:multspread}, respectively.

\subsection{Abstract HJM setting}\label{sec:1.1}

Let  $(\Omega, \mathcal{F}, (\mathcal{F}_t)_{t\geq 0}, \mathbb{Q})$ be a stochastic basis endowed with a right-continuous filtration $(\mathcal{F}_t)_{t\geq0}$ supporting the processes introduced in this section. We aim at modeling a family of one-dimensional positive semimartingales 
$\{(S(t,T))_{t\in[0,T]}, T \geq 0\bigr\}$
such that $(S(t,t))_{t \geq 0}$ is also a (positive) semimartingale. Supposing differentiability of $T \mapsto \log\left(S(t,T)\right)$ (a.s.), we can represent $S(t,T)$ by 
\begin{align}\label{eq:rep}
S(t,T)=e^{Z_t+ \int_t^T \eta_t(u)du},
\end{align}
where $Z_t:=\log(S(t,t))$ and $\eta_t(T):=\partial_T\log\left(S(t,T)\right)$.
Modeling the family $\{(S(t,T))_{t\in[0,T]}, T \geq 0\}$ is thus equivalent to modeling $(Z_t)_{t \geq0}$ and $\{(\eta_t(T))_{t\in[0,T]}, T \geq 0\}$. We call $Z$ the \emph{log-spot rate} and $\eta_t(T)$ the \emph{generalized forward rate}.

The representation \eqref{eq:rep} is motivated by the HJM philosophy as discussed in Section~\ref{subsec:model}. Indeed, suppose that the spot process $S(t,t)$ corresponds to a canonical underlying asset and that $S(t,T)=\mathbb{E}[S(T,T)|\mathcal{F}_t]$, for all $0\leq t\leq T$.
If $S(t,t)$ is modeled as an exponential time-inhomogeneous L\'evy process $\exp(Z_t)$, we obtain expression~\eqref{eq:formspread} for $S(t,T)$ (under the measure $\QQ$). Putting the L\'evy exponent ``in motion'' naturally leads to \eqref{eq:rep} with a general semimartingale $Z$.

We define \emph{HJM-type models} as follows (compare with~\cite[Definition 3.1]{KK13}).

\begin{definition}\label{def:HJMtype}
A quintuple $(Z, \eta_0, \alpha, \sigma, X)$ is called \emph{HJM-type model} for the family of positive semimartingales 
$\{(S(t,T))_{t\in[0,T]}, T \geq 0\}$  if
\begin{enumerate}
\item $(X,Z)$ is an $\RR^{d+1}$-valued It\^o-semimartingale, i.e., its characteristics are absolutely continuous with respect to the Lebesgue measure (see e.g.~\cite[Definition 2.1.1]{JP12});
\item $\eta_0$: $\mathbb{R}_+ \to \mathbb{R}$ is measurable and $\int_0^T |\eta_0(u)|du < \infty$ $\QQ$-a.s. for all $T \in \mathbb{R}_+$;
\item $(\omega,t, T)\mapsto \alpha_t(T)(\omega)$ and $(\omega,t, T)\mapsto \sigma_t(T)(\omega)$
are $\mathcal{P} \otimes \mathcal{B}(\mathbb{R}_+)$-measurable $\mathbb{R}$- and $\mathbb{R}^d$-valued processes, respectively, where $\mathcal{P}$ denotes the predictable $\sigma$-algebra, and satisfy 
\begin{itemize}
\item  $\int_0^t \int_0^T |\alpha_s(u)| du ds < \infty$ $\QQ$-a.s. for all $t,T \in \mathbb{R}_+$,
\item  $\int_0^T \| \sigma_t(u) \|^2 du < \infty$ $\QQ$-a.s. for any $t,T \in \mathbb{R}_+$,
\item  $\bigl((\int_0^T |\sigma_{t,j}(u)|^2 du)^{\frac{1}{2}}\bigr)_{t\geq0} \in L(X^j)$ for all $ T\in \mathbb{R}_+$ and $j \in \{1,\ldots, d\}$, where $L(X^j)$ denotes the set of processes which are integrable with respect to $X^j$;
\end{itemize}
\item for every $T \in \mathbb{R}_+$, the generalized forward rate $(\eta_t(T))_{t\in [0,T]}$ is given by, for all $t\leq T$,
\begin{align}\label{eq:eta}
\eta_t(T)=\eta_0(T)+\int_0^t \alpha_s(T)ds+\int_0^t \sigma_s(T) dX_s;
\end{align}
\item $\{ (S(t,T))_{t \in [0,T]}, T \geq 0 \}$ satisfies, for all $t\leq T$ and $T\in\mathbb{R}_+$,
\begin{align}\label{eq:HJMmodel}
S(t,T)=e^{Z_t+\int_t^T \eta_t(u)du}
\end{align}
and, in particular, $S(t,t)=e^{Z_t}$ for all $t\geq0$.
\end{enumerate}
\end{definition}

Typically, $(S(t,T))_{t \in [0,T]}$ corresponds to a discounted asset price process with maturity $T$ and is thus -- under the assumption of ``no free lunch with vanishing risk''-- a (local) martingale under some equivalent measure. Supposing that $\mathbb{Q}$ already represents a (local) martingale measure, the (local) martingale property of $(S(t,T))_{t \in [0,T]}$ can be characterized by Theorem~\ref{th:mart} below (see also Remark~\ref{rem:loc_mart}). As a preliminary, we recall the notion of \emph{local exponent} (compare~\cite[Definition A.6]{KK13}) (or, equivalently, \emph{derivative of the Laplace cumulant process}, see~\cite[Definitions 2.22 and 2.23]{KS:02})\footnote{The definition of local exponent in~\cite[Definition A.6]{KK13} is slightly different since it is defined in terms of the exponential compensator of $(\int_0^t \im \beta_s dX_s)_{t\geq0}$ which is due to the fact that complex valued processes are considered in that paper. Note also that, since It\^o-semimartingales are quasi-left-continuous, the derivatives of the modified Laplace cumulant process and of the ordinary Laplace cumulant process coincide (see e.g.~\cite[Remarks on page 408]{KS:02}).}.

\begin{definition}\label{def:expcomp}
Let $X$ be an $\mathbb{R}^d$-valued It\^o-semimartingale and $\beta=(\beta_t)_{t\geq0}$ an $\mathbb{R}^d$-valued predictable X-integrable process (i.e., $\beta\in L(X)$).
A predictable real-valued process  $\bigl(\Psi_t^X(\beta_t)\bigr)_{t\geq0}$ is called \emph{local exponent} (or derivative of the Laplace cumulant process) of $X$ at $\beta$ if 
$
\bigl(\exp \bigl(\int_0^t \beta_s dX_s - \int_0^t \Psi_s^X(\beta_s) ds \bigr)\bigr)_{t\geq0}
$
is a local martingale. We denote by $\mathcal{U}^X$ the set of processes $\beta$ such that $\Psi^X(\beta)$ exists.
\end{definition}

In other words, $\bigl(\int_0^t \Psi_s^X(\beta_s) ds\bigr)_{t\geq0}$ is the \emph{exponential compensator} (see~\cite[Definition 2.14]{KS:02}) of $\bigl(\int_0^t \beta_s dX_s\bigr)_{t\geq0}$.
The following proposition asserts that the local exponent (when it exists) is of L\'evy-Khintchine form, where the L\'evy triplet is replaced by the differential characteristics of the It\^o-semimartingale.

\begin{proposition}\label{prop:expcomp}
Let $X$ be an $\mathbb{R}^d$-valued It\^o-semimartingale with differential characteristics $(b, c,K)$ with respect to some truncation function $\chi$. Let $\beta\in L(X)$. Then the following are equivalent:
\begin{enumerate}
\item $\beta \in \mathcal{U}^X$;
\item $\bigl(\int_0^{t} \beta_s dX_s\bigr)_{t\geq0}$ is an exponentially special semimartingale, that is $\bigl(e^{\int_0^{t} \beta_s dX_s}\bigr)_{t\geq0}$ is a special semimartingale;
\item $\int_0^t \int_{\beta_s^{\top} \xi >1 } e^{\beta_s^{\top}\xi}K_s(d\xi)ds <\infty$ $\QQ$-a.s for all $t\geq0$.
\end{enumerate}
In this case, outside some $d\mathbb{Q} \otimes dt$ nullset, it holds that
\begin{align}\label{eq:LevyKinthchine}
\Psi_t^X(\beta_t)=\beta_t^{\top} b_t+\frac{1}{2}\beta^{\top}_t c_t \beta_t+\int \bigl(e^{\beta^{\top}_t \xi}-1-\beta_t^{\top} \chi(\xi)\bigr) K_t(d\xi).
\end{align}
\end{proposition}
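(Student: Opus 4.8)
The approach is to identify $\Psi^X(\beta)$ with the (derivative of the) Laplace cumulant process of the real-valued It\^o-semimartingale $N:=\int_0^{\cdot}\beta_s\,dX_s$ — which is well defined since $\beta\in L(X)$ — and to apply the general theory recalled in \cite{KS:02} and \cite[Appendix A]{KK13}. The first step is to read off the differential characteristics of $N$ from those of $X$: since $\Delta N_t=\beta_t^{\top}\Delta X_t$, the integer-valued random measure $\mu^N$ of $N$ is the image of $\mu^X$ under the predictable map $(s,\xi)\mapsto(s,\beta_s^{\top}\xi)$, so its compensator $\nu^N(ds,dx)$ satisfies $\int_0^t\!\int g(x)\,\nu^N(ds,dx)=\int_0^t\!\int g(\beta_s^{\top}\xi)\,K_s(d\xi)\,ds$ for nonnegative Borel $g$; likewise the continuous martingale part obeys $d\langle N^c\rangle_s=\beta_s^{\top}c_s\beta_s\,ds$, and the drift of $N$ relative to a truncation function on $\RR$ is expressible through $b$, $c$, $K$, $\beta$ and that truncation.

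By Definition~\ref{def:expcomp}, $\beta\in\cU^X$ means precisely that the exponential compensator of $N$ exists, i.e.\ that there is a predictable process $\Psi^X(\beta)$ making $\exp\bigl(N_t-\int_0^t\Psi_s^X(\beta_s)\,ds\bigr)$ a local martingale. Since $X$, hence $N$, is quasi-left-continuous, every predictable finite-variation part arising here is continuous; applying the multiplicative decomposition to the positive special semimartingale $e^N$ then yields (i)$\Leftrightarrow$(ii), namely that such an exponential compensator exists iff $e^N$ is special, in which case $\int_0^{\cdot}\Psi_s^X(\beta_s)\,ds$ is the logarithm of its predictable multiplicative factor — the absolute continuity in $t$ being inherited from the fact that $X$ has differential characteristics. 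For (ii)$\Leftrightarrow$(iii) I would invoke the standard criterion for a positive semimartingale of the form $e^N$ to be special: the continuous and small-jump contributions are automatically locally integrable, so $e^N$ is special iff $\sum_{0<s\le\cdot}(e^{\Delta N_s}-1)\mathbf{1}_{\{\Delta N_s>1\}}$ has locally integrable variation, iff its compensator $\int_0^t\!\int(e^x-1)\mathbf{1}_{\{x>1\}}\,\nu^N(ds,dx)$ is a.s.\ finite for all $t$; substituting the formula for $\nu^N$ above and absorbing the $-1$ into the $e^x$ term gives exactly condition (iii).

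To obtain \eqref{eq:LevyKinthchine}, assume (i)--(iii) and apply It\^o's formula to $e^{N_t}$,
\begin{equation*}
e^{N_t}=1+\int_0^t e^{N_{s-}}\,dN_s+\tfrac12\int_0^t e^{N_{s-}}\,d\langle N^c\rangle_s+\sum_{0<s\le t}e^{N_{s-}}\bigl(e^{\Delta N_s}-1-\Delta N_s\bigr).
\end{equation*}
Inserting the canonical decomposition of $N$ and collecting the predictable finite-variation part, then dividing by $e^{N_{s-}}$, shows that the logarithm of the predictable multiplicative part of $e^N$ has density
\[
\beta_s^{\top}b_s+\tfrac12\beta_s^{\top}c_s\beta_s+\int\bigl(e^{\beta_s^{\top}\xi}-1-\beta_s^{\top}\chi(\xi)\bigr)K_s(d\xi)
\]
with respect to $ds$; here condition (iii) makes the large-jump part of the integral well defined, while near $\xi=0$ the integrand is $O(|\beta_s^{\top}\xi|^2)$ and hence $K_s$-integrable. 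By uniqueness of the multiplicative decomposition this density equals $\Psi_s^X(\beta_s)$, $d\QQ\otimes ds$-a.e.

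The delicate point is the bookkeeping with truncation functions: $X$ carries a truncation $\chi$ on $\RR^d$ whereas $N$ naturally comes with a truncation on $\RR$, and a ``large'' jump of $X$ may be a ``small'' jump of $N$ (or conversely). Tracking which compensator terms are a priori finite, checking that condition (iii) is exactly what makes the remaining ones finite, and verifying that the final expression does not depend on the choice of $\chi$ (the term $\beta_s^{\top}\chi(\xi)$ recombining with $e^{\beta_s^{\top}\xi}-1$ so as to stay $K_s$-integrable near $\xi=0$ even when $\int_{|\xi|\le1}|\xi|\,K_s(d\xi)=\infty$), is where care is needed; the remainder is the standard Laplace-cumulant machinery.
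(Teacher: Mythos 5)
Your proposal is correct and follows essentially the same route as the paper: the paper simply cites Kallsen--Shiryaev (Lemma 2.13 for the equivalence of (i)--(iii) and Theorems 2.18--2.19 for the L\'evy--Khintchine form), and what you do is reconstruct exactly that machinery -- reduction to the one-dimensional process $N=\int_0^{\cdot}\beta_s\,dX_s$, its differential characteristics, the specialness criterion via the compensated large jumps, and It\^o's formula plus the multiplicative decomposition for the explicit density. The steps you flag as ``standard'' (local integrability of the small-jump and continuous parts, the truncation bookkeeping between $\chi$ on $\RR^d$ and a truncation on $\RR$) are indeed the content of the cited results, so there is no gap beyond what the paper itself delegates to the reference.
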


\begin{proof}
For the proof of the equivalence of (i)-(ii)-(iii) see~\cite[Lemma 2.13]{KS:02}. Representation~\eqref{eq:LevyKinthchine} follows from~\cite[Theorem 2.18, statements 1 and 6, and Theorem 2.19]{KS:02}.
\end{proof}

Using the notion of the local exponent and defining an $\mathbb{R}^d$-valued process $(\Sigma_t(T))_{t \in [0,T]}$ via 
\[
\Sigma_t(T):=\int_t^T \sigma_t(u)du,
\]
for all $t\leq T$ and $T\geq0$, we are now in a position to state the following theorem, which characterizes the martingale property of the family of semimartingales $\{(S(t,T))_{t\in[0,T]}, T \geq 0\}$.
We denote by $\Psi^{Z,X}$ the local exponent of the $\mathbb{R}^{1+d}$-valued semimartingale $(Z,X)$.

\begin{theorem}\label{th:mart}
For an HJM-type model as of Definition~\ref{def:HJMtype} the following conditions are equivalent:
\begin{enumerate}
\item the process $(S(t,T))_{t \in [0,T]}$ is a martingale, for every $T \geq 0$.
\item for every $T \geq 0$, it holds that
\[
\mathbb{E}\left[e^{Z_T}| \mathcal{F}_t\right]=e^{Z_t+ \int_t^T \eta_t(u)du},
\qquad\text{ for all }t\in[0,T],
\]
which is called \emph{conditional expectation hypothesis}.
\item  for every $T \geq 0$, $\bigl(1, \Sigma^{\top}_{\cdot}(T)\bigr)^{\top} \in \mathcal{U}^{Z,X}$ and the following conditions are satisfied:
\begin{itemize}
\item The process
\begin{align}\label{eq:martprop}
\left(\exp \left(Z_t+\int_0^t \Sigma_s(T) dX_s - \int_0^t
\Psi_s^{Z,X}\left(\bigl(1, \Sigma^{\top} _s(T)\bigr)^{\top}\right)
ds \right)\right)_{t\in[0,T]}
\end{align}
is a martingale, for every $T \geq 0$.
\item the \emph{consistency condition} 
\begin{align}\label{eq:consistency}
\Psi_t^{Z}(1)=\eta_{t-}(t), 
\qquad\text{ for all } t>0,
\end{align}
holds.
\item the \emph{HJM drift condition}
\begin{align}\label{eq:drift}
\int_t^T \alpha_t(u) du=\Psi_t^{Z}(1)-\Psi_t^{Z,X}\left(\bigl(1, \Sigma^{\top}_t(T)\bigr)^{\top}\right)
\end{align}
holds for every $t \in [0,T]$ and $T \geq 0$.
\end{itemize}
\end{enumerate}
Moreover, if any (and, hence, all) of conditions (i)-(ii)-(iii) is satisfied, it holds that
\begin{equation}\label{eq:eq}
\begin{split}
S(t,T)&=\mathbb{E}\left[S(T,T)|\mathcal{F}_t\right]=\mathbb{E}\left[e^{Z_T}|\mathcal{F}_t\right]\\
&=\exp\left(\int_0^T \eta_0(u)du + Z_t+ \int_0^t \Sigma_s(T) dX_s-\int_0^t \Psi_s^{Z,X}\left(\bigl(1, \Sigma^{\top}_s(T)\bigr)^{\top}\right)ds\right).
\end{split}
\end{equation}
for all $t\leq T$ and $T\geq0$.
\end{theorem}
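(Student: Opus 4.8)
The plan is to prove the chain of equivalences (i) $\Leftrightarrow$ (ii) $\Leftrightarrow$ (iii) and then derive the closed-form expression \eqref{eq:eq}. The equivalence (i) $\Leftrightarrow$ (ii) is almost immediate: by the representation \eqref{eq:HJMmodel}, $S(t,T) = e^{Z_t + \int_t^T \eta_t(u)du}$, so $S(T,T) = e^{Z_T}$, and the martingale property $S(t,T) = \mathbb{E}[S(T,T)\mid\mathcal{F}_t]$ is exactly the conditional expectation hypothesis. One only needs to note integrability, i.e. that $S(t,T)$ is indeed integrable for each $t$; under (i) this is part of the definition of martingale, and under (ii) it follows from the stated identity. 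So the substance of the theorem is the equivalence with (iii) and the explicit formula.

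For (ii) $\Rightarrow$ (iii), the first step is to rewrite the forward-rate integral. Using \eqref{eq:eta}, a stochastic Fubini argument gives
\[
\int_t^T \eta_t(u)\,du = \int_t^T \eta_0(u)\,du + \int_0^t\!\!\int_t^T \alpha_s(u)\,du\,ds + \int_0^t \Sigma_s(T)\,dX_s,
\]
which, upon also adding and subtracting $\int_0^t(\cdot)$ over $[0,t]$ appropriately and combining with $Z_t$, lets one recognize the right-hand side of \eqref{eq:HJMmodel} at time $t$ in terms of quantities evaluated along the diagonal. The key idea, following the HJM philosophy, is then to fix $T$, regard $M^T_t := \exp(Z_t + \int_0^t \Sigma_s(T)\,dX_s - \int_0^t \Psi^{Z,X}_s((1,\Sigma_s^\top(T))^\top)\,ds)$, and show that the conditional expectation hypothesis forces $\big(1,\Sigma^\top_\cdot(T)\big)^\top \in \mathcal{U}^{Z,X}$ (so $M^T$ is well-defined and a local martingale by Definition~\ref{def:expcomp}) and in fact a true martingale. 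The bridge between $M^T$ and $S(\cdot,T)$ is the identity $S(t,T) = M^T_t \cdot \exp(R^T_t)$ for a finite-variation ``error'' process $R^T_t = \int_0^T\eta_0(u)du - \int_0^t\int_t^T\alpha_s(u)du\,ds - \int_0^t\eta_s(s)ds + \int_0^t\Psi^{Z,X}_s(\cdots)ds + (\text{diagonal terms})$ — the precise bookkeeping is where care is needed. Differentiating $\log S(t,T)$ in $T$ and setting $T = t$ extracts the consistency condition $\Psi^Z_t(1) = \eta_{t-}(t)$ (the local exponent of $(Z,X)$ at $(1,\Sigma_t(t)^\top)^\top = (1,0)^\top$ reduces to $\Psi^Z_t(1)$ since $\Sigma_t(t) = 0$), and then matching the finite-variation parts of $\log S(t,T)$ across the whole interval forces the drift condition \eqref{eq:drift}. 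Conversely, for (iii) $\Rightarrow$ (ii), one runs the same decomposition backwards: the drift condition makes $R^T \equiv \int_0^T\eta_0(u)du$ constant in $t$, so $S(t,T) = M^T_t \cdot \mathrm{const}$ is a martingale, hence satisfies the conditional expectation hypothesis, and the displayed formula \eqref{eq:eq} drops out by evaluating the martingale $M^T$ and its constant multiplier at $t$ and using $S(T,T) = e^{Z_T}$.

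The formula \eqref{eq:eq} is then read off directly from $S(t,T) = e^{\int_0^T\eta_0(u)du}\,M^T_t$ once the drift condition is in force, together with (i) $\Leftrightarrow$ (ii).

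The main obstacle I expect is the stochastic Fubini step and the careful accounting of the finite-variation terms that links $S(\cdot,T)$ to the exponential local martingale $M^T$: one must justify interchanging $\int_0^t$ and $\int_t^T$ for both the $ds$-drift and the $dX_s$-stochastic integral under the integrability hypotheses in Definition~\ref{def:HJMtype}(iii), and one must correctly handle the ``moving lower limit'' $t$ appearing both in $Z_t$ and in $\int_t^T$, since differentiating in $T$ and specializing to $T=t$ is what produces the left limit $\eta_{t-}(t)$ and the consistency condition. A secondary subtlety is the passage from local martingale to true martingale in \eqref{eq:martprop}: the theorem as stated builds this into condition (iii), so for (ii) $\Rightarrow$ (iii) one gets it for free from the conditional expectation hypothesis (which yields a genuine martingale), while for (iii) $\Rightarrow$ (ii) it is assumed; hence no uniform-integrability work is actually required, only bookkeeping. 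Everything else — the form of $\Psi^{Z,X}$, existence of the local exponent — is supplied by Proposition~\ref{prop:expcomp} and Definition~\ref{def:expcomp}.
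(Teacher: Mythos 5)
Your proposal follows essentially the same route as the paper's proof: (i)$\Leftrightarrow$(ii) is immediate from $S(T,T)=e^{Z_T}$, and (ii)$\Leftrightarrow$(iii) is obtained via the classical/stochastic Fubini decomposition of $\log S(t,T)$ into $Z_t+\int_0^t\Sigma_s(T)\,dX_s$ plus absolutely continuous terms, the vanishing of the local exponent of $R=\log S(\cdot,T)$ at $1$, specialization to $T=t$ (where $\Sigma_t(t)=0$) for the consistency condition, and the identification $S(\cdot,T)=e^{\int_0^T\eta_0(u)\,du}\,M^T$ for formula \eqref{eq:eq} and the martingale property of \eqref{eq:martprop}. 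Only the bookkeeping you yourself flagged needs fixing: the correct identity is $\int_t^T\eta_t(u)\,du=\int_0^T\eta_0(u)\,du+\int_0^t\int_s^T\alpha_s(u)\,du\,ds+\int_0^t\Sigma_s(T)\,dX_s-\int_0^t\eta_u(u)\,du$ (your displayed version omits the $-\int_0^t\Sigma_s(t)\,dX_s$ correction, and the $\alpha$-term in your error process $R^T$ carries the wrong sign and inner limits), after which $0=\Psi^{Z,X}_t\bigl((1,\Sigma_t^{\top}(T))^{\top}\bigr)+\int_t^T\alpha_t(u)\,du-\eta_{t-}(t)$ and hence the drift condition follow exactly as you describe.
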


\begin{remark}	\label{rem:loc_mart}
The above theorem admits a local martingale version, in the sense that $(S(t,T))_{t\in[0,T]}$ is a local martingale if and only if $\bigl(1,\Sigma_{\cdot}^{\top}(T)\bigr)^{\top}\in\mathcal{U}^{Z,X}$ and the consistency and HJM drift conditions \eqref{eq:consistency}-\eqref{eq:drift} hold. 
\end{remark}

\begin{remark}\label{rem:martcond}
General sufficient conditions for~\eqref{eq:martprop} being a true martingale 
have been established by Kallsen and Shiryaev in~\cite[Corollary 3.10]{KS:02} (in that context, see also the recent paper~\cite{LR14}). 
For instance, condition $I(0,1)$ in their formulation (see~\cite[Definition 3.1]{KS:02} and compare also with~\cite[Proposition 3.3]{GG14}) reads in our case as
\[
\int_0^t \int_{|\left(1,\Sigma^{\top}_s(T)\right)\xi| >1} e^{\left(1,\Sigma^{\top}_s(T)\right)^{\top}\xi}\bigl|\bigl(1,\Sigma^{\top}_s(T)\bigr)^{\top}\xi\bigr| 
 \,K_s^{Z,X}(d\xi)ds<\infty \text{ $\mathbb{Q}$-a.s. for all }t\geq0,
\]
together with
\begin{align*}
&\sup_{t\leq T} \mathbb{E}\Bigg[\exp\left(\frac{1}{2}\int_0^t\bigl(1,\Sigma^{\top}_s(T)\bigr)^{\top} c_s^{Z,X}\bigl(1,\Sigma^{\top}_s(T)\bigr)ds\right)\\
&\qquad\quad\times \exp\left(\int_0^t\int \left(e^{\bigl(1,\Sigma^{\top}_s(T)\bigr)^{\top}\xi}\left(\bigl(1,\Sigma^{\top}_s(T)\bigr)^{\top}\xi-1\right)+1\right)K_s^{Z,X}(d\xi)ds\right)\Bigg]< \infty,
\end{align*}
where $c^{Z,X}$ and $K^{Z,X}$ denote the second and third characteristic of the semimartingale $(Z,X)$.
In particular, observe that the first of the above two conditions implies that $\bigl(1,\Sigma_{\cdot}^{\top}(T)\bigr)^{\top}\in\mathcal{U}^{Z,X}$.
\end{remark}

\begin{proof}[Proof of Theorem~\ref{th:mart}]
In the sequel, let $T >0$ be fixed.\\
(i) $\Rightarrow$ (ii): From~\eqref{eq:HJMmodel} and the martingale property of $(S(t,T))_{t\in [0,T]}$, it follows that
\[
e^{Z_t+\int_t^T \eta_t(u) du}=S(t,T)=\mathbb{E}\left[S(T,T)| \mathcal{F}_t\right]=\mathbb{E}\left[e^{Z_T}| \mathcal{F}_t\right],
\qquad\text{ for all }t\in[0,T],
\]
whence (ii).\\
(ii)  $\Rightarrow$ (iii): Let us define $R_t:=Z_t+\int_t^T \eta_t(u)du$, for all $t\leq T$. Then the martingale property of $S(\cdot,T)=\exp(R)$ and Definition~\ref{def:expcomp} yields $1 \in \mathcal{U}^R$ and
$\Psi_t^R(1)=0$. By applying the classical and the stochastic Fubini theorem~\cite[Theorem IV.65]{protter}, which is justified due to the integrability conditions on $\alpha$ and $\sigma$ in Definition~\ref{def:HJMtype}, we can write
\begin{align*}
\int_t^T \eta_t(u)du&=\int_0^T \eta_0(u)du+\int_0^t\int_s^T  \alpha_s(u) du ds +\int_0^t \Sigma_s(T) dX_s\\
&\quad - \int_0^t \underbrace{\left(\eta_0(u)+ \int_0^u \alpha_s(u) ds+\int_0^u  \sigma_s(u) dX_s\right)}_{= \eta_{u}(u)} du.
\end{align*}
We thus obtain (e.g., by applying~\cite[Lemma A.20]{KK13})
\begin{align}\label{eq:driftcons}
0=\psi_t^R(1)=\psi_t^{Z,X}\Bigl(\bigl(1, \Sigma^{\top}_t(T)\bigr)^{\top}\Bigr)+\int_t^T \alpha_t(u) du-\eta_{t-}(t).
\end{align}
Setting $T=t$ and noting that $\Sigma_t(t)=0$ yields~\eqref{eq:consistency}, namely $
\eta_{t-}(t)=\psi_t^{Z}(1)$, for all $t>0$, which together with~\eqref{eq:driftcons} then implies~\eqref{eq:drift}. By the drift and consistency conditions, 
$S(\cdot,T)$ is then of the form~\eqref{eq:eq} and since $(S(t,T))_{t \in [0,T]}$ is a martingale, the martingale property holds for~\eqref{eq:martprop} as well.\\
(iii) $\Rightarrow$ (i): The martingale property of~\eqref{eq:martprop} implies the martingale property of $(S(t,T))_{t \in [0,T]}$, since -- due to the drift and consistency condition -- it is again of the form~\eqref{eq:eq}. 
This clearly proves also the last statement of the theorem.
\end{proof}

\subsection{Modeling the term structure of OIS zero coupon bond prices}\label{sec:HJM}

In this section, we show that the classical HJM approach for risk-free bond prices, which we use for modeling OIS bonds, can be formulated in terms of the above general framework (compare also with~\cite{T:pres} by J.~Teichmann). We start by defining (OIS) bond price models via a model for the instantaneous (OIS) forward rates and the assumption that an (OIS) bank account $B$ exists, given by 
\[
B_t=e^{\int_0^t r_s ds},
\]
where $r$ denotes the (OIS) short rate process.

\begin{definition} 	\label{def:bond_price_model}
A \emph{bond price model} is a quintuple $(B, f_0, \widetilde{\alpha}, \widetilde{\sigma}, X)$,
where
\begin{enumerate}
\item the bank account $B$ satisfies $B_t=e^{\int_0^t r_s ds}$, for all $t\geq0$, with short rate $(r_t)_{t\geq 0}$;
\item $X$ is an $\RR^d$-valued It\^o-semimartingale;
\item $f_0$: $\mathbb{R}_+ \to \mathbb{R}$ is measurable and $\int_0^T |f_0(t)|dt < \infty$ $\QQ$-a.s. for all $T\geq0$;
\item $(\omega,t, T)\mapsto \widetilde{\alpha}_t(T)(\omega)$ and $(\omega,t, T)\mapsto \widetilde{\sigma}_t(T)(\omega)$
are $\mathcal{P} \otimes \mathcal{B}(\mathbb{R}_+)$ measurable $\mathbb{R}$- and $\mathbb{R}^d$-valued processes and satisfy 
the integrability conditions of Definition~\ref{def:HJMtype}-(iii);
\item for every $T \in \mathbb{R}_+$, the forward rate $(f_t(T))_{t\in [0,T]}$ is given by
\begin{equation}	\label{eq:fwd_classical}
f_t(T)=f_0(T)+\int_0^t \widetilde{\alpha}_s(T)ds+\int_0^t \widetilde{\sigma}_s(T) dX_s;
\end{equation}
\item the bond prices $\{ (B(t,T))_{t \in [0,T]}, T \geq 0 \}$ satisfy $B(t,T)=e^{-\int_t^T f_t(s)ds}$, for all $t\leq T$ and $T\geq0$. In particular, $B(t,t)=1$ for all $t\geq0$.  
\end{enumerate}
\end{definition}

The following definition is motivated by the fact that, as usual, we take
the (OIS) bank account as num\'eraire and assume that $\mathbb{Q}$ corresponds to a risk neutral measure (see however Remark~\ref{rem:terminal_bond}). In the following, if the measure is not explicitly indicated (e.g., in expectations), then it is meant to be $\mathbb{Q}$.

\begin{definition}	\label{def:bond_RN}
A bond price model is said to be \emph{risk neutral} if the discounted bond prices 
\[
\bigl\{ \bigl(B(t,T)/B_t\bigr)_{t \in [0,T]}, \, T \geq 0 \bigr\}
\] 
are martingales.
\end{definition}

The following proposition shows that a bond price model can be identified with an HJM-type model. For its formulation, let us introduce the process $\widetilde{\Sigma}_{\cdot}(T)$ defined via $\widetilde{\Sigma}_t(T):=\int_t^T \widetilde{\sigma}_t(u)du$, for all $t\leq T$.

\begin{proposition}\label{prop:bondHJM}
A bond price model can be identified with an HJM-type model $(Z,\eta_0, \alpha, \sigma, X)$ for the family of discounted bond prices $\bigl\{ \bigl(B(t,T)/B_t\bigr)_{t \in [0,T]}, \, T \geq 0 \bigr\}$
by setting $\eta_0=-f_0$, $\alpha=-\widetilde{\alpha}$, $\sigma=-\widetilde{\sigma}$ (so that $\eta_t(t)=-f_t(T)$) and $Z_t =-\log{B_t}=-\int_0^t r_s ds$.
Moreover, the following are equivalent:
\begin{enumerate}
  \item the bond price model is risk neutral, in the sense of Definition~\ref{def:bond_RN};
\item for every $T\geq0$, the conditional expectation hypothesis holds:
\[
\mathbb{E}\left[e^{Z_T}| \mathcal{F}_t\right]=e^{Z_t+\int_t^T \eta_t(u)du},
\]
or, equivalently, $\mathbb{E}\left[B_t/B_T\, |\, \mathcal{F}_t\right]=e^{-\int_t^T f_t(u) du}$, for all $t\in[0,T]$;
 \item for every $T\geq0$, $-\widetilde{\Sigma}(T) \in \mathcal{U}^X$ and the following conditions are satisfied:
 \begin{itemize}
 \item the process
 \[
\left(\exp \left(-\int_0^t\widetilde{\Sigma}_s(T) dX_s - \int_0^t
\Psi_s^{X}\bigl( -\widetilde{\Sigma}_s(T)\bigr)
ds \right)\right)_{t\in[0,T]},
\]
is a martingale, for every $T \geq 0$.
 \item the consistency condition holds, i.e.,
\[ 
\Psi_t^{Z}(1)=-r_{t-}=-f_{t-}(t), 
\qquad \text{ for all }t>0.
\]
 \item the HJM drift condition  
 \[
 \int_t^T \widetilde{\alpha}_t(u) du= \Psi_t^{X}\bigl( -\widetilde{\Sigma}_t(T)\bigr)
 \]
 holds for every $t \in [0,T]$ and $T \geq 0$.
\end{itemize}
 \end{enumerate}
\end{proposition}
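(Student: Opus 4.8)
The plan is to derive Proposition~\ref{prop:bondHJM} directly from Theorem~\ref{th:mart} by checking that, under the stated substitutions, a bond price model is literally an HJM-type model for the discounted bonds $B(t,T)/B_t$, and then translating each condition of Theorem~\ref{th:mart} into the stated form. First I would verify the representation \eqref{eq:HJMmodel}: with $Z_t=-\int_0^t r_s\,ds$, $\eta_0=-f_0$, $\alpha=-\widetilde\alpha$, $\sigma=-\widetilde\sigma$, equation \eqref{eq:eta} for $\eta_t(T)$ becomes exactly $-f_t(T)$ by \eqref{eq:fwd_classical}, and then
\[
e^{Z_t+\int_t^T\eta_t(u)\,du}=e^{-\int_0^t r_s\,ds-\int_t^T f_t(u)\,du}=\frac{B(t,T)}{B_t},
\]
so $(B,f_0,\widetilde\alpha,\widetilde\sigma,X)$ indeed induces an HJM-type model $(Z,\eta_0,\alpha,\sigma,X)$ for $\{(B(t,T)/B_t)_{t\in[0,T]}\}$. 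One must also check the measurability and integrability requirements of Definition~\ref{def:HJMtype}, which hold because Definition~\ref{def:bond_price_model} imposes the very same conditions on $f_0$, $\widetilde\alpha$, $\widetilde\sigma$, $X$; the only mild point is that $Z=-\int_0^\cdot r_s\,ds$ is of finite variation, hence trivially an It\^o-semimartingale, so $(X,Z)$ is an $\RR^{d+1}$-valued It\^o-semimartingale.

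Next I would apply Theorem~\ref{th:mart} to this HJM-type model. Condition (i) of Theorem~\ref{th:mart}, the martingale property of $(S(t,T))_{t\in[0,T]}=(B(t,T)/B_t)_{t\in[0,T]}$, is precisely Definition~\ref{def:bond_RN}, giving the equivalence with (i) of the proposition. Condition (ii) of Theorem~\ref{th:mart} reads $\mathbb{E}[e^{Z_T}\mid\mathcal{F}_t]=e^{Z_t+\int_t^T\eta_t(u)\,du}$; substituting $Z_T=-\int_0^T r_s\,ds$ and using that $B_t$ is $\mathcal{F}_t$-measurable, this is $\mathbb{E}[B_t/B_T\mid\mathcal{F}_t]=e^{-\int_t^T f_t(u)\,du}$, which is exactly (ii) of the proposition. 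For the equivalence with (iii), I would expand $\Sigma_t(T)=\int_t^T\sigma_t(u)\,du=-\int_t^T\widetilde\sigma_t(u)\,du=-\widetilde\Sigma_t(T)$ and compute the local exponent of the pair $(Z,X)$. Since $Z$ is absolutely continuous with $dZ_t=-r_t\,dt$, it contributes no martingale part and no jumps; by the L\'evy--Khintchine representation \eqref{eq:LevyKinthchine} of Proposition~\ref{prop:expcomp} applied to $(Z,X)$ one gets, for any predictable $X$-integrable $\gamma$,
\[
\Psi_t^{Z,X}\bigl((1,\gamma^{\top})^{\top}\bigr)=-r_t+\Psi_t^X(\gamma),
\]
because the "$Z$-direction'' drift is $-r_t$ while the diffusion and jump parts involve only the $X$-components. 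In particular $\Psi_t^Z(1)=-r_t=-f_{t-}(t)$ (using the consistency-type identity $\eta_{t-}(t)=-f_{t-}(t)$, which follows from $\eta_t(T)=-f_t(T)$), and the condition $(1,\Sigma_\cdot^{\top}(T))^{\top}\in\mathcal{U}^{Z,X}$ is equivalent to $-\widetilde\Sigma(T)\in\mathcal{U}^X$. Feeding these identities into the three bullet points of Theorem~\ref{th:mart}(iii): the martingale in \eqref{eq:martprop} becomes, after cancelling the $\exp(Z_t)$ against the $-\int_0^t(-r_s)\,ds$ term coming from $\Psi^{Z,X}$,
\[
\Bigl(\exp\Bigl(-\int_0^t\widetilde\Sigma_s(T)\,dX_s-\int_0^t\Psi_s^X\bigl(-\widetilde\Sigma_s(T)\bigr)\,ds\Bigr)\Bigr)_{t\in[0,T]};
\]
the consistency condition \eqref{eq:consistency} becomes $\Psi_t^Z(1)=-r_{t-}=-f_{t-}(t)$; and the HJM drift condition \eqref{eq:drift}, namely $\int_t^T\alpha_t(u)\,du=\Psi_t^Z(1)-\Psi_t^{Z,X}((1,\Sigma_t^{\top}(T))^{\top})$, becomes $-\int_t^T\widetilde\alpha_t(u)\,du=-r_t-(-r_t+\Psi_t^X(-\widetilde\Sigma_t(T)))=-\Psi_t^X(-\widetilde\Sigma_t(T))$, i.e.\ $\int_t^T\widetilde\alpha_t(u)\,du=\Psi_t^X(-\widetilde\Sigma_t(T))$. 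This matches (iii) of the proposition exactly.

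The main obstacle, and the step deserving the most care, is the computation $\Psi_t^{Z,X}((1,\gamma^{\top})^{\top})=-r_t+\Psi_t^X(\gamma)$: one must justify that appending the absolutely continuous, purely-drift component $Z$ to $X$ simply adds its drift $-r_t$ to the local exponent and does not alter the covariance or jump-measure terms. This is where I would invoke Proposition~\ref{prop:expcomp} for the semimartingale $(Z,X)$, noting that its differential characteristics $(b^{Z,X},c^{Z,X},K^{Z,X})$ satisfy $b^{Z,X}=(-r_t,b^X)$, that $c^{Z,X}$ has a vanishing first row and column (as $Z$ has no continuous martingale part), and that $K^{Z,X}$ is supported on $\{0\}\times\RR^d$ and agrees with $K^X$ in the $X$-coordinates (as $Z$ has no jumps), so that in \eqref{eq:LevyKinthchine} the quadratic form reduces to $\tfrac12\gamma^{\top}c_t^X\gamma$ and the integral to the $X$-integral, while the linear term splits off the $-r_t$; this also shows $(1,\gamma^{\top})^{\top}\in\mathcal{U}^{Z,X}\iff\gamma\in\mathcal{U}^X$ via criterion (iii) of Proposition~\ref{prop:expcomp}. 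Everything else is a matter of substitution and bookkeeping. I would conclude by remarking that the identities assembled above show the three conditions of the proposition are term-by-term translations of the three conditions of Theorem~\ref{th:mart}, which completes the proof.
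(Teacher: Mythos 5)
Your proposal is correct and follows essentially the same route as the paper: identify the discounted bond prices $B(t,T)/B_t$ with the family $S(t,T)$ of an HJM-type model via $Z=-\log B$, $\eta_0=-f_0$, $\alpha=-\widetilde\alpha$, $\sigma=-\widetilde\sigma$, and then translate the three conditions of Theorem~\ref{th:mart} term by term using the splitting $\Psi_t^{Z,X}\bigl((1,-\widetilde\Sigma_t^{\top}(T))^{\top}\bigr)=-r_{t-}+\Psi_t^X\bigl(-\widetilde\Sigma_t(T)\bigr)$. The only difference is that the paper obtains this splitting by citing an external lemma, whereas you derive it directly from Proposition~\ref{prop:expcomp} via the differential characteristics of $(Z,X)$ (drift $(-r,b^X)$, vanishing $Z$-row/column of the covariance, jump measure concentrated on $\{0\}\times\RR^d$), which is a complete and legitimate substitute.
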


\begin{proof}
The proposition follows from Theorem~\ref{th:mart} by identifying $S(t,T)$ with $B(t,T)/B_t$, so that
$Z_t=\log(S(t,t))=-\log(B_t)$ and noting that
\[
\Psi_t^{Z,X}\left(\bigl(1, -\widetilde{\Sigma}^{\top}_t(T)\bigr)^{\top}\right)=-r_{t-}+\Psi_t^{X}\bigl( -\widetilde{\Sigma}_t(T)\bigr),
\] 
which follows for instance from~\cite[Lemma A.20]{KK13}.
\end{proof}

\begin{remark}	\label{rem:terminal_bond}
We want to point out that the assumption of the existence of a bank account is actually not necessary.
Indeed, one could also consider the (OIS) bond market for maturities $T \leq T^*$ with respect to a bond  $B(\cdot,T^*)$ as num\'eraire, where $T^*$ denotes some fixed terminal maturity (in this regard, compare the recent paper~\cite{KST13}). 
In that context, suitable HJM drift and consistency conditions can be derived by adapting the arguments presented above.
\end{remark}

\subsection{Modeling the term structure of multiplicative spreads}	\label{sec:model_spread}

In this section we introduce the modeling framework for multiple yield curves, where we extend the above considered model for the OIS bonds with an HJM-type model for the multiplicative spreads introduced in~\eqref{eq:multspread}.

\subsubsection{Modeling the term structure of multiplicative spreads}
\label{subsec:model_spread}

Let $\mathcal{D}=\{\delta_1,\ldots,\delta_m\}$ denote a family of tenors, with $0<\delta_1<\delta_2<\ldots<\delta_m$, for some $m\in\mathbb{N}$. As argued in the introduction, we aim at modeling the term structure of multiplicative spreads between
normalized FRA rates and simply compounded OIS forward rates given by
\[
S^{\delta_i}(t,T)=\frac{1+\delta_i L_t(T,T+\delta_i)}{1+\delta_i L^D_t(T,T+\delta_i)},
\]
for all $i=1,\ldots,m$. Starting with time-inhomogeneous exponential L\'evy  models for the multiplicative spot spread (or ``forward exchange premium'') process $(Q^{\delta_i}_t)_{t \geq 0}$ defined in~\eqref{eq:exchangerate} and putting the L\'evy exponent (evaluated at 1) ``in motion'', as described in Section~\ref{subsec:model}, naturally leads to HJM-type models where
\[
S^{\delta_i}(t,T)=e^{Z^i_t + \int_t^{T} \eta^i_t(u)du}.
\]
In particular, this allows to model the observed log-spot spreads $Z^i_t=\log(S^{\delta_i}(t,t))$ and the forward spread rates $\eta^i_t(T)=\partial_T\log(S^{\delta_i}(t,T))$ separately. 
This feature is important in order to capture the dependence structures between different spreads (which are easiest observed on the spot level), while guaranteeing at the same time that $S^{\delta_i}(t,T)\geq 1$ for all maturities.
If desired, this also allows to easily accommodate monotonicity for all maturities with respect to the tenors $\delta_i$ (see Corollary~\ref{cor:ordered}).

As in Section~\ref{sec:HJM}, we assume to have an OIS bank account
and we work under a risk neutral measure $\mathbb{Q}$ under which 
discounted OIS bond prices $B(t,T)/B_t$ are martingales, as required in Definition~\ref{def:bond_RN}. As a consequence of the defining property of the FRA rates (specified in ~\eqref{eq:defLibor}; see also Appendix~\ref{sec:FRArates}), we obtain the following lemma, which is crucial for absence of arbitrage in our setting.

\begin{lemma}\label{lem:spreadQT}
Assume~\eqref{eq:defLibor}. Then, for every $\delta_i \in \mathcal{D}$ and $T\geq0$, the process $(S^{\delta_i}(t,T))_{t\in [0,T]}$ is a $\mathbb{Q}^T$-martingale, where $\mathbb{Q}^T$ denotes the $T$-forward measure whose density process is given by $\frac{d\mathbb{Q}^{T}}{d\mathbb{Q}}|_{\mathcal{F}_t}=\frac{B(t,T)}{B_t B(0,T)}$, $t\in[0,T]$.
\end{lemma}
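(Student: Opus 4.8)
The plan is to express $S^{\delta_i}(t,T)$ as the $\mathbb{Q}^T$-conditional expectation of the single $\mathcal{F}_T$-measurable random variable $S^{\delta_i}(T,T)$; once this is done, the $\mathbb{Q}^T$-martingale property on $[0,T]$ follows immediately from the tower property, provided one checks integrability.

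First I would use the definition of the simply compounded OIS forward rate to record the identity $1+\delta_i L^D_t(T,T+\delta_i)=B(t,T)/B(t,T+\delta_i)$, so that
\[
S^{\delta_i}(t,T)=\frac{B(t,T+\delta_i)}{B(t,T)}\bigl(1+\delta_i L_t(T,T+\delta_i)\bigr),
\]
and in particular, evaluating at $t=T$ and using $B(T,T)=1$, $S^{\delta_i}(T,T)=B(T,T+\delta_i)\bigl(1+\delta_i L_T(T,T+\delta_i)\bigr)$. Inserting the assumption~\eqref{eq:defLibor}, i.e.\ $L_t(T,T+\delta_i)=\mathbb{E}^{\mathbb{Q}^{T+\delta_i}}[L_T(T,T+\delta_i)\,|\,\mathcal{F}_t]$, yields
\[
S^{\delta_i}(t,T)=\frac{B(t,T+\delta_i)}{B(t,T)}\,\mathbb{E}^{\mathbb{Q}^{T+\delta_i}}\bigl[1+\delta_i L_T(T,T+\delta_i)\,\big|\,\mathcal{F}_t\bigr].
\]

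Next I would pass from $\mathbb{Q}^{T+\delta_i}$ to $\mathbb{Q}^T$ via the abstract Bayes formula. Using the stated densities, the relevant density process is
\[
\eta_t:=\frac{d\mathbb{Q}^{T+\delta_i}}{d\mathbb{Q}^T}\Big|_{\mathcal{F}_t}=\frac{B(0,T)}{B(0,T+\delta_i)}\,\frac{B(t,T+\delta_i)}{B(t,T)},
\]
a strictly positive $\mathbb{Q}^T$-martingale, and Bayes' rule gives $\mathbb{E}^{\mathbb{Q}^{T+\delta_i}}[\,\cdot\,|\,\mathcal{F}_t]=\eta_t^{-1}\,\mathbb{E}^{\mathbb{Q}^T}[\eta_T\,\cdot\,|\,\mathcal{F}_t]$. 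The stochastic prefactor then cancels, since $\tfrac{B(t,T+\delta_i)}{B(t,T)}\eta_t^{-1}=\tfrac{B(0,T+\delta_i)}{B(0,T)}$ is deterministic and $\tfrac{B(0,T+\delta_i)}{B(0,T)}\eta_T=B(T,T+\delta_i)$, so that
\[
S^{\delta_i}(t,T)=\frac{B(0,T+\delta_i)}{B(0,T)}\,\mathbb{E}^{\mathbb{Q}^T}\bigl[\eta_T\bigl(1+\delta_i L_T(T,T+\delta_i)\bigr)\,\big|\,\mathcal{F}_t\bigr]=\mathbb{E}^{\mathbb{Q}^T}\bigl[S^{\delta_i}(T,T)\,\big|\,\mathcal{F}_t\bigr].
\]

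Finally I would observe that the right-hand side defines a $\mathbb{Q}^T$-martingale on $[0,T]$ as soon as $S^{\delta_i}(T,T)\in L^1(\mathbb{Q}^T)$, and this integrability is nothing but the well-posedness of~\eqref{eq:defLibor}: indeed $\mathbb{E}^{\mathbb{Q}^T}[S^{\delta_i}(T,T)]=\mathbb{E}^{\mathbb{Q}^{T+\delta_i}}[1+\delta_i L_T(T,T+\delta_i)]<\infty$ because $L_T(T,T+\delta_i)$ is assumed $\mathbb{Q}^{T+\delta_i}$-integrable. The computation is essentially routine; the only point requiring a bit of care is the bookkeeping of the three numéraire change-of-measure densities and verifying that the stochastic bond-price prefactor cancels exactly, leaving only the deterministic constant $B(0,T+\delta_i)/B(0,T)$ that combines with $\eta_T$ to reconstruct $S^{\delta_i}(T,T)$.
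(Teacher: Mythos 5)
Your proof is correct and rests on essentially the same mechanism as the paper's: the identity $1+\delta_i L^D_t(T,T+\delta_i)=B(t,T)/B(t,T+\delta_i)$, the forward-measure density processes, Bayes' formula, and the $\mathbb{Q}^{T+\delta_i}$-martingale property coming from~\eqref{eq:defLibor}. The only (cosmetic) difference is that you change measure directly from $\mathbb{Q}^{T+\delta_i}$ to $\mathbb{Q}^T$ and exhibit the explicit representation $S^{\delta_i}(t,T)=\mathbb{E}^{\mathbb{Q}^T}\bigl[S^{\delta_i}(T,T)\,\big|\,\mathcal{F}_t\bigr]$ before invoking the tower property, whereas the paper routes both martingale properties through $\mathbb{Q}$ by multiplying with the respective density processes.
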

\begin{proof}
For $T\geq0$, by Bayes' formula, $(S^{\delta_i}(t,T))_{t\in [0,T]}$ is a $\mathbb{Q}^T$-martingale if and only if
\[
M^i_t:=S^{\delta_i}(t,T)\frac{B(t,T)}{B_t B(0,T)}
\]
is a $\mathbb{Q}$-martingale. By definition of $S^{\delta_i}(t,T)$, the process $M^i$ can be rewritten as
\begin{align*}
M^i_t
&=(1+\delta_i L_t(T,T+\delta_i))\frac{B(t,T+\delta_i)}{B_t B(0,T+\delta_i)}\frac{B(0,T+\delta_i)}{B(0,T)},
\end{align*}
which is -- again by Bayes' formula -- a $\mathbb{Q}$-martingale, since $(1+\delta_i L_t(T,T+\delta_i))_{t\in[0,T]}$ is a $\mathbb{Q}^{T+\delta_i}$-martingale by~\eqref{eq:defLibor} and $\frac{d\mathbb{Q}^{T+\delta_i}}{d\mathbb{Q}}|_{\mathcal{F}_t}=\frac{B(t,T+\delta_i)}{B_t B(0,T+\delta_i)}$, for all $i=1,\ldots,m$.
\end{proof}

By relying on the above lemma and referring to the arguments already discussed in Section~\ref{subsec:market}, let us now summarize the modeling requirement on
$\{(S^{\delta}(t,T))_{t\in[0,T]}, T \geq 0,  \delta \in \mathcal{D}\}$.

\begin{requirement}\label{req:modelreq}
The family of spreads $\{(S^{\delta}(t,T))_{t\in[0,T]}, T \geq 0,  \delta \in \mathcal{D}\}$ should satisfy 
\begin{enumerate}
\item $(S^{\delta_i}(t,T))_{t \in [0,T]}$ is a $\mathbb{Q}^T$-martingale, for every $T \geq 0$ and for all $i\in\{1,\ldots,m\}$;
\item $S^{\delta_i}(t,T) \geq 1$ for all $t \leq T$, $T\geq 0$ and $i\in\{1,\ldots,m\}$.
\end{enumerate}
\end{requirement}

In typical market situations, it is additionally desirable to have spreads which are ordered with respect to the different tenors $\delta_i$, that is 
\[
S^{\delta_1}(t,T) \leq \cdots \leq S^{\delta_m}(t,T),
\qquad\text{for all $t \leq T$ and $T\geq 0$.} 
\]
Due to the apparent strong interdependencies between the different spot spreads associated to different tenors $\delta_i$ (compare Figure~\ref{fig:1}), we model $Z^i$ via a common lower dimensional process $Y=(Y_t)_{t\geq0}$ taking values in  $\mathbb{R}^n$ (with $n\leq m$) such that
\[
Z^i_t:=u_{i}^{\top}Y_t,
\] 
where $u_{1}, \ldots, u_{m}$ are given vectors in $\mathbb{R}^n$. The dimension of $Y$ and the vectors $u_i$ can for instance be obtained by a principal component analysis (PCA).

\subsubsection{Definition and characterization of multiple yield curve models}

We are now in a position to give the definition of a multiple yield curve model.

\begin{definition}\label{def:multcurve}
 Let the number of different tenors be $m:=| \mathcal{D}|$. We call a model consisting of
\begin{itemize}
\item an $\mathbb{R}^{d+n+1}$-valued It\^o-semimartingale $(X,Y,B)$, 
\item vectors  $u_{1}, \ldots, u_{m}$ in $\mathbb{R}^n$, 
\item functions $f_0$, $\eta_0^1, \ldots, \eta_0^m$, 
\item processes $\widetilde{\alpha},\alpha^1, \ldots, \alpha^m$ and $\widetilde{\sigma}, \sigma^1, \ldots, \sigma^m$  
\end{itemize}
an \emph{HJM-type multiple yield curve model} for $\{ (B(t,T))_{t\in[0,T]}\text{ and }(S^{\delta}(t,T))_{t\in[0,T]}, T\geq 0, \delta \in \mathcal{D}\}$ if
\begin{enumerate}
\item $(B, f_0, \widetilde{\alpha}, \widetilde{\sigma},X)$ is a bond price model (as of Definition~\ref{def:bond_price_model});
\item $(u_i^{\top}Y, \eta_0^i,\alpha^i, \sigma^i, X)$ is an HJM-type model (as of Definition~\ref{def:HJMtype}) for $\{ (S^{\delta_i}(t,T))_{t\in[0,T]}, T\geq 0\}$,  for every $i \in \{1, \ldots, m\}$.
\end{enumerate}
\end{definition}

As before, we write $\Sigma^i_t(T)=\int_t^T \sigma_t^i(u)du$ for all $t \leq T$ , $T\geq0$ and $i \in \{1, \ldots, m\}$.
In view of Lemma~\ref{lem:spreadQT}, we define the \emph{risk neutrality} of an HJM-type multiple yield curve model as follows.

\begin{definition}\label{def:riskneutral}
An HJM-type multiple yield curve model is said to be \emph{risk neutral} if
\begin{enumerate}
\item discounted OIS bond prices $\bigl\{ \left(B(t,T)/B_t\right)_{t \in [0,T]}, \, T \geq 0 \bigr\}$ are $\mathbb{Q}$-martingales;
\item for every $T\geq0$, $\{(S^{\delta}(t,T))_{t\in[0,T]}, \delta \in \mathcal{D}\}$ are $\mathbb{Q}^T$-martingales.
\end{enumerate}
\end{definition}
 
The subsequent theorem follows from Theorem~\ref{th:mart} and characterizes condition (ii) of the above definition (recall that condition (i) has been already characterized in Proposition~\ref{prop:bondHJM}).

\begin{theorem}\label{th:multicurve}
For an HJM-type multiple yield curve model satisfying condition (i) of Definition~\ref{def:riskneutral}, the following are equivalent:
\begin{enumerate}
\item condition (ii) of Definition~\ref{def:riskneutral} is satisfied;
\item for every $T \geq 0$ and every $i \in \{1, \ldots, m\}$, the following conditional expectation hypothesis holds:
\begin{align*}
\mathbb{E}^{\mathbb{Q}^{T}}\Bigl[e^{u_i^{\top}Y_T} \,\bigr| \, \mathcal{F}_t\Bigr]
&=e^{u_i^{\top}Y_t+ \int_t^T \eta^i_t(u)du},
\qquad\text{ for all }t\in[0,T];
\end{align*}
\item for every $T\geq0$ and $i \in \{1, \ldots, m\}$, $\bigl(u_i^{\top}, \Sigma^{i\top}- \widetilde{\Sigma}^{\top}\bigr)^{\top}  \in \mathcal{U}^{Y,X}$ and the following conditions are satisfied:
\begin{itemize}
\item the process
\begin{align}\label{eq:martspreadQ}
\Big(\exp \Big(u_i^{\top}Y_t+\int_0^t\bigl(\Sigma^i_s(T)-\widetilde{\Sigma}_s(T)\bigr) dX_s- \int_0^t
\Psi_s^{Y,X}\left( \bigl(u_i^{\top}, \Sigma^{i\top}_s(T)-\widetilde{\Sigma}^{\top}_s(T)\bigr)^{\top}\right)
ds \Big)\Big)_{t\in[0,T]}
\end{align}
is a $\mathbb{Q}$-martingale, for every $T \geq 0$ and $i \in \{1, \ldots, m\}$;
\item the following consistency condition holds for every $i \in \{1, \ldots, m\}$:
\begin{equation}	\label{eq:consistencyspreadQ}
\Psi_t^{Y}(u_i)=\eta^i_{t-}(t),
\qquad\text{ for all }t>0;
\end{equation}
 \item the following HJM drift condition 
\begin{equation}	\label{eq:drift_cond_spread}
\int_t^T \alpha^i_t(u) du=
\Psi_t^{Y}(u_i)-\Psi_t^{Y,X}\left(\bigl(u_i^{\top}, \Sigma^{i\top}_t(T)-\widetilde{\Sigma}^{\top}_t(T)\bigr)^{\top}\right) +\Psi_t^{X}\bigl(-\widetilde{\Sigma}_t(T)\bigr)
\end{equation}
holds for every $t \in [0,T]$, $ T \geq 0$ and $i \in \{1, \ldots, m\}$.
\end{itemize}
\end{enumerate}
\end{theorem}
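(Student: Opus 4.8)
The plan is to reduce Theorem~\ref{th:multicurve} to Theorem~\ref{th:mart} by a change of numéraire that turns the $\mathbb{Q}^T$-martingale requirement on $S^{\delta_i}(\cdot,T)$ into a $\mathbb{Q}$-martingale requirement on a suitably discounted process, and then to identify the resulting object as an HJM-type model in the sense of Definition~\ref{def:HJMtype}. Concretely, for fixed $T\geq0$ and $i$, by Lemma~\ref{lem:spreadQT} and Bayes' rule the $\mathbb{Q}^T$-martingale property of $S^{\delta_i}(\cdot,T)$ is equivalent to the $\mathbb{Q}$-martingale property of
\[
M^i_t = S^{\delta_i}(t,T)\,\frac{B(t,T)}{B_t\,B(0,T)}
      = \exp\!\Bigl(u_i^{\top}Y_t + \int_t^T\eta^i_t(u)\,du - \int_t^T f_t(u)\,du\Bigr)\frac{1}{B_t\,B(0,T)}.
\]
Using the bond price model relation $B(t,T)/B_t = \exp(-\int_0^t r_s\,ds - \int_t^T f_t(u)\,du)$ and writing $\log B(t,T)/B_t$ as $-\int_0^T f_0(u)\,du - \int_0^t(\cdots)$, one sees that, up to the deterministic constant $\exp(\int_0^T(\eta_0^i-f_0)(u)\,du)/B(0,T)$, the process $M^i$ is exactly of the form~\eqref{eq:HJMmodel}–\eqref{eq:rep} for the log-spot rate $\tilde Z_t := u_i^{\top}Y_t - \int_0^t r_s\,ds$ and the generalized forward rate $\tilde\eta_t(u) := \eta^i_t(u) - f_t(u)$, driven by $X$ with drift $\alpha^i - \widetilde\alpha$ and volatility $\sigma^i - \widetilde\sigma$. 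Thus $M^i$ (normalized) is an HJM-type model with driving pair $(\tilde Z, X)$, and Theorem~\ref{th:mart} applies verbatim.

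**Main steps.** First I would verify the bookkeeping above carefully: rewrite $M^i_t$ using~\eqref{eq:HJMmodel} for $S^{\delta_i}$ and Definition~\ref{def:bond_price_model}(vi) for $B(t,T)$, absorb all deterministic $T$-dependent factors into a constant, and confirm that the integrability conditions of Definition~\ref{def:HJMtype} for the quintuple $(\tilde Z, \eta_0^i - f_0, \alpha^i-\widetilde\alpha, \sigma^i-\widetilde\sigma, X)$ follow from those assumed separately for the OIS bond price model and for each spread HJM-type model. Second, apply the equivalences (i)$\Leftrightarrow$(ii)$\Leftrightarrow$(iii) of Theorem~\ref{th:mart} to this quintuple. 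The conditional expectation hypothesis of Theorem~\ref{th:mart}(ii) reads $\mathbb{E}[e^{\tilde Z_T}\mid\mathcal{F}_t] = e^{\tilde Z_t + \int_t^T\tilde\eta_t(u)\,du}$; dividing through by $e^{-\int_0^t r_s ds}$ and multiplying by $B(t,T)/(B_t B(0,T))$ and using the $\mathbb{Q}$-martingale property of $B(\cdot,T)/B_\cdot$ (condition (i) of Definition~\ref{def:riskneutral}, already in force) one recovers precisely the hypothesis in Theorem~\ref{th:multicurve}(ii) after a further change of measure to $\mathbb{Q}^T$ — this is a short Bayes computation. Third, translate Theorem~\ref{th:mart}(iii). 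Here the local exponent computations are the crux: one needs
\[
\Psi_t^{\tilde Z, X}\!\bigl((1,\Sigma^{i\top}_t(T)-\widetilde\Sigma^{\top}_t(T))^{\top}\bigr)
= -r_{t-} + \Psi_t^{Y,X}\!\bigl((u_i^{\top},\Sigma^{i\top}_t(T)-\widetilde\Sigma^{\top}_t(T))^{\top}\bigr),
\]
because $\tilde Z_t = u_i^{\top}Y_t - \int_0^t r_s\,ds$ and the finite-variation shift $-\int_0^t r_s\,ds$ enters the local exponent additively via the drift term (this is the same mechanism as in the proof of Proposition~\ref{prop:bondHJM}, and can be cited from~\cite[Lemma A.20]{KK13}). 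Similarly $\Psi_t^{\tilde Z}(1) = \Psi_t^{Y}(u_i) - r_{t-}$. Substituting these into~\eqref{eq:consistency} gives $\Psi_t^{Y}(u_i) - r_{t-} = \eta^i_{t-}(t) - f_{t-}(t) = \eta^i_{t-}(t) - r_{t-}$, i.e.\ exactly~\eqref{eq:consistencyspreadQ}; substituting into~\eqref{eq:drift} and rearranging (the $-r_{t-}$ terms from $\Psi^{\tilde Z}(1)$ and $\Psi^{\tilde Z,X}$ cancel, leaving the three-term right-hand side with the $+\Psi_t^X(-\widetilde\Sigma_t(T))$ correction coming from the bond drift condition of Proposition~\ref{prop:bondHJM}(iii)) gives~\eqref{eq:drift_cond_spread}. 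Finally, observe that the martingale~\eqref{eq:martprop} for the quintuple $(\tilde Z, \ldots)$ differs from~\eqref{eq:martspreadQ} only by the multiplicative factor $B(t,T)/(B_t B(0,T))$, and since the latter is a true $\mathbb{Q}$-martingale (Definition~\ref{def:riskneutral}(i)) the two martingale properties are equivalent — one should note that this step is where the standing assumption of condition (i) of Definition~\ref{def:riskneutral} is genuinely used, rather than merely the local martingale version.

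**Main obstacle.** The routine but delicate part is keeping the local-exponent algebra straight, in particular the interplay between the three semimartingales $(Y,X)$, the drift correction $-\int_0^t r_s ds$, and the signs in $\Sigma^i - \widetilde\Sigma$; a sign error anywhere produces a wrong drift condition. The genuinely substantive point, rather than a formal obstacle, is the passage between $\mathbb{Q}$ and $\mathbb{Q}^T$ martingality at the level of \emph{true} (not local) martingales: one must argue that multiplying $M^i$ by the strictly positive $\mathbb{Q}$-martingale density $(B(0,T)B_t/B(t,T))^{-1}$ preserves the true-martingale property in both directions, which is immediate from Bayes' formula but should be stated explicitly since Theorem~\ref{th:multicurve}(iii) asserts a \emph{true} $\mathbb{Q}$-martingale property for~\eqref{eq:martspreadQ}. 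Everything else is a direct transcription of Theorem~\ref{th:mart} under the identification $S(t,T) \leftrightarrow M^i_t \cdot(\text{const})$, $Z \leftrightarrow \tilde Z$, $\eta \leftrightarrow \tilde\eta$, $\alpha \leftrightarrow \alpha^i - \widetilde\alpha$, $\sigma \leftrightarrow \sigma^i - \widetilde\sigma$.
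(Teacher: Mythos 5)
Your proposal follows essentially the same route as the paper's proof: reduce, via Bayes' formula (Lemma~\ref{lem:spreadQT}), the $\QQ^T$-martingale property of $S^{\delta_i}(\cdot,T)$ to the $\QQ$-martingale property of $S^{\delta_i}(\cdot,T)B(\cdot,T)/B_{\cdot}$; recognize the latter (up to a deterministic constant) as an HJM-type model with log-spot rate $\tilde{Z}:=u_i^{\top}Y-\int_0^{\cdot}r_s\,ds$ and generalized forward rate $\eta^i-f$, drift $\alpha^i-\widetilde{\alpha}$ and volatility $\sigma^i-\widetilde{\sigma}$; apply Theorem~\ref{th:mart}; and translate the local exponents using $\Psi_t^{\tilde{Z}}(1)=\Psi_t^{Y}(u_i)-r_{t-}$, the identity $r_{t-}=f_{t-}(t)$ and the OIS drift condition from Proposition~\ref{prop:bondHJM}. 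This is exactly the paper's argument, and your derivation of \eqref{eq:consistencyspreadQ} and \eqref{eq:drift_cond_spread} is correct.

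One bookkeeping correction in your final step: the process \eqref{eq:martprop} for the quintuple $(\tilde{Z},\eta^i_0-f_0,\alpha^i-\widetilde{\alpha},\sigma^i-\widetilde{\sigma},X)$ does \emph{not} differ from \eqref{eq:martspreadQ} by the factor $B(t,T)/(B_tB(0,T))$; substituting your own identity $\Psi^{\tilde{Z},X}_s\bigl(\bigl(1,\Sigma^{i\top}_s(T)-\widetilde{\Sigma}^{\top}_s(T)\bigr)^{\top}\bigr)=-r_{s-}+\Psi^{Y,X}_s\bigl(\bigl(u_i^{\top},\Sigma^{i\top}_s(T)-\widetilde{\Sigma}^{\top}_s(T)\bigr)^{\top}\bigr)$ shows that the term $-\int_0^t r_s\,ds$ in $\tilde{Z}_t$ cancels against $+\int_0^t r_{s-}\,ds$, so the two processes coincide, and $S^{\delta_i}(t,T)B(t,T)/B_t$ equals \eqref{eq:martspreadQ} up to the deterministic constant $\exp\bigl(\int_0^T(\eta^i_0(u)-f_0(u))\,du\bigr)$ --- which is precisely how the paper closes the equivalence. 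This matters because the justification you give for the step as written (multiplying by a true $\QQ$-martingale preserves the true $\QQ$-martingale property) is not valid in general; the density $B(t,T)/(B_tB(0,T))$ is legitimately used only in the Bayes passage between $\QQ$- and $\QQ^T$-martingality, which you already invoke through Lemma~\ref{lem:spreadQT}. With that identification corrected, your proof coincides with the paper's.
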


\begin{proof}
The equivalence between (i) and (ii) can be shown as in the proof of Theorem~\ref{th:mart}. Concerning (iii), note that (i) is -- by Bayes' Theorem -- equivalent to 
\begin{align}\label{eq:spreadQ}
S^{\delta_i}(t,T)\frac{B(t,T)}{B_t }=e^{u_i^{\top}Y_t-\int_0^t r_s ds + \int_t^{T} (\eta^i_t(u)-f_t(u))du}
\end{align}
being a $\mathbb{Q}$-martingale, for every $T\geq0$ and $i=1,\ldots,m$.  Theorem~\ref{th:mart} then yields the following consistency condition
\[
\Psi_t^{u_i^{\top}Y-\int_0^{\cdot} r_s ds}(1)=\eta^i_{t-}(t)-f_{t-}(t),
\qquad\text{ for all }t>0.
\]
Since $\Psi_t^{u_i^{\top}Y-\int_0^{\cdot} r_s ds}(1)=\Psi_t^Y(u_i)-r_{t-}$ and since $r_{t-}=f_{t-}(t)$ by condition (i) of Definition~\ref{def:riskneutral}, condition \eqref{eq:consistencyspreadQ} follows. Concerning the drift condition \eqref{eq:drift_cond_spread}, we have by Theorem~\ref{th:mart} 
\[
\int_t^T \bigl(\alpha^i_t(u)-\widetilde{\alpha}_t(u)\bigr)du=\Psi_t^{u_i^{\top}Y-\int_0^{\cdot} r_s ds}(1)-\Psi_t^{u_i^{\top}Y-\int_0^{\cdot} r_s ds,X}\left(\bigl(1, \Sigma_t^{i\top}(T)-\widetilde{\Sigma}^{\top}_t(T)\bigr)^{\top}\right). 
\]
As the right hand side is equal to
\[
\Psi_t^{Y}(u_i)-\Psi_t^{Y,X}\left(\bigl(u_i^{\top}, \Sigma^{i\top}_t(T)-\widetilde{\Sigma}^{\top}_t(T)\bigr)^{\top}\right)
\]
and as $\int_t^T \widetilde{\alpha}_t(u))du=\Psi_t^X(-\widetilde{\Sigma}_t(T))$ (by Proposition~\ref{prop:bondHJM}), the asserted drift condition follows. By the drift and consistency condition,~\eqref{eq:spreadQ} is actually of the form~\eqref{eq:martspreadQ} (up to the constant term $\exp(\int_0^T (\eta_0^i(u)-f_0(u)) du)$, which finally implies the equivalence between (i) and (iii).
\end{proof}

\begin{remark}
The martingale property of~\eqref{eq:martspreadQ} can be assured similarly as in Remark~\ref{rem:martcond}.
\end{remark}

Additionally, if one is interested in modeling ordered spot spreads $1 \leq S^{\delta_1}(t, t) \leq \cdots \leq S^{\delta_m}(t,t)$, this can be easily obtained by considering a process $Y$ taking values is some cone $C \subset \mathbb{R}^n$ and vectors $u_i \in C^*$ such that $0 < u_1 \prec u_2 \prec \cdots \prec u_m$, where $C^*$ denotes the dual cone of $C$ and $\prec$ the order relation thereon. In that context, we have the following corollary.

\begin{corollary}\label{cor:ordered}
Consider a risk neutral HJM-type multiple yield curve model such that $Y$  takes values in a cone $C \subset \mathbb{R}^n$ and $u_i \in C^*$, for $i=1,\ldots,m$, where $C^*$ denotes the dual cone of $C$. Then Requirement~\ref{req:modelreq} is satisfied. Moreover, if $ u_1 \prec u_2 \prec \cdots \prec u_m$, where $\prec$ denotes the partial order of $C^*$, then we have $S^{\delta_1}(t,T) \leq \cdots \leq S^{\delta_m}(t,T)$ for all $t \leq T$ and $T\geq0$.
\end{corollary}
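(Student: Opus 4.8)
The plan is to reduce everything to the martingale characterization already established in Theorem~\ref{th:mart} and Theorem~\ref{th:multicurve}, plus the two elementary order-theoretic facts about dual cones. First I would recall that, by assumption, the model is risk neutral, so condition~(ii) of Definition~\ref{def:riskneutral} holds; in particular, by Theorem~\ref{th:multicurve}, the conditional expectation hypothesis $\mathbb{E}^{\mathbb{Q}^T}[e^{u_i^{\top}Y_T}\,|\,\mathcal{F}_t]=S^{\delta_i}(t,T)$ is valid for all $0\leq t\leq T$ and all $i$. This is the key representation that turns statements about the spreads into statements about the process $Y$ under the forward measures.

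Next I would verify Requirement~\ref{req:modelreq}. Part~(i), the $\mathbb{Q}^T$-martingale property, is exactly condition~(ii) of Definition~\ref{def:riskneutral}, hence holds by hypothesis. For part~(ii), i.e.\ $S^{\delta_i}(t,T)\geq 1$, I would use the conditional expectation hypothesis together with the conditional Jensen inequality applied to the convex function $x\mapsto e^x$: writing $S^{\delta_i}(t,T)=\mathbb{E}^{\mathbb{Q}^T}[e^{u_i^{\top}Y_T}\,|\,\mathcal{F}_t]\geq\exp\bigl(\mathbb{E}^{\mathbb{Q}^T}[u_i^{\top}Y_T\,|\,\mathcal{F}_t]\bigr)$. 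Hmm — but this only gives a lower bound in terms of the conditional mean, which need not be nonnegative. The cleaner route is pathwise: since $Y$ takes values in the cone $C$ and $u_i\in C^*$, by the very definition of the dual cone we have $u_i^{\top}Y_t\geq 0$ for all $t\geq0$, $\mathbb{Q}$-a.s. Hence $e^{u_i^{\top}Y_T}\geq 1$ $\mathbb{Q}^T$-a.s. (the measures being equivalent), and taking conditional expectations under $\mathbb{Q}^T$ preserves the inequality, so $S^{\delta_i}(t,T)=\mathbb{E}^{\mathbb{Q}^T}[e^{u_i^{\top}Y_T}\,|\,\mathcal{F}_t]\geq 1$. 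This disposes of Requirement~\ref{req:modelreq}.

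For the ordering statement, suppose now $u_1\prec u_2\prec\cdots\prec u_m$ in the partial order induced by $C^*$; recall this means $u_{j}-u_{i}\in C^*$ whenever $i\leq j$. Then for each path and each $t$, since $Y_t\in C$, we get $(u_j-u_i)^{\top}Y_t\geq 0$, i.e.\ $u_i^{\top}Y_t\leq u_j^{\top}Y_t$ for $i\leq j$, so $e^{u_i^{\top}Y_T}\leq e^{u_j^{\top}Y_T}$ $\mathbb{Q}^T$-a.s. Applying the conditional expectation hypothesis for both indices (both under the same forward measure $\mathbb{Q}^T$, so monotonicity of conditional expectation applies directly) yields $S^{\delta_i}(t,T)=\mathbb{E}^{\mathbb{Q}^T}[e^{u_i^{\top}Y_T}\,|\,\mathcal{F}_t]\leq\mathbb{E}^{\mathbb{Q}^T}[e^{u_j^{\top}Y_T}\,|\,\mathcal{F}_t]=S^{\delta_j}(t,T)$ for all $0\leq t\leq T$ and $T\geq0$, which is the claim.

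The only real subtlety — the main obstacle, such as it is — is making sure one is comparing expectations under the \emph{same} measure: $S^{\delta_i}(t,T)$ and $S^{\delta_j}(t,T)$ are both $\mathbb{Q}^T$-martingales for the \emph{same} maturity $T$ (this is precisely why the modeling is set up with $S^{\delta_i}(\cdot,T)$ being $\mathbb{Q}^T$-, not $\mathbb{Q}^{T+\delta_i}$-, martingales, cf.\ Lemma~\ref{lem:spreadQT}), so the conditional-expectation representations in Theorem~\ref{th:multicurve}(ii) are all taken under $\mathbb{Q}^T$ and monotonicity of conditional expectation applies without any change of measure. Everything else is a direct application of the definition of dual cone and the partial order on it, so the proof is short.
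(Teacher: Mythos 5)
Your argument is correct and follows essentially the same route as the paper: the $\mathbb{Q}^T$-martingale property is part of the risk-neutrality definition, while $S^{\delta_i}(t,T)\geq 1$ and the ordering both follow from the conditional expectation hypothesis of Theorem~\ref{th:multicurve} combined with the pathwise inequalities $u_i^{\top}Y_T\geq 0$ and $(u_j-u_i)^{\top}Y_T\geq 0$ given by the dual-cone assumptions. The brief detour through Jensen's inequality is unnecessary (as you yourself note) but does not affect the validity of the final argument.
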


\begin{proof}
Condition (i) of Requirement~\ref{req:modelreq} is satisfied by definition, while condition (ii) follows from the conditional expectation hypothesis, since $S^{\delta_i}(t,T)=\mathbb{E}^{\mathbb{Q}^{T}}[e^{u_i^{\top}Y_T}|\mathcal{F}_t] \geq 1$ for all $t\leq T$ and $T\geq0$, as $u_i^{\top}Y_T \geq 0$ due to the condition on $Y$ and $u_i$. Similarly, if $u_i \prec u_j$ for $i<j$, 
\[
S^{\delta_i}(t,T)=\mathbb{E}^{\mathbb{Q}^{T}}\Bigl[e^{u_i^{\top}Y_T} \,\bigr| \, \mathcal{F}_t\Bigr]
\leq \mathbb{E}^{\mathbb{Q}^{T}}\Bigl[e^{u_j^{\top}Y_T}\,\bigr| \, \mathcal{F}_t\Bigr]=S^{\delta_j}(t,T).
\]
\end{proof}

\begin{remark}
Note that, in the case when $Y$ is not one-dimensional, the ordering of the spreads can vary over time if the vectors $u_i$, $i \in\{1, \ldots,m\}$ are not  ordered. This means that the model can reproduce market situations where the order relations between spreads associated to different tenors change randomly over time. Nevertheless, $S^{\delta_i}(t,T)\geq 1$ for all $t\leq T$ and $T\geq0$ as long as $u_i \in C^*$.
\end{remark}

\begin{remark}\label{rem:short_spread}
One possibility to specify the process $Y$ is an analogy to the bank account $B=\exp(\int_0^{\cdot}r_sds)$. 
Indeed, let $q$ be an $\RR^n$-valued process and set
\[
Y:=\int_0^{\cdot} q_s ds.
\]
Then,
\[
S^{\delta_i}(t,T)=\mathbb{E}^{\mathbb{Q}^T} \left[e^{\int_0^T u_i^{\top}q_s ds}\bigr|\mathcal{F}_t\right]=e^{\int_0^t u_i^{\top}q_s ds+\int_t^T \eta^{i}_t(u)du},
\]
for all $t\leq T$ and $T\geq0$.
The  consistency condition $\Psi_t^{\int_0^{\cdot} q_s ds}(u_i)=\eta^i_{t-}(t)$
is  then equivalent to $u_i^{\top}q_{t-}=\eta^i_{t-}(t)$ since $\Psi_t^{\int_0^{\cdot} q_s ds}(u_i)=u_i^{\top}q_{t-}$ and the
drift condition becomes
\begin{align*}\int_t^T \alpha^i_t(u) du&=-\Psi_t^{X}\left(\bigl( \Sigma^{i\top}_t(T)-\widetilde{\Sigma}^{\top}_t(T)\bigr)^{\top}\right) +\Psi_t^{X}\bigl(-\widetilde{\Sigma}^{\top}_t(T)\bigr).
\end{align*}
\end{remark}


\section{Construction of risk-neutral HJM-type multiple yield curve models}\label{sec:construction}

Theorem~\ref{th:multicurve} gives necessary and sufficient conditions for an HJM-type multiple yield curve model to be risk neutral. In this section, we provide a general approach to construct risk neutral HJM-type multiple yield curve models starting from a given tuple of basic building blocks, as precisely defined below (see Definition~\ref{def:BB}). 

In particular, we aim at constructing models that satisfy Requirement~\ref{req:modelreq} and can potentially generate spreads which are ordered with respect to the tenor's length. To this effect, if the process $Y$ takes values in some cone $C\subset\RR^n$, part (ii) of Requirement~\ref{req:modelreq} and ordered spreads can then be easily achieved by relying on Corollary~\ref{cor:ordered}. Therefore, the crucial issue is to construct a model that satisfies the three conditions in part (iii) of Theorem~\ref{th:multicurve}, which in particular imply that the model ingredients in Definition~\ref{def:multcurve} cannot be chosen arbitrarily. Our construction starts from the following basic building blocks on a given filtered probability space $(\Omega,\cF,(\cF_t)_{t\geq0},\QQ)$ (compare also with \cite[Definition 4.2]{KK13}). For simplicity of notation, let us denote $u_0:=0\in\RR^n$ and $\Sigma^0_{\cdot}(\cdot):=0\in\RR^n$.

\begin{definition}	\label{def:BB}
A tuple $(X,\Yhat,u_1,\ldots,u_m,f_0,\eta^1_0,\ldots,\eta^m_0,\widetilde{\sigma},\sigma^1,\ldots,\sigma^m)$ is called \emph{building blocks} for an HJM-type multiple yield curve model if
\begin{enumerate}
\item
$(X,\Yhat)$ is an $(\RR^d\times C)$-valued It\^o-semimartingale such that $\Yhat$ is exponentially special and $\Yhat^{\parallel}=\Yhat$, with $\Yhat^{\parallel}$ denoting the dependent part of $\Yhat$ relative to $X$ (see Appendix~\ref{appendix:local_ind});
\item
$u_1,\ldots,u_m$ are vectors in $C^*$, with $C^*$ denoting the dual cone of $C$;
\item
$f_0,\eta^1_0,\ldots,\eta^m_0$ are Borel-measurable functions satisfying condition (ii) of Definition~\ref{def:HJMtype};
\item
$\widetilde{\sigma},\sigma^1,\ldots,\sigma^m$ are $\mathcal{P}\otimes\mathcal{B}(\RR_+)$-measurable processes satisfying condition (iii) of Definition~\ref{def:HJMtype};
\item 
$\bigl(u_i^{\top},\Sigma^{i\top}(T)-\widetilde{\Sigma}^{\top}(T)\bigr)^{\top}\in\mathcal{U}^{\Yhat,X}$, for every $T\geq0$ and $i\in\{0,1,\ldots,m\}$;
\item
the process
\begin{align}\label{eq:martYX}
\left(\exp \left(u_i^{\top}\Yhat_t+\int_0^t\bigl(\Sigma^i_s(T)-\widetilde{\Sigma}_s(T)\bigr) dX_s- \int_0^t
\Psi^{\Yhat,X}_s\Bigl(\bigl(u_i^{\top}, \Sigma^{i\top}_s(T)-\widetilde{\Sigma}^{\top}_s(T)\bigr)^{\top}\Bigr)
ds \right)\right)_{t\in[0,T]}
\end{align}
is a $\QQ$-martingale, for all $T\geq0$ and $i\in\{0,1,\ldots,m\}$.
\end{enumerate}
\end{definition}

Note that, if $(X,\Yhat)$ is chosen to be a L\'evy process (as is the case in Section~\ref{sec:exist_uniq}), condition (vi) of the above definition is automatically satisfied if the volatilities $\widetilde{\sigma},\sigma^1,\ldots,\sigma^m$ are deterministic (more generally, the validity of condition (vi) can be established analogously as in Remark~\ref{rem:martcond}). 

As in the classical HJM framework, the drift processes $(\widetilde{\alpha},\alpha^1,\ldots,\alpha^m)$ will be entirely determined by the building blocks (see part (ii) of Definition~\ref{def:compatible}). Therefore, in view of Definition~\ref{def:BB}, the main model construction problem becomes  finding an $\RR^n$-valued It\^o-semimartingale $Y$ which, together with the given building blocks, generates a risk neutral HJM-type multiple yield curve model, as formalized below.
In particular, note that the process $Y$ needs to satisfy the crucial consistency condition~\eqref{eq:consistencyspreadQ}.

\begin{definition}	\label{def:compatible}
A $C$-valued It\^o-semimartingale $Y$ is said to be \emph{compatible} with the building blocks $(X,\Yhat,u_1,\ldots,u_m,f_0,\eta^1_0,\ldots,\eta^m_0,\widetilde{\sigma},\sigma^1,\ldots,\sigma^m)$ if the following hold:
\begin{enumerate}
\item
$Y^{\parallel}=\Yhat$, with $Y^{\parallel}$ denoting the dependent part of $Y$ relative to $X$;
\item
the tuple $(X,Y,\exp(\int_0^{\cdot}f_s(s)ds),u_1,\ldots,u_m,f_0,\eta^1_0,\ldots,\eta^m_0,\widetilde{\alpha},\alpha^1,\ldots,\alpha^m,\widetilde{\sigma},\sigma^1,\ldots,\sigma^m)$ is a risk-neutral HJM-type multiple yield curve model, in the sense of Definitions~\ref{def:multcurve}-\ref{def:riskneutral}, where
\begin{align}
\widetilde{\alpha}_t(T) &= \partial_T\Psi^{X}_t\bigl(-\widetilde{\Sigma}_t(T)\bigr),
\label{eq:drift_1}\\
\alpha^i_t(T) &= -\partial_T\Psi^{\Yhat,X}_t\Bigl(\bigl(u_i^{\top},  \Sigma_t^{i\top}(T)-\widetilde{\Sigma}_t^{\top}(T)\bigr)^{\top}\Bigr)+\partial_T\Psi^{X}_t\bigl(-\widetilde{\Sigma}_t(T)\bigr),
\label{eq:drift_2}
\end{align}
for all $t\leq T$, $T\geq0$ and $i\in\{1,\ldots,m\}$.
\end{enumerate}
\end{definition}

In other words, starting from given building blocks $(X,\Yhat,u_1,\ldots,u_m,f_0,\eta^1_0,\ldots,\eta^m_0,\widetilde{\sigma},\sigma^1,\ldots,\sigma^m)$ and then searching for a compatible It\^o-semimartingale $Y$ amounts to a model construction strategy which proceeds along three subsequent steps:
\begin{itemize}
\item[(a)]
define the drift processes $(\widetilde{\alpha},\alpha^1,\ldots,\alpha^m)$ via the right-hand sides of \eqref{eq:drift_1}-\eqref{eq:drift_2};
\item[(b)]
prove the existence and the uniqueness of the generalized forward rate processes $(f,\eta^1,\ldots,\eta^m)$, given as the solutions to \eqref{eq:eta} with initial values $(f_0,\eta^1_0,\ldots,\eta^m_0)$, drift processes $(\widetilde{\alpha},\alpha^1,\ldots,\alpha^m)$ and volatility processes $(\widetilde{\sigma},\sigma^1,\ldots,\sigma^m)$;
\item[(c)]
construct a $C$-valued It\^o-semimartingale $Y$ satisfying the three following requirements:
\begin{itemize}
\item[(i)] $Y^{\parallel}=\Yhat$;
\item[(ii)] $\Psi_t^{Y}(u_i)=\eta_{t-}^i(t)$, for all $t>0$ and $i\in\{1,\ldots,m\}$ (\emph{consistency condition});
\item[(iii)] the process given in \eqref{eq:martspreadQ} is an $\QQ$-martingale, for every $T\geq0$ and $i=1,\ldots,m$.
\end{itemize}
\end{itemize}
If steps (b)-(c) can be successfully solved, then a risk-neutral HJM-type multiple yield curve model is given by the tuple $(X,Y,\exp(\int_0^{\cdot}f_s(s)ds),u_1,\ldots,u_m,f_0,\eta^1_0,\ldots,\eta^m_0,\widetilde{\alpha},\alpha^1,\ldots,\alpha^m,\widetilde{\sigma},\sigma^1,\ldots,\sigma^m)$. Indeed, in view of Theorem~\ref{th:multicurve}, the HJM drift condition \eqref{eq:drift_cond_spread} follows from step (a), noting that
\[
-\partial_T\Psi^{\Yhat,X}_t\Bigl(\bigl(u_i^{\top},  \Sigma_t^{i\top}(T)-\widetilde{\Sigma}_t^{\top}(T)\bigr)^{\top}\Bigr)
= -\partial_T\Psi^{Y,X}_t\Bigl(\bigl(u_i^{\top},  \Sigma_t^{i\top}(T)-\widetilde{\Sigma}_t^{\top}(T)\bigr)^{\top}\Bigr),
\]
for all $t\leq T$, $T\geq0$ and $i\in\{1,\ldots,m\}$, since $Y^{\parallel}=\Yhat$  together with Definition~\ref{def:local_independence} and Lemma~\ref{lem:local_independence} implying local independence of $Y^{\perp}:=Y-Y^{\parallel}$ and $(Y^{\parallel},X)$. The consistency condition and the martingale property of the process in \eqref{eq:martspreadQ} follow from step (c). Finally, part (ii) of Requirement~\ref{req:modelreq} and ordered spreads can be achieved by taking, for instance, $C=\RR_+$, as considered in Section~\ref{sec:constr_Yperp}.

From now on, we fix a given tuple of building blocks $(X,\Yhat,u_1,\ldots,u_m,f_0,\eta^1_0,\ldots,\eta^m_0,\widetilde{\sigma},\sigma^1,\ldots,\sigma^m)$. In Section~\ref{sec:exist_uniq}, we prove the existence and the uniqueness of the forward curves $(f,\eta^1,\ldots,\eta^m)$, thus solving step (b) above, while, in Section~\ref{sec:constr_Yperp}, we present a general procedure to construct a compatible It\^o-semimartingale $Y$, thus solving step (c) above.

\subsection{Existence and uniqueness of the forward spread curves}	\label{sec:exist_uniq}

In order to address the issue of existence and uniqueness of $\eta^i$, $i \in \{1,\ldots,m\}$, and also of the OIS forward curve $f$, we shall rely on the results of~\cite{FTT:10}, adapted to the present multiple curve setting. For notational convenience, we denote the OIS forward curve by $\eta^0:=f$ and its volatility and  drift by  $\sigma^0_t(T):=\widetilde{\sigma}_t(T)$ and $\alpha^0_t(T):=\widetilde{\alpha}_t(T)$.

We are interested in volatility structures which depend on the forward (spread) curves $\eta_t(\cdot):=\bigl(\eta_t^0(\cdot),\eta_t^1(\cdot),\ldots,\eta_t^m(\cdot)\bigr)^{\top}$ in the following way
\begin{align}\label{eq:sigmamus}
\sigma^i_t(T)=
\left\{ \begin{array}{ll}
\zeta^i(\theta_t)(T-t), & t \leq T,\\
0, & t > T.
\end{array} \right., \quad i \in \{0,1, \ldots, m\},
\end{align}
where $\theta_t(s):=\eta_t(t+s)$ corresponds to the Musiela parametrization and $\zeta^i$, for $i\in\{0,1,\ldots,m\}$, is a function from some Hilbert space $H^{\lambda}_{m+1}$ of forward (spread) curves $h: \mathbb{R}_+ \to \mathbb{R}^{m+1}$ specified below.
Note that, for all $i\in\{0,1,\ldots,m\}$, the volatility $\sigma^i_t(T)$ of each individual forward curve $\eta^i$ is allowed to depend through the function $\zeta^i$ on the whole family of forward curves $\eta=(\eta^0,\eta^1,\ldots,\eta^m)^{\top}$. In view of the empirical facts reported in Section \ref{subsec:market}, this is a relevant feature of the model.
We therefore switch to the Musiela parametrization and view $(\theta_t)_{t\geq 0}$ as a single $H^{\lambda}_{m+1}$-valued stochastic process. 

Until the end of Section~\ref{sec:exist_uniq}, in order to apply the results of~\cite{FTT:10}, we assume that $(X,\Yhat)$ is a L\'evy martingale taking values in $\RR^{d+n}$. Without loss of generality, the driving semimartingale $(X_t)_{t\geq0}$ is then of the form
\[
X_t=\beta_t+ \int_0^t \int_{\mathbb{R}^d} \xi \bigl(\mu(dt,d\xi)-F(d\xi)dt\bigr),
\]
where $(\beta_t)_{t \geq 0}$ is an $\mathbb{R}^d$-valued standard Brownian motion 
and $\mu$ a homogeneous Poisson random measure on $\mathbb{R}_+ \times \mathbb{R}^d$ with compensator $F(d\xi)dt$. 
The SDE for $\eta^i$, for $i \in \{0,1, \ldots, m\}$, thus becomes
\begin{align}
\eta^i_t(T)&=\eta^i_0(T)+\int_0^t \alpha^i_s(T)ds+\int_0^t \sigma^i_s(T) dX_s \notag\\
&=\eta^i_0(T)+\int_0^t \alpha^i_s(T)ds+\int_0^t \sigma^i_s(T) d\beta_s+\int_0^t \int_{\mathbb{R}^d}  \bigl(\sigma^i_s(T)\bigr)^{\top}\xi \bigl(\mu(ds,d\xi)-F(d\xi)ds\bigr).\label{eq:SDEeta}
\end{align}
Note that the 
processes $\sigma(t,T)$ and $\gamma(t,\xi,T)$ in~\cite{FTT:10} correspond to $\sigma^i_t(T)$ and $\bigl(\sigma^{i}_t(T)\bigr)^{\top}\xi$, respectively, for $i\in \{0,1, \ldots, m\}$.
Assuming continuity of $T \mapsto \eta_t(T)$, we can transform~\eqref{eq:SDEeta} into the following integral equation for $\theta^i$:
\begin{equation}
\begin{split}\label{eq:SPDE}
\theta^i_t(x)&=S_t\eta^i_0(x)+\int_0^t S_{t-s}\alpha^i_s(s+x)ds+\int S_{t-s}\sigma^i_s(s+x) d\beta_s\\
&\quad +\int_0^t \int_{\mathbb{R}^d} S_{t-s}\bigl(\sigma^i_s(s+x)\bigr)^{\top}\xi \bigl(\mu(ds,d\xi)-F(d\xi)ds\bigr), \quad i \in \{0,1, \ldots, m\},
\end{split}
\end{equation}
where $(S_t)_{t \geq 0}$ denotes the shift semigroup, that is $S_th=h(t+\cdot)$.
In order to establish existence of solutions to such equations, let us introduce the following spaces of forward curves, in line with~\cite{FTT:10} (but generalized to the multivariate case). Fix an arbitrary constant $\lambda >0$ and let $H_k^{\lambda}$ be the space of all absolutely continuous functions $h: \mathbb{R}_+ \to \mathbb{R}^k$ such that
\[
\| h\|_{\lambda,k}:=\left(\| h(0)\|_{k}^2 + \int_{\mathbb{R}_+} \|\partial_s h(s)\|_{k}^2e^{\lambda s} ds\right)^{\frac{1}{2}},
\]
where $\|\cdot\|_k$ denotes the norm in $\mathbb{R}^k$, with $k\in\{1,d,m+1\}$.

As stated above, we shall consider drift and volatility structures which are functions of the prevailing forward (spread) curves, i.e., 
\begin{align*}
\alpha^i_t(T)=
\left\{ \begin{array}{ll}
\kappa^i(\theta_t)(T-t), & t \leq T,\\
0, & t > T,
\end{array} \right.,\quad
\sigma^i_t(T)=
\left\{ \begin{array}{ll}
\zeta^i(\theta_t)(T-t), & t \leq T,\\
0, & t > T,
\end{array} \right.,
\end{align*}
for all $i \in \{0,1, \ldots,m\}$.
In particular, we require $\kappa^i: H^{\lambda}_{m+1} \to H^{\lambda}_1$ and $\zeta^i:  H^{\lambda}_{m+1} \to H^{\lambda}_d$. 

Let us denote by $c^{\Yhat,X}$ and $K^{\Yhat,X}$ the second and third terms of the L\'evy triplet of $(\Yhat,X)$, so that $c^{\Yhat,X}\in\RR^{n\times d}$ and $K^{\Yhat,X}$ is a L\'evy measure on $\mathbb{R}^{n\times d}$, with $X$-marginal denoted by $F(d\xi)$.
The drift conditions~\eqref{eq:drift_1}-\eqref{eq:drift_2} then read as 
\begin{align*}
\alpha^i_t(T)&=-u_i^{\top}c^{\Yhat,X}\bigl(\sigma_t^i(T)-\sigma_t^0(T)\bigr)- \bigl(\sigma_t^i(T)-\sigma_t^0(T)\bigr)^{\top}\bigl(\Sigma_t^i(T)-\Sigma_t^0(T)\bigr)\\
&\quad - \int \bigl(\sigma_t^i(T)-\sigma_t^0(T)\bigr)^{\top}\xi\bigl(e^{u_i^{\top} \xihat+ (\Sigma_t^i(T)-\Sigma_t^0(T))^{\top}\xi}-1\bigr)K^{\Yhat,X}(d\xihat, d\xi)\\
&\quad + \bigl(\sigma_t^0(T)\bigr)^{\top}\Sigma_t^0(T)-\int \sigma_t^0(T)^{\top}\xi \bigl(e^{-(\Sigma_t^0(T))^{\top}\xi}-1\bigr)F(d\xi), \qquad  i \in \{0,1, \ldots,m\},
\end{align*}
as long as  
\begin{align*}
& \int \sup_{T \geq t}\left(\bigl(\sigma_t^i(T)-\sigma_t^0(T)\bigr)^{\top}\xi\bigl(e^{u_i^{\top} \xihat+ (\Sigma_t^i(T)-\Sigma_t^0(T))^{\top}\xi}-1\bigr)\right)K^{\Yhat,X}(d\xihat, d\xi) < \infty,\\
&\int \sup_{T \geq t}\left(\bigl(\sigma_t^0(T)\bigr)^{\top}\xi \bigl(e^{-(\Sigma_t^0(T))^{\top}\xi}-1\bigr)\right)F(d\xi)< \infty,
\end{align*} 
so that we are allowed to differentiate under the integral sign. 
This translates to $\kappa^i$ as follows, for all $h\in H^{\lambda}_{m+1}$, 
\begin{equation}\label{eq:kappa}
\begin{split}
\kappa^i(h)(s)&=-u_i^{\top}c^{\Yhat,X}\bigl(\zeta^i(h)(s)-\zeta^0(h)(s)\bigr)- \bigl(\zeta^i(h)(s)-\zeta^0(h)(s)\bigr)^{\top}\bigl(Z^i  (h)(s)-Z^0 (h)(s)\bigr)\\
&\quad - \int \bigl(\zeta^i(h)(s)-\zeta^0(h)(s)\bigr)^{\top}\xi(e^{u_i^{\top} \xihat+ (Z^i  (h)(s)-Z^0 (h)(s))^{\top}\xi}-1)K^{\Yhat,X}(d\xihat, d\xi)\\
&\quad + \bigl(\zeta^0(h)(s)\bigr)^{\top}Z^0 (h)(s)-\int \bigl(\zeta^0(h)(s)\bigr)^{\top}\xi (e^{-(Z^0 (h)(s))^{\top}\xi}-1)F(d\xi),
\quad i\in\{0,1,\ldots,m\},
\end{split}
\end{equation}
where $Z^i(h)(s):=\int_0^s \zeta^i(h)(u)du$. In the sequel, for a function $g:H^{\lambda}_{m+1} \to H^{\lambda}_{d}$ and a vector $z\in\mathbb{R}^d$, we shall write $g(h)^{\top}z$ for $\sum_{j=1}^d z_j g_j(h)$.
The above specification leads to forward (spread) rates in ~\eqref{eq:SPDE} being a solution of 
\begin{equation}	\label{eq:SPDE1}
\theta^i_t=S_t\eta^i_0+\int_0^t S_{t-s}\kappa^i(\theta_s)ds+\int S_{t-s}\zeta^i(\theta_s) d\beta_s
+\int_0^t \int_{\mathbb{R}^d} S_{t-s}\bigl(\zeta^i(\theta_{s-})\bigr)^{\top}\xi \bigl(\mu(ds,d\xi)-F(d\xi)ds\bigr), 
\end{equation}
for $i\in\{0,1,\ldots,m\}$ and  where $\kappa$ is specified in~\eqref{eq:kappa}. 
An $H^{\lambda}_{m+1}$-valued process $\theta$ satisfying~\eqref{eq:SPDE1} is said to be a \emph{mild solution} to the stochastic partial differential equation (for $i\in\{0,1,\ldots,m\}$)
\begin{align}\label{eq:mildSPDE}
d\theta^i_t=\left(\frac{d}{ds}\theta^i_t+\kappa^i(\theta_t)\right)dt+\zeta^i(\theta_t) d\beta_t+\int_{\mathbb{R}^d}\bigl(\zeta^i(\theta_{t-})\bigr)^{\top}\xi \bigl(\mu(dt,d\xi)-F(d\xi)dt\bigr), \quad \theta^i_0=\eta^i_0.
\end{align}
We are thus concerned with the question of existence of mild solutions to~\eqref{eq:mildSPDE}. Following~\cite{FTT:10}, such SPDEs can be understood as time-dependent transformations of time-dependent SDEs with infinite dimensional state space (for more details, see~\cite[Equation 1.11]{FTT:10}). 

For convenience of notation, we denote $\zeta^{i0}(h):=\zeta^i(h)-\zeta^0(h)$ and  $Z^{i0}(h):=Z^i(h)-Z^0(h)$, for all $i\in\{1,\ldots,m\}$ and $h\in H^{\lambda}_{m+1}$, and decompose $\kappa^i(h)=\kappa_1^i(h)+\kappa_2^i(h)+\kappa_3^i(h)+\kappa_4(h)+\kappa_5(h)$, 
where
\begin{align*}
\kappa_1^i(h)&=-u_i^{\top}c^{\Yhat,X}\bigl(\zeta^i(h)-\zeta^0(h)\bigr)
=-u_i^{\top}c^{\Yhat,X}\bigl(\zeta^{i0}(h)\bigr),\\
\kappa_2^i(h)&=-\bigl(\zeta^i(h)-\zeta^0(h)\bigr)^{\top}(Z^i  (h)-Z^0 (h))
= -\bigl(\zeta^{i0}(h)\bigr)^{\top}Z^{i0}(h), \\
\kappa_3^i(h)&=- \int \bigl(\zeta^i(h)-\zeta^0(h)\bigr)^{\top}\xi\bigl(e^{u_i^{\top} \xihat+ (Z^i  (h)-Z^0 (h))^{\top}\xi}-1\bigr)K^{\Yhat,X}(d\xihat,d\xi)\\
&=- \int \bigl(\zeta^{i0}(h)\bigr)^{\top}\xi\bigl(e^{u_i^{\top} \xihat+ (Z^{i0}(h))^{\top}\xi}-1\bigr)K^{\Yhat,X}(d\xihat,d\xi), \\
\kappa_4(h)&=\bigl(\zeta^0(h)\bigr)^{\top}Z^0 (h), \\
\kappa_5(h)&=-\int \bigl(\zeta^0(h)\bigr)^{\top}\xi \bigl(e^{-(Z^0 (h))^{\top}\xi}-1\bigr)F(d\xi).
\end{align*}

Aiming at establishing existence and uniqueness of a mild solution to~\eqref{eq:mildSPDE}, let us introduce suitable growth and Lipschitz continuity conditions on the volatility functions $\zeta^i$, for all $i=0,1,\ldots,m$, as formulated in the following assumption (compare also with~\cite[Assumption 3.1]{FTT:10}).

\begin{assumption}\label{ass:HJMMcond}
$\zeta^i: H^{\lambda}_{m+1} \to H^{\lambda,0}_d$, for all $i=0,1,\ldots,m$, where $H^{\lambda,0}_k:=\{ h \in H^{\lambda}_k\, | \, \|h(\infty)\|_{k}=0\}$, for $k\in\{1,d\}$.
Moreover, for all $i \in \{0,1, \ldots,m\}$, there exist positive constants $C_i$, $L_i$, $M_i$ such that
\begin{align*}
&\|Z^{i}(h)(s)\|_d \leq C_i,
\qquad\qquad\qquad\qquad\qquad\qquad 
\text{for all }h \in H^{\lambda}_{m+1}, s\in \mathbb{R}_+,\\
&\|\zeta^i(h_1)-\zeta^{i}(h_2)\|_{\lambda, d} \leq L_i\| h_1 -h_2 \|_{\lambda, m+1}, 
\qquad 
\text{for all }h_1,h_2 \in H^{\lambda}_{m+1},\\
&\|\zeta^i(h)\|_{\lambda, d}\leq M_i,
\qquad\qquad\qquad\qquad\qquad\qquad\;\; 
\text{for all }h \in H^{\lambda}_{m+1},\\
\end{align*}
and constants $K_0>0$ and $K_i>0$ such that
\begin{align}
&\int e^{C_0\|\xi\|_d} \bigl(\|\xi\|_d^2 \vee \|\xi\|_d^4\bigr) F(d\xi) \leq K_0, \label{eq:ass4}\\
&\int e^{\|u_i\|_n\|\xihat\|_n +(C_0+C_i)\|\xi\|_d} \bigl(\|\xihat\|_n^2+(\|\xi\|_d^2 \vee \|\xi\|_d^4) \bigr) K^{\Yhat,X}(d\xihat, d\xi) \leq K_i, \quad i\in \{1, \ldots,m\}.\label{eq:ass5}
\end{align}
Furthermore, we suppose that, for each $h \in H^{\lambda}_{m+1}$, the maps $\kappa_3^i(h)$ and $\kappa_5(h)$
are absolutely continuous with weak derivatives
\begin{align}
\frac{d}{ds}\kappa^i_3(h)&=-\int \bigl((\zeta^{i0}(h))^{\top}\xi\bigr)^2\bigl(e^{u_i^{\top} \xihat+ (Z^{i0} (h))^{\top}\xi}\bigr)K^{\Yhat,X}(d\xihat, d\xi)\label{eq:derivativekappa3}\\
&\quad -\int \left(\frac{d}{ds}\bigl(\zeta^{i0}(h)\bigr)^{\top}\xi\right)\bigl(e^{u_i^{\top} \xihat+ (Z^{i0}
(h))^{\top}\xi}-1\bigr)K^{\Yhat,X}(d\xihat, d\xi),\notag\\
\frac{d}{ds}\kappa_5(h)&=\int \bigl((\zeta^0(h))^{\top}\xi\bigr)^2 e^{-(Z^0 (h))^{\top}\xi}F(d\xi)-\int \frac{d}{ds}\bigl(\zeta^0(h)\bigr)^{\top}\xi\bigl(e^{-(Z^0 (h))^{\top}\xi}-1\bigr)F(d\xi).	\label{eq:derivativekappa5}
\end{align}
\end{assumption}

As shown in the next proposition (the rather technical proof of which is postponed to Appendix~\ref{app:proof}), Assumption~\ref{ass:HJMMcond} implies that the drift functions $\kappa^i$, for all $i=0,1,\ldots,m$, are also Lipschitz continuous. This property will be crucial in order to establish existence and uniqueness of a mild solution to \eqref{eq:mildSPDE}.

\begin{proposition}\label{prop:Lip}
Suppose that Assumption~\ref{ass:HJMMcond} is satisfied. Then, for all $i \in \{0,1, \ldots, m\}$, it holds that  $\kappa^{i}(H^{\lambda}_{m+1})\subseteq H^{\lambda,0}_1$ and there exist constants $Q_i>0$ such that 
\begin{align}\label{eq:kappaLip} 
\| \kappa^i(h_1)-\kappa^i(h_2)\|_{\lambda,1} \leq Q_i \| h_1 -h_2 \|_{\lambda, m+1}
\end{align}
for all $h_1, h_2 \in H^{\lambda}_{m+1}$.
\end{proposition}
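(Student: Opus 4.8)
The plan is to establish the Lipschitz estimate \eqref{eq:kappaLip} by treating each of the five summands $\kappa^i_1,\kappa^i_2,\kappa^i_3,\kappa_4,\kappa_5$ separately, since a sum of Lipschitz maps is Lipschitz with constant the sum of the individual constants. Throughout I would use the fact that $H^\lambda_{m+1}$ is a Banach algebra-type space under the $\|\cdot\|_{\lambda,\cdot}$-norm for the purposes at hand, more precisely that pointwise products and the primitive operator $h\mapsto Z^i(h)=\int_0^\cdot\zeta^i(h)(u)\,du$ are well-behaved: under Assumption~\ref{ass:HJMMcond} the maps $h\mapsto\zeta^i(h)$ are bounded (by $M_i$) and Lipschitz (with constant $L_i$) from $H^\lambda_{m+1}$ to $H^{\lambda,0}_d$, and the ``running integral'' $Z^i(h)$ is uniformly bounded by $C_i$ pointwise. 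The key elementary observations I would record first are: (a) if $g_1,g_2:H^\lambda_{m+1}\to H^\lambda_d$ are bounded and Lipschitz, then $h\mapsto g_1(h)^\top g_2(h)$ is Lipschitz into $H^\lambda_1$, by the usual ``add and subtract'' trick $g_1(h_1)^\top g_2(h_1)-g_1(h_2)^\top g_2(h_2)=g_1(h_1)^\top(g_2(h_1)-g_2(h_2))+(g_1(h_1)-g_1(h_2))^\top g_2(h_2)$ together with boundedness of the Hilbert-space multiplication on these weighted Sobolev spaces; (b) $h\mapsto Z^i(h)$ is Lipschitz from $H^\lambda_{m+1}$ into the space of pointwise-bounded functions (and into $H^\lambda_d$), with Lipschitz constant controlled by $L_i$ and $\lambda$; (c) $h\mapsto\zeta^{i0}(h)=\zeta^i(h)-\zeta^0(h)$ and $h\mapsto Z^{i0}(h)$ are bounded and Lipschitz, being differences of such maps.

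With these in hand, $\kappa^i_1(h)=-u_i^\top c^{\Yhat,X}\zeta^{i0}(h)$ is Lipschitz because it is a bounded linear functional applied to the Lipschitz map $\zeta^{i0}$; $\kappa^i_2(h)=-(\zeta^{i0}(h))^\top Z^{i0}(h)$ and $\kappa_4(h)=(\zeta^0(h))^\top Z^0(h)$ are Lipschitz by observation (a). The only genuinely delicate terms are the integral drifts $\kappa^i_3$ and $\kappa_5$, which involve the exponentials $e^{u_i^\top\xihat+(Z^{i0}(h))^\top\xi}$ and $e^{-(Z^0(h))^\top\xi}$ integrated against the (infinite) Lévy measures $K^{\Yhat,X}$ and $F$. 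For these I would argue as follows. First, the pointwise bound $\|Z^{i0}(h)(s)\|_d\le C_i+C_0$ (from the first display in Assumption~\ref{ass:HJMMcond}) gives $|e^{(Z^{i0}(h))^\top\xi}|\le e^{(C_0+C_i)\|\xi\|_d}$, uniformly in $h$ and $s$, and similarly $|e^{-(Z^0(h))^\top\xi}|\le e^{C_0\|\xi\|_d}$; the integrability conditions \eqref{eq:ass4}--\eqref{eq:ass5} then guarantee that all the integrals (and their $s$-derivatives from \eqref{eq:derivativekappa3}--\eqref{eq:derivativekappa5}) are finite and define elements of $H^\lambda_1$, indeed of $H^{\lambda,0}_1$ since $\zeta^i(h)(\infty)=0$ forces the integrands to vanish at infinity. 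To get the Lipschitz estimate, I would use the local Lipschitz property of the exponential: for $|a|,|b|\le R$ one has $|e^a-e^b|\le e^R|a-b|$; applied with $a=u_i^\top\xihat+(Z^{i0}(h_1))^\top\xi$, $b=u_i^\top\xihat+(Z^{i0}(h_2))^\top\xi$ and $R=\|u_i\|_n\|\xihat\|_n+(C_0+C_i)\|\xi\|_d$, this yields
\[
\bigl|e^{u_i^\top\xihat+(Z^{i0}(h_1))^\top\xi}-e^{u_i^\top\xihat+(Z^{i0}(h_2))^\top\xi}\bigr|
\le e^{\|u_i\|_n\|\xihat\|_n+(C_0+C_i)\|\xi\|_d}\,\|\xi\|_d\,\|Z^{i0}(h_1)-Z^{i0}(h_2)\|,
\]
and the factor $\|Z^{i0}(h_1)-Z^{i0}(h_2)\|$ is $\lesssim\|h_1-h_2\|_{\lambda,m+1}$ by observation (b). Combining this with the add-and-subtract decomposition to handle also the $\zeta^{i0}(h_1)-\zeta^{i0}(h_2)$ variation in the prefactor, and then integrating using precisely the weights appearing in \eqref{eq:ass5} (the extra polynomial factors $\|\xihat\|_n^2+(\|\xi\|_d^2\vee\|\xi\|_d^4)$ are exactly what is needed to dominate both the $H^\lambda_1$-norm contributions — which involve $\partial_s$ of these maps, hence the squared terms $((\zeta^{i0}(h))^\top\xi)^2$ from \eqref{eq:derivativekappa3} — and the linear $\|\xi\|_d$ from the Lipschitz estimate), gives $\|\kappa^i_3(h_1)-\kappa^i_3(h_2)\|_{\lambda,1}\le(\text{const})\,\|h_1-h_2\|_{\lambda,m+1}$. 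The term $\kappa_5$ is handled identically, using \eqref{eq:ass4} and the bound $R=C_0\|\xi\|_d$ in place of the above.

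The main obstacle I expect is the bookkeeping needed to control the $H^\lambda_1$-\emph{norm} (not merely the sup-norm) of the integral terms $\kappa^i_3,\kappa_5$: this forces one to differentiate under the integral sign — legitimate by \eqref{eq:derivativekappa3}--\eqref{eq:derivativekappa5} and the domination hypotheses — and then to Lipschitz-estimate the weak derivatives $\frac{d}{ds}\kappa^i_3(h)$, which contain both a quadratic term $((\zeta^{i0}(h))^\top\xi)^2 e^{\cdots}$ and a term involving $\frac{d}{ds}(\zeta^{i0}(h))^\top\xi$; estimating the difference of the quadratic terms in two arguments $h_1,h_2$ requires one more add-and-subtract (factoring $a^2-b^2=(a-b)(a+b)$) and, crucially, a bound on $\|\zeta^{i0}(h)\|_{\lambda,d}$ and on $\|\partial_s\zeta^{i0}(h)\|$, which is where boundedness $\|\zeta^i(h)\|_{\lambda,d}\le M_i$ and the $\|\xi\|_d^2\vee\|\xi\|_d^4$ weights in \eqref{eq:ass4}--\eqref{eq:ass5} enter decisively. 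Since the $L^2_\lambda(\mathbb{R}_+)$-multiplication is not an algebra operation in general, one also has to be a little careful that all products $(\zeta^{i0}(h))^\top\xi$ times scalar functions land in $H^\lambda_1$ with controlled norm; this follows from the fact that the $\zeta^i$ map into $H^{\lambda,0}_d$ and $Z^{i0}(h)$ is uniformly bounded, so the relevant products of a bounded function with an $H^\lambda$-function stay in $H^\lambda$. Assembling the five contributions and setting $Q_i$ equal to their sum completes the proof; the claim $\kappa^i(H^\lambda_{m+1})\subseteq H^{\lambda,0}_1$ follows along the way from the vanishing-at-infinity of $\zeta^i(h)$ and the finiteness of all the integrals.
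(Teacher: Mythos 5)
Your proposal follows essentially the same route as the paper's proof: split $\kappa^i$ into the five summands, handle $\kappa^i_1$, $\kappa^i_2$, $\kappa_4$ via boundedness and Lipschitz continuity of the $\zeta^i$ together with add-and-subtract, and treat $\kappa^i_3$, $\kappa_5$ by combining the pointwise bounds $\|Z^{i0}(h)(s)\|_d\le C_0+C_i$, the local Lipschitz estimate for the exponential, the exponential-moment conditions \eqref{eq:ass4}--\eqref{eq:ass5}, and the weak-derivative formulas \eqref{eq:derivativekappa3}--\eqref{eq:derivativekappa5} (with the $a^2-b^2$ splitting for the quadratic terms), exactly as the paper does in its $I^i_1,\ldots,I^i_4$ and $J_1,\ldots,J_4$ decomposition. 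The argument is correct and no essentially different idea is involved.
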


We are now in a position to state the following theorem, which asserts existence and uniqueness of a mild solution to~\eqref{eq:mildSPDE} and extends~\cite[Theorem 3.2]{FTT:10} to the present multiple curve setting.

\begin{theorem}	\label{thm:existence_uniqueness}
Suppose that Assumption~\ref{ass:HJMMcond} is satisfied. Then, for each initial curve $\theta_0 \in  H_{m+1}^{\lambda}$, 
there exists a unique adapted c\`adl\`ag, mean-square continuous mild  $H^{\lambda}_{m+1}$-valued solution $(\theta_t)_{t\geq 0}$  satisfying 
\[
\mathbb{E}\biggl[\sup_{t\in[0,T]}\|\theta_t\|^2_{\lambda,m+1}\biggr] < \infty, 
\quad \text{for all }T \in \mathbb{R}_+.
\]
\end{theorem}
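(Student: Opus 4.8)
The plan is to recast the system \eqref{eq:mildSPDE} as a single semilinear stochastic evolution equation on the Hilbert space $H^{\lambda}_{m+1}$ driven by the Brownian motion $\beta$ and the compensated Poisson random measure $\mu - F\,dt$, with the generator $A = d/ds$ of the strongly continuous shift semigroup $(S_t)_{t\geq0}$, and then invoke the standard fixed-point existence-and-uniqueness theory for mild solutions (as in \cite{FTT:10}, or in the spirit of \cite[Theorem 3.2]{FTT:10} which we are extending). First I would check that the coefficient maps are admissible: by Assumption~\ref{ass:HJMMcond}, each $\zeta^i$ maps $H^{\lambda}_{m+1}$ into $H^{\lambda,0}_d$ and is globally Lipschitz and bounded, and by Proposition~\ref{prop:Lip} each drift map $\kappa^i$ maps $H^{\lambda}_{m+1}$ into $H^{\lambda,0}_1$ and is globally Lipschitz. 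Stacking these into $\kappa = (\kappa^0,\ldots,\kappa^m)^{\top}: H^{\lambda}_{m+1}\to H^{\lambda}_{m+1}$ and $\zeta = (\zeta^0,\ldots,\zeta^m)^{\top}: H^{\lambda}_{m+1}\to L(H^{\lambda}_d; H^{\lambda}_{m+1})$ (with the jump coefficient $h\mapsto \big((\zeta^i(h))^{\top}\xi\big)_{i}$), the Lipschitz constant of the full system is controlled by $\max_i(Q_i, L_i)$ and the bound by $\max_i(M_i, \|\kappa^i(0)\|)$, so the hypotheses of the abstract existence theorem are met.

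**Main steps.** (1) Set up the Banach fixed-point argument on the space of adapted càdlàg processes $\theta$ with $\mathbb{E}\big[\sup_{t\in[0,T]}\|\theta_t\|^2_{\lambda,m+1}\big]<\infty$, equipped with a suitably weighted norm $\|\theta\|_{T,\rho}^2 = \mathbb{E}\big[\sup_{t\in[0,T]}e^{-\rho t}\|\theta_t\|^2_{\lambda,m+1}\big]$; (2) define the solution map $\Phi$ by the right-hand side of \eqref{eq:SPDE1}, i.e. $\Phi(\theta)_t = S_t\theta_0 + \int_0^t S_{t-s}\kappa(\theta_s)\,ds + \int_0^t S_{t-s}\zeta(\theta_s)\,d\beta_s + \int_0^t\int_{\mathbb{R}^d} S_{t-s}(\zeta(\theta_{s-}))^{\top}\xi\,(\mu(ds,d\xi)-F(d\xi)ds)$, and verify it is well-defined into the same space using the contraction semigroup property $\|S_t\|\leq 1$, Young's inequality for the drift term, and the Burkholder–Davis–Gundy / Itô isometry bounds for the Brownian and Poisson stochastic integrals together with the integrability condition \eqref{eq:ass4}–\eqref{eq:ass5} (which ensures the jump coefficient lies in the relevant $L^2$-space); (3) establish the contraction estimate $\|\Phi(\theta)-\Phi(\theta')\|_{T,\rho}^2 \leq C(T,\rho)\|\theta-\theta'\|_{T,\rho}^2$ with $C(T,\rho)<1$ for $\rho$ large enough (or, alternatively, iterate on short time intervals and concatenate), which gives a unique fixed point; (4) read off the mild solution property, càdlàg regularity, and mean-square continuity from the standard regularization results for stochastic convolutions (the drift and Brownian parts are continuous, the Poisson part is càdlàg), and the moment bound from step (2) applied to the fixed point itself.

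**Main obstacle.** I expect the technical heart to be verifying that the solution map $\Phi$ genuinely maps $H^{\lambda}_{m+1}$-valued processes back into $H^{\lambda}_{m+1}$, and in particular controlling the nonlinear jump coefficient. The subtlety is that $\kappa^i$ and the jump term involve the primitives $Z^i(h)(s) = \int_0^s \zeta^i(h)(u)\,du$ and the exponential moments $e^{u_i^{\top}\hat\xi + (Z^{i0}(h))^{\top}\xi}$; one needs the a priori bound $\|Z^i(h)(s)\|_d \leq C_i$ from Assumption~\ref{ass:HJMMcond} to keep these exponentials integrable against $K^{\hat Y,X}$ and $F$, uniformly in $h$, and one needs the weak-derivative formulas \eqref{eq:derivativekappa3}–\eqref{eq:derivativekappa5} to show $\kappa^i_3(h)$ and $\kappa_5(h)$ actually lie in $H^{\lambda}_1$ (absolute continuity plus square-integrability of the derivative against $e^{\lambda s}ds$), which again reduces to the fourth-moment conditions $\|\xi\|_d^2\vee\|\xi\|_d^4$ in \eqref{eq:ass4}–\eqref{eq:ass5}. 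Once Proposition~\ref{prop:Lip} is in hand this is essentially bookkeeping, but it is bookkeeping that must be done carefully: the precise way the constants $C_i, L_i, M_i, K_i$ propagate through the estimates is where an error would most easily creep in. Since Proposition~\ref{prop:Lip} is proved (in the appendix) and the remaining estimates are the exact multivariate analogues of those in \cite{FTT:10}, the argument goes through, and the uniqueness part is immediate from the contraction.
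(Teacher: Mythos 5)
Your proposal is correct and follows essentially the same route as the paper: the paper's proof consists precisely of verifying the hypotheses of an abstract existence-and-uniqueness result for mild solutions of Lipschitz semilinear SPDEs driven by Wiener processes and Poisson random measures (via Assumption~\ref{ass:HJMMcond} and Proposition~\ref{prop:Lip}) and then invoking \cite[Theorem 2.1]{FTT:10} together with \cite[Corollary 10.9]{FTT:10b}. The only difference is that you sketch the internals of that abstract result (weighted-norm contraction, BDG/isometry bounds, regularity of the stochastic convolutions) rather than citing it, whereas the paper treats it as a black box; the mathematical content carried by the paper itself is exactly the Lipschitz/growth bookkeeping you identify as the main obstacle.
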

\begin{proof}
By virtue of~\cite[Theorem 2.1]{FTT:10}, Assumption~\ref{ass:HJMMcond} and Proposition~\ref{prop:Lip}, ~\cite[Corollary 10.9]{FTT:10b} applies and yields the claimed existence and uniqueness result.
\end{proof}

\begin{remark}
In view of applications, one is often interested in constructing multiple yield curve models producing positive OIS forward rates $f$ as well as positive forward spread rates $\eta^i$, for $i\in\{1,\ldots,m\}$. Similarly as in the case of Theorem \ref{thm:existence_uniqueness}, necessary and sufficient conditions for the positiveness of $f$ and $\eta^i$, for $i\in\{1,\ldots,m\}$, can be obtained by adapting to the present context the results of \cite[Section 4]{FTT:10}.
\end{remark}

\subsection{Construction of $Y^{\perp}$}	
\label{sec:constr_Yperp}

Until the end of the present section, we shall suppose existence and uniqueness of the forward curves $(f,\eta^1,\ldots,\eta^m)$, but we do not necessarily assume that $(X,\Yhat)$ is a L\'evy martingale. We now present a general procedure to construct an It\^o-semimartingale $Y$ compatible with the building blocks $(X,\Yhat,u_1,\ldots,u_m,f_0,\eta^1_0,\ldots,\eta^m_0,\widetilde{\sigma},\sigma^1,\ldots,\sigma^m)$, in the sense of Definition~\ref{def:compatible}, or, equivalently, satisfying the three requirements in step (c) of the model construction procedure described at the beginning of Section~\ref{sec:construction}.

As a preliminary observation, note that, by the definition of local independence (see Definition~\ref{def:local_independence} and Lemma~\ref{lem:local_independence}), constructing a $C$-valued process $Y$ such that $Y^{\parallel}=\Yhat$ (requirement (i) of step (c)) can be achieved by constructing a $C$-valued process $Y^{\perp}$ which is locally independent of $(X,\Yhat)$ and then letting $Y:=\Yhat+Y^{\perp}$. By local independence, the local exponent $\Psi^{Y^{\perp}}$ must then satisfy the following condition, which amounts to the consistency condition (requirement (ii) of step (c)):
\begin{equation}	\label{eq:cons_cond_Yperp}
\Psi^{Y^{\perp}}_t(u_i) = \eta^i_{t-}(t)-\Psi^{\Yhat}_t(u_i),
\qquad\text{ for all }t>0\text{ and }i\in\{1,\ldots,m\}.
\end{equation}

In the special case $n=m$, one can arbitrarily choose the characteristics $c^{Y^{\perp}}$ and $K^{Y^{\perp}}$ and then, for all $i=1,\ldots,n$, specify the drift characteristic $b^{Y^{\perp,i}}$ as the predictable process $b^{Y^{\perp,i}}_t=\eta^i_{t-}(t)-\Psi^{\Yhat}_t(u_i)-\frac{1}{2}c^{Y^{\perp},ii}_t-\int\bigl(e^{\xi^i}-1-\chi(\xi)^i\bigr)K^{Y^{\perp}}_t(d\xi)$, where the vectors $\{u_1,\ldots,u_n\}$ are basis vectors in $\RR^N$, so that \eqref{eq:cons_cond_Yperp} holds by construction.
However, the case $m=n$ is rather unrealistic, since we aim to model the log-spot spreads for different tenors by means of a lower dimensional process $Y$ in order to capture their interdependencies. In the latter case (i.e., when $m > n$), working on the drift characteristic does not suffice any more. Note also that, even in the case $n=m$, one has to impose further conditions in order to ensure that $Y^{\perp}$ lies in $C$. 

For simplicity of presentation, we consider the case where $Y$ is a one-dimensional process taking values in the cone $C=\RR_+$ and $0< u_1 <u_2 < \ldots <u_m$. We aim at constructing a process $Y^{\perp}$, locally independent of $(X,\Yhat)$, such that the consistency condition \eqref{eq:cons_cond_Yperp} is satisfied and the process given in equation \eqref{eq:martspreadQ} is a martingale (requirement (iii) of step (c)).
We shall construct the process $Y^{\perp}$ as a finite activity pure jump process (see however Remark \ref{rem:driftdiff}) on a suitably extended probability space. Note that, since we want $Y^{\perp}$ to take values in $\RR_+$, we need to restrict its jump sizes so that $\Delta Y^{\perp}\geq -Y^{\perp}_-$ a.s. Hence, the construction problem amounts to determine the compensating jump measure of $Y^{\perp}$, which we denote as $K_t\bigl(\omega,Y_{t-}^{\perp}(\omega),d\xi\bigr)dt$ in order to make explicit the dependence of the jump size on $Y^{\perp}_-$.

The crucial consistency condition \eqref{eq:cons_cond_Yperp} will be satisfied if the kernel $K_t(\omega,y,d\xi)$ satisfies, for all $\omega\in\Omega$, $y\in\RR_+$ and $t>0$,
\begin{equation}\label{eq:momentproblem}
\int (e^{u_i\xi}-1)K_t(\omega,y,d\xi)=\eta^i_{t-}(t)(\omega)-\Psi_t^{\Yhat}(u_i)(\omega)
 =: p^i_t(\omega),
 \qquad \text{ for }i=1,\ldots,m.
\end{equation}
Note that the right-hand side is fully determined from the previous steps of the model's construction.
In particular, \eqref{eq:momentproblem} means that, for every $\omega \in \Omega$, $y \in \RR_+$ and $t>0$, $K_t(\omega, y, d\xi)$ needs to be a measure on $\bigl(\RR,\mathcal{B}(\RR)\bigr)$ supported on $[-y, \infty)$. 
Moreover, in order to ensure the martingale property of \eqref{eq:martspreadQ}, we also require $K_t(\omega,y,d\xi)$ to satisfy the following integrability condition, for all $\omega\in\Omega$, $y\in\RR_+$ and $t\geq0$:
\begin{equation}	\label{eq:mart_cond}
\int\bigl(|1\vee\xi|\bigr)e^{u_m\xi^+}K_t(\omega,y,d\xi) = p^{m+1}_t(\omega,y),
\end{equation}
with respect to some family $\bigl\{\bigl(p^{m+1}_t(\cdot,y)\bigr)_{t\geq0}\, |\, y\in\RR_+\bigr\}$ of predictable processes, measurable with respect to $y$ and satisfying $p^{m+1}_t(\omega,y)\leq \Hline$ $\QQ$-a.s. for all $y\in\RR_+$ and $t\geq0$, for some constant $\Hline>0$.

For fixed $\omega\in\Omega$, $y\in\RR_+$ and $t\geq0$, the question of whether such a measure  $K_t(\omega,y,\cdot)$ exists corresponds to the generalized moment problem considered by Krein and Nudelman~\cite{KN:77} and puts some restrictions on the possible values of $\bigl(p_t^1(\omega),\ldots,p_t^m(\omega),p_t^{m+1}(\omega,y)\bigr)$.

Let us briefly recall the formulation of the generalized moment problem. Let $[a,b]  \subset \mathbb{R}$ (with $b$ possibly $\infty$) be some interval and consider a family of linearly independent continuous functions $f_i: [a,b] \to \mathbb{R}$, $i=1,\ldots, m+1$. Let $c \in\RR^{m+1}$. Then the generalized moment problem consists in finding a positive measure $\mu$ on $\bigl([a,b], \mathcal{B}([a,b])\bigr)$ such that
\[
\int_a^b f_i(\xi)\mu(d\xi)=c_i,
\qquad\text{for all }i=1,\ldots,m+1.
\]
Under the condition that there exists some function $h$ being a linear combination of $f_i$, $i=1,\ldots, m+1$, which is strictly positive on $[a,b]$, the result of Krein and Nudelman~\cite[Theorem I.3.4, Theorem III 1.1 and p. 175]{KN:77} states that the generalized moment problem admits a solution if and only if $c$ is an element of the closed conic hull $K(U)$ of 
\[
U=\bigl\{\bigl(f_1(\xi),\ldots, f_{m+1}(\xi)\bigr)\,|\, \xi \in [a,b]\bigr\}.
\]
In our context, this directly implies the following lemma. As a preliminary, let us define the family of functions $g_i(\xi):=e^{u_i\xi}-1$, for $i=1,\ldots,m$, and $g_{m+1}(\xi):=\bigl(|\xi|\vee1\bigr)e^{(u_m\vee 1)|\xi|}$.

\begin{lemma}	\label{lem:existencemeasure}
Let $0<u_1 < \ldots < u_m$.
Then, for every $\omega\in\Omega$, $y\in\RR_+$ and $t\geq0$, there exists a non-negative measure $K_t(\omega,y,d\xi)$ on $\bigl([-y,\infty),\mathcal{B}([-y,\infty))\bigr)$ satisfying \eqref{eq:momentproblem}-\eqref{eq:mart_cond} if and only if
\begin{equation}	\label{eq:exsitencemeasure}
p_t(\omega,y) := 
\bigl(p^1_t(\omega),\ldots,p_t^m(\omega),p^{m+1}_t(\omega,y)\bigr) \in 
K\Bigl(\bigl\{\bigl(g_1(\xi),\ldots,g_m(\xi),g_{m+1}(\xi)\bigr)\,|\,\xi\in[-y,\infty)\bigr\}\Bigr).
\end{equation}
\end{lemma}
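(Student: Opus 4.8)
The plan is to reduce the statement to the Krein--Nudelman characterisation of the generalized moment problem recalled just above the lemma, applied on the (unbounded) interval $[a,b]=[-y,\infty)$. Fix $\omega\in\Omega$, $y\in\RR_+$ and $t\geq0$ throughout, and take as test functions the $m+1$ continuous maps $g_1,\ldots,g_{m+1}$ defined right before the lemma, where $g_i(\xi)=e^{u_i\xi}-1$ for $i\leq m$ and $g_{m+1}(\xi)=(|\xi|\vee1)e^{(u_m\vee1)|\xi|}$. With the identifications $\mu:=K_t(\omega,y,\cdot)$ and $c:=p_t(\omega,y)=\bigl(p^1_t(\omega),\ldots,p^m_t(\omega),p^{m+1}_t(\omega,y)\bigr)$, requiring $\mu$ to be a non-negative measure on $\bigl([-y,\infty),\mathcal{B}([-y,\infty))\bigr)$ satisfying \eqref{eq:momentproblem}-\eqref{eq:mart_cond} is exactly requiring $\mu$ to solve the generalized moment problem $\int_{-y}^{\infty} g_i(\xi)\,\mu(d\xi)=c_i$, $i=1,\ldots,m+1$; note that restricting the support of $\mu$ to $[-y,\infty)$ encodes precisely the constraint $\Delta Y^{\perp}\geq -Y^{\perp}_-=-y$ needed to keep $Y^{\perp}$ in $\RR_+$, so it is built into the choice of interval. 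Since $g_{m+1}(\xi)\geq1$ for all $\xi$, the moment constraint for $i=m+1$ forces $\mu$ to be a finite measure of total mass at most $p^{m+1}_t(\omega,y)\leq\Hline$, which is consistent with the subsequent construction of $Y^{\perp}$ as a finite-activity pure jump process.

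It then remains to verify that $(g_1,\ldots,g_{m+1})$ meets the hypotheses of the Krein--Nudelman theorem on $[-y,\infty)$. Continuity is clear. For linear independence, suppose $\sum_{i=1}^{m+1}c_i g_i\equiv0$ on $[-y,\infty)$; as $\xi\to+\infty$ one has $g_{m+1}(\xi)\sim \xi\,e^{(u_m\vee1)\xi}$, which strictly dominates every $e^{u_i\xi}$ (because $u_m\vee1\geq u_m>u_i$ for $i<m$, while the extra polynomial factor settles the boundary case $u_m\vee1=u_m$), so that $c_{m+1}=0$; the remaining relation $\sum_{i=1}^{m}c_i(e^{u_i\xi}-1)\equiv0$ yields $\sum_i c_i u_i e^{u_i\xi}\equiv0$ upon differentiation, hence $c_i=0$ for all $i$ since the $u_i$ are positive and pairwise distinct. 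Finally, $h:=g_{m+1}$ is a linear combination of the $g_i$ which is strictly positive (indeed $\geq1$) on all of $[-y,\infty)$, which is exactly the positivity condition required by Krein and Nudelman.

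One further point is the non-compactness of $[-y,\infty)$: here we use that $g_{m+1}(\xi)\to+\infty$ while $g_i(\xi)/g_{m+1}(\xi)\to0$ as $\xi\to+\infty$ for each $i\leq m$, so that the projective data $\xi\mapsto\bigl(g_1(\xi),\ldots,g_{m+1}(\xi)\bigr)/g_{m+1}(\xi)$ extend continuously to the one-point compactification $[-y,+\infty]$; this is precisely the setting covered by the unbounded-interval version of the Krein--Nudelman result cited as \cite[p.~175]{KN:77}. Invoking \cite[Theorem I.3.4, Theorem III 1.1 and p.~175]{KN:77}, the generalized moment problem above admits a non-negative solution $\mu$ if and only if $c\in K(U)$ with $U=\bigl\{\bigl(g_1(\xi),\ldots,g_{m+1}(\xi)\bigr)\mid\xi\in[-y,\infty)\bigr\}$, which, translated back through the identifications of the first paragraph, is exactly \eqref{eq:exsitencemeasure}. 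The only genuinely non-routine ingredient is thus the verification that the enlarged system $(g_1,\ldots,g_{m+1})$ satisfies the Krein--Nudelman hypotheses on the unbounded interval --- in particular the choice of the extra function $g_{m+1}$, which must simultaneously be dominant (for linear independence and for the compactification) and strictly positive (for the Markov--Krein positivity condition); everything beyond this is an immediate transcription of the cited theorem.
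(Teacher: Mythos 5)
Your proof is correct and takes essentially the same route as the paper's: a direct invocation of the Krein--Nudelman result \cite[Theorem I.3.4, Theorem III 1.1 and p.~175]{KN:77}, after observing that $g_1,\ldots,g_{m+1}$ are continuous and linearly independent and that $g_{m+1}$ is strictly positive on $[-y,\infty)$. The only difference is that you spell out details the paper leaves implicit (the explicit linear-independence check and the handling of the unbounded interval), which does not change the argument.
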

\begin{proof}
For every fixed  $\omega\in\Omega$, $y\in\RR_+$ and $t\geq0$, the claim follows directly from \cite[Theorem I.3.4, Theorem III 1.1 and p. 175]{KN:77}, noting that the functions $g_i$, $i=1,\ldots,m+1$, are continuous and linearly independent and that the function $g_{m+1}$ is strictly positive. 
\end{proof}

As we are going to show in the remaining part of this section, the construction of a process $Y^{\perp}$ satisfying all the desired properties will be possible as long as there exists a solution to the generalized moment problem. More precisely, in view of Lemma \ref{lem:existencemeasure}, let us formulate the following assumption.

\begin{assumption}	\label{ass:ex_sol_moment}
There exists a family $\bigl\{\bigl(p^{m+1}_t(\cdot,y)\bigr)_{t\geq0}\,|\, y\in\RR_+\bigr\}$ of predictable processes, measurable with respect to $y$, satisfying $p^{m+1}_t(\omega,y)\leq \Hline$ $\QQ$-a.s. for all $y\in\RR_+$ and $t\geq0$, for some constant $\Hline>0$, such that condition \eqref{eq:exsitencemeasure} is satisfied, for all $\omega\in\Omega$, $y\in\RR_+$ and $t\geq0$. 
\end{assumption}

\begin{remark}
If $m=1$, then for any given $p_t^1(\omega)$ and $y>0$, we can always find some $p_t^2(\omega,y)$ such that 
$(p_t^1(\omega), p_t^2(\omega,y))\in K(\{(g_1(\xi), g_2(\xi))\,|\, \xi \in [-y, \infty)\})$. If $y=0$, then $p_t^1(\omega)$ has to be nonnegative. In particular, if $\omega \mapsto p_t^1(\omega)$ is bounded and nonnegative, it is easy to see that Assumption~\ref{ass:ex_sol_moment} is always satisfied.
Similarly, for $m=2$ the conditions $p_t^1(\omega) \geq 0$ and $p_t^2(\omega) \geq \frac{u_2}{u_1}p_t^1(\omega)$ and boundedness (in $\omega$) are sufficient for the validity of Assumption~\ref{ass:ex_sol_moment}.
\end{remark}

The next proposition establishes the existence of a process $Y^{\perp}$ with jump measure $K_t\bigl(\omega,Y^{\perp}_{t-}(\omega),d\xi\bigr)dt$ on an extension of the original probability space $(\Omega, \mathcal{F}, (\mathcal{F}_t)_{t\geq0},\mathbb{Q})$.
We rely on a constructive proof on a specific stochastic basis which is defined as follows (compare also with~\cite[Appendix A]{cfmt11}):
\begin{itemize}
\item $(\widetilde{\Omega}, \mathcal{G}, (\mathcal{G}_t)_{t\geq 0})$ is a filtered space, with $\widetilde{\Omega}:=\Omega \times \Omega'$, $\mathcal{G}_t:=\bigcap_{s>t}\mathcal{F}_s\otimes \mathcal{H}_s$ and $\mathcal{G}=\mathcal{F}\otimes \mathcal{H}$. Here, $(\Omega, \mathcal{F}, (\mathcal{F}_t)_{t\geq0},\mathbb{Q})$ is the probability space on which we worked so far and $(\Omega',\mathcal{H}, \mathcal{H}_t)$ is precisely defined below. Note that we do not assume to have a measure on $(\widetilde{\Omega}, \mathcal{G})$ for the moment. The generic sample element is denoted by $\widetilde{\omega}:=(\omega,\omega')\in \widetilde{\Omega}$.
\item $(\Omega', \mathcal{H})$ is the canonical space of real-valued marked point processes (see e.g.\ \cite{jacod7475}), meaning that $\Omega'$ consists of all c\`adl\`ag piecewise constant functions $\omega': \bigl[0,T_{\infty}(\omega')\bigr) \rightarrow \mathbb{R}$ with $\omega'(0)=0$ and $T_{\infty}(\omega')=\lim_{n\rightarrow\infty} T_n(\omega')\leq \infty$, where $T_n(\omega')$, defined by $T_0=0$ and
\[
  T_n(\omega'):=\inf\bigl\{t > T_{n-1}(\omega')\,|\,
\omega'(t)\neq \omega'(t-)\bigr\}\wedge \infty ,\quad \text{ for }n\ge 1,
\]
are the successive jump times of $\omega'$. We denote by
\[
J_t(\widetilde{\omega}):=J_t(\omega'):=\omega'(t) \textrm{ on } \bigl[0,
T_{\infty}(\omega')\bigr)
\]
the canonical jump process, and let $(\cH_t)_{t\geq0}$ be its natural filtration, i.e., $\mathcal{H}_t = \sigma(J_s\,|\,s \leq t)$, with $\mathcal{H} = \mathcal{H}_{\infty}$. Note that $\{T_n\}_{n\in\mathbb{N}}$ are $(\mathcal{H}_t)$- and $(\mathcal{G}_t)$-stopping times if interpreted as $T_n(\widetilde{\omega})=T_n(\omega')$.
\end{itemize}
It is also useful to introduce the larger filtration $(\overline{\cG}_t)_{t\geq0}$ defined by $\overline{\cG}_t:=\cF_{\infty}\otimes\cH_t$, for all $t\geq0$. In particular, observe that $\cG_t\subseteq\overline{\cG}_t$, for all $t\geq0$, and $\overline{\cG}_0=\cF_{\infty}\otimes\{\emptyset,\Omega'\}$. The proofs of the following results are rather technical and, hence, postponed to Appendix~\ref{app:proof}.

\begin{proposition}	\label{prop:existenceY}
Suppose that Assumption \ref{ass:ex_sol_moment} holds and let $(\widetilde{\Omega}, \mathcal{G}, (\mathcal{G}_t)_{t\geq0})$ and $(\overline{\cG}_t)_{t\geq0}$ be defined as above. 
Then there exists a probability measure $\widetilde{\mathbb{Q}}$ on $(\widetilde{\Omega}, \mathcal{G})$ satisfying $\widetilde{\mathbb{Q}}|_{\mathcal{F}}=\mathbb{Q}$ and a c\`adl\`ag $(\cG_t)$-adapted $\mathbb{R}_+$-valued finite activity pure jump process $Y^{\perp}$  with jump measure $K_t\bigl(\omega,Y_{t-}^{\perp}(\omega,\omega'), d\xi\bigr)dt$ with respect to both filtrations $(\cG_t)_{t\geq0}$ and $(\overline{\cG}_t)_{t\geq0}$ and $K_t(\omega,y,d\xi)$ satisfies \eqref{eq:momentproblem}-\eqref{eq:mart_cond}.
\end{proposition}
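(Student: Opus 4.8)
The plan is to build the process $Y^{\perp}$ directly on the canonical space $\Omega'$ of marked point processes as a finite-activity pure-jump process whose successive jumps are governed by the kernel coming from the moment problem, and then to glue the resulting fibre-wise laws on $\Omega'$ to $\QQ$. The first step is a \emph{measurable selection}. For every $(\omega,t)\in\Omega\times\RR_+$ and every $y\in\RR_+$, Assumption~\ref{ass:ex_sol_moment} together with Lemma~\ref{lem:existencemeasure} provides a non-negative measure $K_t(\omega,y,\cdot)$ on $\bigl([-y,\infty),\mathcal{B}([-y,\infty))\bigr)$ solving \eqref{eq:momentproblem}--\eqref{eq:mart_cond}. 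I would upgrade this pointwise existence to a jointly measurable version --- predictable in $(\omega,t)$, Borel in $y$, and a transition kernel in $d\xi$ --- using the description of the solution set via the closed conic hull in Lemma~\ref{lem:existencemeasure}: by Carath\'eodory one may restrict to solutions supported on at most $m+1$ points, which turns the selection into a finite-dimensional problem to which a measurable selection theorem (e.g.\ the Kuratowski--Ryll-Nardzewski theorem) applies. A key observation, used repeatedly below, is that \eqref{eq:mart_cond} and the elementary bound $(|1\vee\xi|)e^{u_m\xi^+}\geq1$ force $K_t(\omega,y,\RR)\leq p^{m+1}_t(\omega,y)\leq\Hline$, so the selected kernel has total mass bounded uniformly by $\Hline$.

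With such a kernel in hand, I would carry out the classical recursive construction of a marked point process on $(\Omega',\cH)$. Fixing $\omega\in\Omega$ and a chosen initial value $y_0\in\RR_+$, let $Y^{\perp}$ be constant equal to $y_0$ up to the first jump, prescribe the law of the first jump time $T_1$ via the survival function $t\mapsto\exp\bigl(-\int_0^t K_s(\omega,y_0,\RR)\,ds\bigr)$ and, given $T_1=t$, the law of the corresponding mark as $K_t(\omega,y_0,d\xi)/K_t(\omega,y_0,\RR)$, restart at $y_1:=y_0+\xi_1$, and iterate. By the Ionescu--Tulcea theorem this defines, for each $\omega$, a probability measure $\QQ'_{\omega}$ on $(\Omega',\cH)$, and $\omega\mapsto\QQ'_{\omega}(A)$ is $\cF$-measurable for every $A\in\cH$, since the kernel is predictable in $\omega$. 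Because $K_s(\cdot,\cdot,\RR)\leq\Hline$, the jump counting process is dominated by a Poisson process of rate $\Hline$, so $T_{\infty}=\lim_nT_n=\infty$ $\QQ'_{\omega}$-a.s., and $Y^{\perp}:=y_0+J$ is a well-defined $\RR_+$-valued c\`adl\`ag finite-activity pure-jump process, with non-negativity following by induction over the jump times since each jump at a time $t$ lies in $[-Y^{\perp}_{t-},\infty)$ by construction of the kernel. Setting $\QQtilde(d\omega,d\omega'):=\QQ(d\omega)\,\QQ'_{\omega}(d\omega')$ on $(\widetilde{\Omega},\cG)$ gives a probability measure with $\QQtilde|_{\cF}=\QQ$, and $Y^{\perp}$ is adapted to $(\cH_t)\subseteq(\cG_t)\subseteq(\overline{\cG}_t)$.

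It then remains to identify the compensator of the jump measure $\mu^{Y^{\perp}}$ of $Y^{\perp}$. Conditioning on $\cF_{\infty}$ reduces the problem, for $\QQ$-a.e.\ $\omega$, to the statement that under $\QQ'_{\omega}$ the integer-valued random measure of the marked point process constructed above has $(\cH_t)$-compensator $K_t\bigl(\omega,Y^{\perp}_{t-},d\xi\bigr)dt$; this is precisely the converse direction of the canonical construction in \cite{jacod7475}. Integrating over $\omega$ yields the compensator of $\mu^{Y^{\perp}}$ with respect to $\overline{\cG}_t=\cF_{\infty}\otimes\cH_t$ under $\QQtilde$. Finally, since $K_t\bigl(\omega,Y^{\perp}_{t-},d\xi\bigr)dt$ is $(\cG_t)$-predictable (the $p^i$ are predictable and $Y^{\perp}_{t-}$ is left-continuous and $(\cG_t)$-adapted) while $\mu^{Y^{\perp}}$ is $(\cG_t)$-optional, and $\cG_t\subseteq\overline{\cG}_t$, the compensated measure integrated against bounded $(\cG_t)$-predictable integrands is a $\overline{\cG}$-martingale which is moreover $(\cG_t)$-adapted, hence a $(\cG_t)$-martingale; therefore $K_t\bigl(\omega,Y^{\perp}_{t-},d\xi\bigr)dt$ is also the $(\cG_t)$-compensator, and \eqref{eq:momentproblem}--\eqref{eq:mart_cond} hold by the choice of the kernel. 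I expect the main obstacle to be this first step: turning the pointwise solvability of the \emph{random} generalized moment problem (Lemma~\ref{lem:existencemeasure}) into a jointly measurable, predictable kernel, and then verifying that the resulting path-dependent intensity --- which depends on $Y^{\perp}_{t-}$ --- neither destroys the $\cF$-measurability of $\omega\mapsto\QQ'_{\omega}$ nor causes explosion, the latter being exactly what the uniform bound $\Hline$ rules out.
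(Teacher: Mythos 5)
Your construction is correct and its overall architecture matches the paper's: you realize $Y^{\perp}=y_0+J$ on the canonical marked point process space, produce a kernel $\omega\mapsto\QQ'_{\omega}$ on $(\Omega',\cH)$, glue it to $\QQ$ to get $\QQtilde$ with $\QQtilde|_{\cF}=\QQ$, obtain non-negativity from the support $[-y,\infty)$ of the kernel, and get finite activity from the uniform bound $K_t(\omega,y,\RR)\leq p^{m+1}_t(\omega,y)\leq\Hline$ (the paper's non-explosion argument is the same estimate, phrased as $\EE^{\QQtilde}[\mu([0,T]\times\RR)]\leq\Hline T$ rather than Poisson domination). The genuine difference is in how the kernel $\QQ'_{\omega}$ (equivalently $\PP(\omega,d\omega')$) is obtained: the paper invokes \cite[Theorem 3.6]{jacod7475} in one stroke, which directly delivers a unique probability kernel under which the prescribed $(\overline{\cG}_t)$-predictable measure $\nu$ is the compensator of the canonical jump measure, whereas you rebuild that kernel by hand via conditional survival functions, normalized mark distributions and Ionescu--Tulcea, and then verify the compensator property as the ``converse'' of the canonical construction --- essentially reproving the relevant direction of Jacod's theorem; this is more elementary but longer, and you must separately check the $\cF$-measurability of $\omega\mapsto\QQ'_{\omega}(A)$, which Jacod's kernel formulation gives for free. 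Your subsequent passage from the $(\overline{\cG}_t)$- to the $(\cG_t)$-compensator (bounded $(\cG_t)$-predictable integrands, adaptedness plus the tower property) is a spelled-out version of the paper's one-line claim and is fine. Finally, your explicit measurable-selection step (reduction to atomic representing measures plus Kuratowski--Ryll-Nardzewski) addresses a point the paper leaves implicit when it asserts that predictability of $p$ is ``inherited'' by $K_t(\omega,y,d\xi)$; just be aware that Carath\'eodory gives finitely supported solutions only for points of the conic hull itself, so on the boundary of the \emph{closed} conic hull appearing in Lemma~\ref{lem:existencemeasure} the selection argument needs a limiting refinement --- a technicality, not a flaw in the overall scheme, and one the paper does not treat either.
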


The next lemma shows that the semimartingale property as well as the semimartingale characteristics of $(X,\Yhat)$ are not altered when considered on the extended filtered probability space $(\Omega,\cG,(\cG_t)_{t\geq0},\QQtilde)$. Moreover, besides satisfying the consistency condition \eqref{eq:cons_cond_Yperp}, the process $Y^{\perp}$ is locally independent of $(X,\Yhat)$ in the extended filtered probability space.  

\begin{lemma}	\label{lemma:immersion}
Suppose that Assumption \ref{ass:ex_sol_moment} holds and let the process $Y^{\perp}$ be constructed as in Proposition \ref{prop:existenceY}. Then, on the extended filtered probability space $(\Omega,\cG,(\cG_t)_{t\geq0},\QQtilde)$, the following hold:
\begin{enumerate}
\item $(X,\Yhat)$ is an $\RR^{d+1}$-valued semimartingale with the same  characteristics as in the original filtered probability space $(\Omega,\cF,(\cF_t)_{t\geq0},\QQ)$;
\item $Y^{\perp}$ is locally independent of $(X,\Yhat)$. 
Moreover, the process $\bigl(\exp(u_iY^{\perp}_t-\int_0^t\Psi^{Y^{\perp}}_s(u_i)ds)\bigr)_{t\geq0}$ is a $\bigl(\QQtilde,(\overline{\cG}_t)_{t\geq0}\bigr)$-martingale as well as a $\bigl(\QQtilde,(\cG_t)_{t\geq0}\bigr)$-martingale, for all $i=1,\ldots,m$.
\end{enumerate} 
\end{lemma}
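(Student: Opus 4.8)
The proof splits according to the two assertions, and the main tool is the fact that, by the construction carried out in the proof of Proposition~\ref{prop:existenceY}, the enlargement of $(\Omega,\cF,(\cF_t)_{t\geq0},\QQ)$ to $(\widetilde{\Omega},\cG,(\cG_t)_{t\geq0},\QQtilde)$ enjoys the immersion (``$\cH$-'') property. Indeed, $\QQtilde$ disintegrates over $\QQ$, the conditional law $Q_\omega$ on $(\Omega',\cH)$ being that of the canonical marked point process under which $J$ has compensator $K_t(\omega,Y^{\perp}_{t-},d\xi)\,dt$; the associated intensity kernel is $(\cF_t\otimes\cH_t)_{t\geq0}$-predictable and, restricted to $[0,t]$, depends on $\omega$ only through $\cF_t$. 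The plan is to deduce from this that, for every bounded $\cF_\infty$-measurable $F$ and every $s>0$, $\EE^{\QQtilde}[F\mid\cF_s\otimes\cH_s]=\EE^{\QQ}[F\mid\cF_s]$, and then, letting $s\downarrow t$ and using the right-continuity of $(\cF_t)$ (so that $\cF_{t+}=\cF_t$), that $\EE^{\QQtilde}[F\mid\cG_t]=\EE^{\QQ}[F\mid\cF_t]$; since $\QQtilde|_{\cF}=\QQ$, this says precisely that every $(\QQ,(\cF_t))$-(local) martingale is a $(\QQtilde,(\cG_t))$-(local) martingale.

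Part (i) then follows by a standard argument. Writing the canonical decomposition of the $(\QQ,(\cF_t))$-It\^o-semimartingale $(X,\Yhat)$, its local-martingale part remains a $(\QQtilde,(\cG_t))$-local martingale by immersion and its predictable finite-variation part remains $(\cG_t)$-predictable, so by uniqueness this is also the canonical decomposition with respect to $(\cG_t)$ (whence the first characteristic is unchanged); testing the integer-valued random measure $\mu^{(X,\Yhat)}-\nu^{(X,\Yhat)}$ against bounded predictable integrands shows, again by immersion, that the $(\cF_t)$-compensator $\nu^{(X,\Yhat)}$ is also the $(\cG_t)$-compensator (third characteristic unchanged), and the continuous martingale part — hence the second characteristic — is a path functional of the unchanged local-martingale part. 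Thus $(X,\Yhat)$ is a $(\QQtilde,(\cG_t))$-It\^o-semimartingale with the same differential characteristics.

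For the local independence in part (ii): $Y^{\perp}$ is a finite-activity pure-jump process, so it has no continuous local-martingale component and its continuous part is orthogonal to that of $(X,\Yhat)$. Its jumps occur only at the jump times $\{T_n\}$ of $J$; since $(|1\vee\xi|)e^{u_m\xi^+}\geq1$, the bound \eqref{eq:mart_cond} yields $K_t(\omega,y,\RR)\leq\Hline$, so under $Q_\omega$ the jump times of $J$ are those of a point process with intensity bounded by $\Hline$ and therefore, for $\QQ$-a.e.\ $\omega$, almost surely disjoint from the (conditionally fixed, countable) set of jump times of the It\^o-semimartingale $(X,\Yhat)$; consequently $\Delta Y^{\perp}\,\Delta(X,\Yhat)=0$ $\QQtilde$-a.s. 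Combining this with part (i) and with the fact (Proposition~\ref{prop:existenceY}) that the $(\cG_t)$-compensator of the jump measure of $Y^{\perp}$ is $K_t(\omega,Y^{\perp}_{t-},d\xi)\,dt$ --- which does not involve $(X,\Yhat)$ --- the joint local exponent of $(Y^{\perp},X,\Yhat)$ splits additively, so $Y^{\perp}$ is locally independent of $(X,\Yhat)$ in the sense of Definition~\ref{def:local_independence} and Lemma~\ref{lem:local_independence}.

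It remains to treat $M^i_t:=\exp\!\bigl(u_iY^{\perp}_t-\int_0^t\Psi^{Y^{\perp}}_s(u_i)\,ds\bigr)$. Combining \eqref{eq:cons_cond_Yperp} with \eqref{eq:momentproblem} gives $\Psi^{Y^{\perp}}_t(u_i)=\int(e^{u_i\xi}-1)K_t(\omega,Y^{\perp}_{t-},d\xi)$, and since $Y^{\perp}$ equals the sum of its jumps, $M^i$ is the stochastic exponential $\cE$ of the purely discontinuous local martingale obtained by integrating $\xi\mapsto e^{u_i\xi}-1$ against the compensated jump measure $\mu^{Y^{\perp}}-\nu^{Y^{\perp}}$, with $\nu^{Y^{\perp}}(ds,d\xi)=K_s(\omega,Y^{\perp}_{s-},d\xi)\,ds$; as this compensator is the same with respect to $(\cG_t)$ and $(\overline{\cG}_t)$ (Proposition~\ref{prop:existenceY}), $M^i$ is a non-negative local martingale with respect to both filtrations. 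To upgrade it to a true martingale I would verify the sufficient conditions of \cite[Corollary 3.10]{KS:02} (cf.\ Remark~\ref{rem:martcond}) for the semimartingale $u_iY^{\perp}$: a short computation using $0<u_i\leq u_m$, splitting the integrals over $\{\xi<0\}$ (where the relevant integrands are bounded, so dominated by $K_t(\omega,y,\RR)\leq\Hline$) and over $\{\xi\geq0\}$ (where they are dominated by a multiple of $(|1\vee\xi|)e^{u_m\xi^+}$, hence integrate to at most $u_m\,p^{m+1}_t(\omega,y)\leq u_m\Hline$), shows both conditions hold, the exponential-moment condition being met with a deterministic bound of the form $e^{c_0\Hline t}$. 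Hence $M^i$ is a true $(\QQtilde,(\overline{\cG}_t))$-martingale, and since $\cG_t\subseteq\overline{\cG}_t$ and $M^i$ is $(\cG_t)$-adapted, the tower property yields that $M^i$ is a $(\QQtilde,(\cG_t))$-martingale as well. The step I expect to be the main obstacle is the verification of immersion and the attendant transfer of the characteristics: one has to handle with care the right-continuity of $(\cG_t)$ and the circumstance that the intensity of $J$ depends on $Y^{\perp}_{t-}$, a functional of the new coordinate $\omega'$; the martingale upgrade, though technical, is routine once the uniform bound $\Hline$ is in hand.
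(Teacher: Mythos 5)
Your proposal is correct and follows essentially the same route as the paper's proof: you establish the immersion of $(\cF_t)_{t\geq0}$ into $(\cG_t)_{t\geq0}$ under $\QQtilde$ from the $\cF_t$-measurability of the conditional law of the marked point process (hence unchanged characteristics of $(X,\Yhat)$), obtain local independence from the absence of common jumps (absolutely continuous compensator of $Y^{\perp}$ versus the $\cF_\infty$-measurable countable set of jump times of $(X,\Yhat)$), and upgrade the local to a true martingale via Kallsen--Shiryaev's condition $I(0,1)$ using the uniform bound $\Hline$, finishing with the tower property for the smaller filtration. This matches the paper's argument in all essential steps, differing only in presentation details.
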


We are now in a position to prove that, on the extended filtered probability space $(\Omega,\cG,(\cG_t)_{t\geq0},\QQtilde)$, step (c) of the model construction procedure described at the beginning of Section~\ref{sec:construction} can be successfully achieved and, hence, the three requirements of part (iii) of Theorem \ref{th:multicurve} are satisfied.

\begin{theorem}	\label{thm:constr_final}
Suppose that Assumption \ref{ass:ex_sol_moment} holds and let the process $Y^{\perp}$ be constructed as in Proposition \ref{prop:existenceY}.
Then, on the extended filtered probability space $(\Omega,\cG,(\cG_t)_{t\geq0},\QQtilde)$, the process $Y:=\Yhat+Y^{\perp}$ is compatible with the building blocks $(X,\Yhat,u_1,\ldots,u_m,f_0,\eta^1_0,\ldots,\eta^m_0,\widetilde{\sigma},\sigma^1,\ldots,\sigma^m)$, in the sense of Definition~\ref{def:compatible}.
\end{theorem}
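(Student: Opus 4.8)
The plan is to verify directly, on the extended stochastic basis $(\widetilde{\Omega},\cG,(\cG_t)_{t\geq0},\QQtilde)$, the two requirements of Definition~\ref{def:compatible}, reducing them to Proposition~\ref{prop:existenceY}, Lemma~\ref{lemma:immersion}, Proposition~\ref{prop:bondHJM} and Theorems~\ref{th:mart}--\ref{th:multicurve}. First I would settle the decomposition: by Lemma~\ref{lemma:immersion}(ii) the process $Y^{\perp}$ is locally independent of $(X,\Yhat)$ on the extension, so since $Y=\Yhat+Y^{\perp}$ and $\Yhat^{\parallel}=\Yhat$ (Definition~\ref{def:BB}(i)), the characterization of the dependent part recalled in Appendix~\ref{appendix:local_ind} (Definition~\ref{def:local_independence} and Lemma~\ref{lem:local_independence}) yields $Y^{\parallel}=\Yhat$, i.e.\ requirement~(i). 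For the It\^o-semimartingale structure I note that $(X,\Yhat)$ keeps its characteristics on the extension (Lemma~\ref{lemma:immersion}(i)), $Y^{\perp}$ is a finite-activity pure jump process with absolutely continuous compensator $K_t\bigl(\omega,Y^{\perp}_{t-},d\xi\bigr)dt$ (Proposition~\ref{prop:existenceY}), and $\exp\bigl(\int_0^{\cdot}f_s(s)ds\bigr)$ is absolutely continuous; hence $\bigl(X,Y,\exp(\int_0^{\cdot}f_s(s)ds)\bigr)$ is an $\RR^{d+n+1}$-valued It\^o-semimartingale with $C$-valued $Y$-component (here $C=\RR_+$ and $\Yhat,Y^{\perp}\geq0$). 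Since the forward curves $(f,\eta^1,\ldots,\eta^m)$ are assumed to exist (cf.\ Section~\ref{sec:exist_uniq}) and satisfy \eqref{eq:eta} with drifts \eqref{eq:drift_1}--\eqref{eq:drift_2} --- which continues to hold on the extension by immersion --- and since $B(t,T)=e^{-\int_t^T f_t(u)du}$ and $S^{\delta_i}(t,T)=e^{u_i^{\top}Y_t+\int_t^T\eta^i_t(u)du}$, the tuple appearing in Definition~\ref{def:compatible}(ii) is an HJM-type multiple yield curve model in the sense of Definition~\ref{def:multcurve}.

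It remains to verify risk neutrality. For part~(i) of Definition~\ref{def:riskneutral} I would apply Proposition~\ref{prop:bondHJM}: the HJM drift condition $\int_t^T\widetilde{\alpha}_t(u)du=\Psi^X_t\bigl(-\widetilde{\Sigma}_t(T)\bigr)$ comes from integrating \eqref{eq:drift_1} in the maturity variable and using $\widetilde{\Sigma}_t(t)=0$; the consistency condition $\Psi^Z_t(1)=-r_{t-}=-f_{t-}(t)$ holds since $Z_t=-\int_0^t f_s(s)ds$ has absolutely continuous paths; $-\widetilde{\Sigma}(T)\in\cU^X$ follows from Definition~\ref{def:BB}(v) with $i=0$ through the projection identity $\Psi^{\Yhat,X}_t\bigl((0,-\widetilde{\Sigma}^{\top}_t(T))^{\top}\bigr)=\Psi^X_t\bigl(-\widetilde{\Sigma}_t(T)\bigr)$, and is preserved on the extension because the characteristics of $X$ are unchanged; finally the relevant stochastic exponential is the process in \eqref{eq:martYX} for $i=0$, hence a $\QQ$-martingale by Definition~\ref{def:BB}(vi), and being an $\cF$-adapted nonnegative $(\QQtilde,\cG)$-local martingale with constant $\QQtilde$-expectation $1$ (as $\QQtilde|_{\cF}=\QQ$), it is a $(\QQtilde,\cG)$-martingale.

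For part~(ii) of Definition~\ref{def:riskneutral} I would invoke Theorem~\ref{th:multicurve} and check its condition~(iii) for every $T\geq0$ and $i\in\{1,\ldots,m\}$. The key algebraic point is that local independence of $Y^{\perp}$ and $(\Yhat,X)$ yields $\Psi^Y_t(u_i)=\Psi^{\Yhat}_t(u_i)+\Psi^{Y^{\perp}}_t(u_i)$ and $\Psi^{Y,X}_t\bigl((u_i^{\top},v^{\top})^{\top}\bigr)=\Psi^{\Yhat,X}_t\bigl((u_i^{\top},v^{\top})^{\top}\bigr)+\Psi^{Y^{\perp}}_t(u_i)$ for every $v$, so that $\Psi^Y_t(u_i)-\Psi^{Y,X}_t(\cdot)=\Psi^{\Yhat}_t(u_i)-\Psi^{\Yhat,X}_t(\cdot)$ and, by Definition~\ref{def:BB}(v), $\bigl(u_i^{\top},\Sigma^{i\top}(T)-\widetilde{\Sigma}^{\top}(T)\bigr)^{\top}\in\cU^{Y,X}$; integrating \eqref{eq:drift_2} in the maturity variable (again $\Sigma^i_t(t)=\widetilde{\Sigma}_t(t)=0$) then gives exactly \eqref{eq:drift_cond_spread}. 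The consistency condition \eqref{eq:consistencyspreadQ} follows since the local exponent of $Y^{\perp}$ is $\Psi^{Y^{\perp}}_t(u_i)=\int(e^{u_i\xi}-1)K_t(Y^{\perp}_{t-},d\xi)=p^i_t=\eta^i_{t-}(t)-\Psi^{\Yhat}_t(u_i)$ by \eqref{eq:momentproblem}, so that $\Psi^Y_t(u_i)=\eta^i_{t-}(t)$. Finally the process \eqref{eq:martspreadQ} factors, via the same identity, as the $i$-th process in \eqref{eq:martYX} --- upgraded to a $(\QQtilde,\cG)$-martingale from its $\QQ$-martingale property (Definition~\ref{def:BB}(vi)) as in the previous paragraph --- times $\exp\bigl(u_iY^{\perp}_{\cdot}-\int_0^{\cdot}\Psi^{Y^{\perp}}_s(u_i)ds\bigr)$, a $(\QQtilde,\cG)$-martingale by Lemma~\ref{lemma:immersion}(ii); their product is again a $(\QQtilde,\cG)$-martingale since its two factors are driven by locally independent semimartingales on the extension.

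In my view the genuine difficulty is already packaged into Proposition~\ref{prop:existenceY} and Lemma~\ref{lemma:immersion}: one must ensure that, on the enlarged filtered space, $(X,\Yhat)$ retains its characteristics and the H-hypothesis holds (so that $\cF$-martingales stay $(\QQtilde,\cG)$-martingales), that $Y^{\perp}$ carries the prescribed compensator $K_t(\omega,Y^{\perp}_{t-},d\xi)dt$ with respect to $(\cG_t)$, and that $\exp(u_iY^{\perp}_{\cdot}-\int_0^{\cdot}\Psi^{Y^{\perp}}_s(u_i)ds)$ is a true (not merely local) martingale. Granting those inputs, the remainder of the proof is bookkeeping against Definitions~\ref{def:multcurve}--\ref{def:riskneutral}, Theorems~\ref{th:mart}--\ref{th:multicurve} and Proposition~\ref{prop:bondHJM}; the only place that still needs care is the passage from the local to the true martingale property of \eqref{eq:martYX} and \eqref{eq:martspreadQ} on the extension, which is precisely where those immersion properties are invoked.
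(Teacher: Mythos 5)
Your overall route coincides with the paper's: requirement (i) via the local independence of $Y^{\perp}$ and $(X,\Yhat)$ from Lemma~\ref{lemma:immersion}, the consistency condition \eqref{eq:consistencyspreadQ} from \eqref{eq:momentproblem} through the additivity of local exponents, the drift conditions by construction, and the $i=0$ case of Definition~\ref{def:BB}(vi) handling the OIS leg. The one step where your argument, as written, does not hold up is the final martingale claim for \eqref{eq:martspreadQ}: you assert that the product of the (upgraded) martingale \eqref{eq:martYX} and the martingale $\exp\bigl(u_iY^{\perp}_{\cdot}-\int_0^{\cdot}\Psi^{Y^{\perp}}_s(u_i)ds\bigr)$ is again a $(\QQtilde,(\cG_t))$-martingale ``since its two factors are driven by locally independent semimartingales''. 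Local independence only yields that the product is a \emph{local} martingale (this is exactly how the paper argues); it does not by itself give the true martingale property, and since the compensator $K_t(\omega,y,d\xi)$ depends on $\omega\in\Omega$, the two factors are not independent in the probabilistic sense either, so no product-of-independent-martingales shortcut applies. You flag in your closing paragraph that this local-to-true passage ``still needs care'', but you never supply the argument, so the step is a genuine gap.

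The paper closes it as follows: the product is a nonnegative local martingale, hence a supermartingale by Fatou, so it suffices to check that its $\QQtilde$-expectation is constant in $T$. This is done by conditioning on $\overline{\cG}_0=\cF_{\infty}\otimes\{\emptyset,\Omega'\}$: the factor coming from \eqref{eq:martYX} is $\cF_{\infty}$-measurable, while $\exp\bigl(u_iY^{\perp}_{\cdot}-\int_0^{\cdot}\Psi^{Y^{\perp}}_s(u_i)ds\bigr)$ is a martingale with respect to the \emph{larger} filtration $(\overline{\cG}_t)_{t\geq0}$ (this is precisely why Lemma~\ref{lemma:immersion} establishes the martingale property in both filtrations), so its $\overline{\cG}_0$-conditional expectation collapses to $e^{u_iY^{\perp}_0}$; the remaining expectation is then a $\QQ$-expectation (as $\QQtilde|_{\cF}=\QQ$) which is constant by Definition~\ref{def:BB}(vi). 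Incorporating this supermartingale-plus-constant-expectation computation, with the explicit use of the $(\overline{\cG}_t)$-martingale property rather than only the $(\cG_t)$-one, is what your proof still needs; the rest is correct bookkeeping in line with the paper.
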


\begin{remark}	\label{rem:driftdiff}
We want to point out that the present construction can be rather easily extended in order to include non-null drift and diffusion components in the process $Y^{\perp}$, by adapting the proof of Proposition \ref{prop:existenceY} in the spirit of \cite[Theorem A.4]{cfmt11}, under suitable hypotheses on the diffusion component so that the local independence as well as the martingale property are ensured. In a similar way, the requirement that the process $Y^{\perp}$ be of finite activity can also be easily relaxed.
\end{remark}

\section{Model implementation, calibration and tractable specifications}
\label{sec:implementation}

This section is devoted to several aspects related to the practical implementation of HJM-type multiple yield curve models. We start in Section~\ref{subsec:calibr} by  providing some general guidelines to model implementation and calibration. In Section~\ref{sec:gen_pricing} we present model-free valuation formulas for typical interest rate derivatives, while in Section~\ref{affinespecification} we introduce a tractable specification based on affine processes.
Further considerations on the practical applicability of our general HJM framework are given in Section~\ref{sec:relations}, where we shall discuss the relations with specific multiple curve models proposed in the literature, as well as in~\cite{CFGaffine}.

\subsection{General aspects of model implementation and calibration}	\label{subsec:calibr}
We now give some general guidelines for the implementation of an HJM-type multiple yield curve model, assuming that we can observe the following data:
\begin{enumerate}
\item market quotes for linear interest rate derivatives such as overnight indexed swaps, interest rate swaps and basis swaps (see Section~\ref{sec:noopt} below);\label{linearquotes}
\item market quotes for non-linear vanilla European interest rate derivatives such as caps/floors or swaptions (see Section~\ref{sec:option} below). 
\end{enumerate}

Given a set of market quotes for linear interest rate derivatives from \ref{linearquotes}, the preliminary step towards model calibration  consists in bootstrapping the term structures of OIS bonds and LIBOR rates implied by market data (by proceeding e.g. along the lines of~\cite{fries12}). The output of this first step are the initially observed term structures $\{B^M(0,T), \ T\geq 0\}$ and $\{L^M_0(T,T+\delta), \ T\geq 0,\delta\in \cD\}$, from which the spreads $\{S^{\delta,M}(0,T), \ T\geq 0,\delta\in\cD\}$ can be directly obtained (the superscript $M$ is meant to emphasize  the fact that these quantities are implied by market data).
Let us remark that this first part of the calibration procedure is obviously model independent, since it only relies on the general pricing formulas for linear interest rate derivatives given in Section~\ref{sec:noopt} below.

The next step in the implementation of an HJM-type multiple yield curve model requires of course the specification of a concrete model. In particular, in line with the standard HJM modeling approach, the driving semimartingales $(X,Y)$ and the volatilities $\widetilde{\sigma},\sigma^1,\ldots,\sigma^m$ have to be carefully chosen in order to ensure a satisfactory analytical tractability as well as the requirements for the multiplicative spreads introduced in Section~\ref{subsec:model_spread}. 
To this effect, a general model construction procedure has been provided in Section~\ref{sec:construction}, where a risk-neutral HJM-type multiple yield curve model is constructed starting from a given set of building blocks (see Definition~\ref{def:BB}). More specifically, this requires the specification of the driving semimartingale $X$, of the process $Y^{\parallel}$ (the dependent part of $Y$ with respect to $X$) as well as of the volatility processes $\widetilde{\sigma},\sigma^1,\ldots,\sigma^m$.
In particular, if one chooses a time-inhomogeneous L\'evy process as driving process (as considered e.g. in Section~\ref{sec:HJMmodels}), then the model can be implemented by relying on standard implementation and calibration techniques for L\'evy-driven HJM models which are well documented in the literature (see e.g.~\cite[Section  2.6]{phdkluge},~\cite[Chapter 4]{phdkoval},~\cite{EbKov2006},~\cite{ek07} as well as~\cite{DT13} in the case of general  HJM models driven by Brownian motion). We also refer the reader to~\cite[Section 4]{CGNS:13} for a detailed calibration with respect to swaption data of a L\'evy-driven multiple curve HJM model with Vasi\v cek-type volatility functions (see Section~\ref{sec:HJMmodels} for a rapid overview of this model specification).

However, multiple curve extensions of classical HJM specifications and calibration techniques do not represent the only feasible approaches for the practical implementation of an HJM-type multiple yield curve model. Indeed, as will be shown  in Section~\ref{affinespecification} below (and referring to the companion paper~\cite{CFGaffine} for full details), a model specification based on affine processes represents an especially flexible and tractable solution. Besides being able to automatically fit the term structures calibrated from market data (see e.g. Definition~\ref{def:deterministiShift}), an affine specification can easily ensure that spreads are greater than one and ordered with respect to the tenor's length. Moreover, due to the intrinsic analytical tractability of affine processes, Fourier techniques lead to efficient pricing formulas/approximations for caps and swaptions.


In the affine case, the existence of efficient pricing formulas allows to 
 proceed with a standard calibration procedure, consisting in a search for a parameter vector that minimizes the distance between market implied and model implied volatilities. The calibration may be performed over cap or swaption market quotes (currently, the joint calibration to both types of derivatives seems to be an open challenge). In the former case, since the market only provides cap and not caplet volatilities, the calibration routine can be simplified by first bootstrapping a surface of caplet volatilities from the observed surface of cap volatilities, along the lines of~\cite{bloomberg12}. Finally, as we show in~\cite{CFGaffine}, the multiple curve extension of affine Libor models proposed by~\cite{GPSS14} can be regarded as a discrete tenor version of our affine specification, so that the calibration techniques (with respect to caplet data) introduced in that paper can be also employed in our context. 
 
\subsection{General pricing formulae}	\label{sec:gen_pricing}
The quantity $S^{\delta}(t,T)$ plays a pivotal role in the valuation of interest rate products. We here derive clean valuation formulas in the spirit of Appendix~\ref{sec:FRArates} assuming perfect collateralization and a collateral rate equal to the OIS rate.

\subsubsection{Linear products}\label{sec:noopt}
The prices of linear interest rate products (i.e., without optionality features) can be directly expressed in terms of the basic quantities $B(t,T)$ and $S^{\delta}(t,T)$.

\emph{Forward rate agreement.}
A forward rate agreement (FRA) starting at $T$, with maturity $T+\delta$, fixed rate $K$ and notional $N$ is a contract which pays at time $T+\delta$ the following amount
\begin{align*}
\Pi^{FRA}(T+\delta; T,T+\delta,K,N)=N\delta\bigl(L_T(T,T+\delta)-K\bigr).
\end{align*}
The value of such a claim at time $t\leq T$ is
\begin{equation}\label{eq:FRA}
\begin{split}
\Pi^{FRA}(t; T,T+\delta,K,N)&=NB(t,T+\delta)\delta\,\Excond{\QQ^{T+\delta}}{L_T(T,T+\delta)-K}{\cF_t}\\
&=N\bigl(B(t,T)S^{\delta}(t,T)-B(t,T+\delta)(1+\delta K)\bigr).
\end{split}
\end{equation}

\emph{Overnight indexed swap.}
An overnight indexed swap (OIS) is a contract where two counterparties exchange two streams of payments: the first one is computed with respect to a fixed rate $K$, whereas the second one is indexed by an overnight rate (EONIA). Let us denote by $T_1, \ldots, T_n$ the payment dates, with $T_{i+1}-T_i=\delta$ for all $i=1,\ldots,n-1$. The swap is initiated at time $T_0\in[0,T_1)$.  
The value of the OIS at time $t\leq T_0$, with notional $N$, can be expressed as (see e.g.~\cite[Section 2.5]{fitr12})
\begin{align*}
\Pi^{OIS}(t; T_{1},T_{n},K,N)=N\left(B(t,T_0)-B(t,T_n)-K\delta\sum_{i=1}^{n} B(t,T_i)\right).
\end{align*}
Therefore, the OIS rate $K^{OIS}$, which is by definition the value for $K$ such that the OIS contract has zero value at inception, is given by
\begin{align*}
K^{OIS}(T_1,T_n)=\frac{B(t,T_0)-B(t,T_n)}{\delta\sum_{k=1}^{n} B(t,T_k)}.
\end{align*}

\emph{Interest rate swap.}
In an interest rate swap (IRS), two streams of payments are exchanged between two counterparties: the first cash flow is computed with respect to a fixed rate $K$, whereas the second one is indexed by the prevailing Libor rate. 
The value of the IRS at time $t\leq T_0$, where $T_0$ denotes the inception time, is given by
\begin{align}
\Pi^{IRS}(t; T_{1},T_{n},K,N)
&=N\sum_{i=1}^{n}\left(B(t,T_{i-1})S^{\delta}(t,T_{i-1})-B(t,T_{i})(1+\delta K)\right).	\label{eq:IRS}
\end{align}
The swap rate $K^{IRS}$, which is by definition the value for $K$ such that the contract has zero value at inception, is given by
\begin{align*}
K^{IRS}(T_{1},T_{n},\delta)
= \frac{\sum_{i=1}^n\bigl(B(t,T_{i-1})S^{\delta}(t,T_{i-1})-B(t,T_i)\bigr)}{\delta\sum_{i=1}^nB(t,T_i)}
= \frac{\sum_{i=1}^nB(t,T_i)L_t(T_{i-1},T_i)}{\sum_{i=1}^nB(t,T_i)}.
\end{align*}

\emph{Basis swap.}
A basis swap is a special type of interest rate swap where two cash flows related to Libor rates associated to different tenors are exchanged between two counterparties. For instance, a typical basis swap may involve the exchange of the 3M against the 6M Libor rate. Following the standard conventions in the Euro market (see~\cite{AB:13}), the basis swap is equivalent to a long/short position on two different IRS which share the same fixed leg. Let  $\cT^1=\left\{T^1_0,\cdots, T^1_{n_{1}}\right\}$, $\cT^2=\left\{T^2_0,\cdots T^2_{n_{2}}\right\}$ and $\cT^3=\left\{T^3_0,\cdots, T^3_{n_{3}}\right\}$, with $T^1_{n_{1}}=T^2_{n_{2}}=T^2_{n_{3}}$, $\cT^1\subset \cT^2$, $n_1<n_2$ and corresponding tenor lengths $\delta_1 > \delta_2$ (and arbitrary $\delta_3$). The first two tenor structures on the one side and the third on the other are associated to the two floating and to the single fixed leg, respectively.  As usual, we denote by $N$ the notional of the swap (initiated at time $T^1_0=T^2_0=T^3_0$). The value at time $t\leq T^1_0$ is given by 
\begin{align*}
\Pi^{BSW}(t;\cT^1,\cT^2,\cT^3,N)
&=
N\left(\sum_{i=1}^{n_1}\bigl(B(t,T^1_{i-1})S^{\delta^1}(t,T^1_{i-1})-B(t,T^1_i)\bigr)\right.\\
&\quad-\sum_{j=1}^{n_2}\bigl(B(t,T^2_{j-1})S^{\delta^2}(t,T^2_{j-1})-B(t,T^2_j)\bigr)-\left.K\sum_{\ell=1}^{n_3}\delta^3B(t,T^3_{\ell})\right).
\end{align*}
The value $K^{BSW}$ (called \textit{basis swap spread}) such that the value of the contract at initiation is zero is then given by
\begin{align*}
K^{BSW}(\cT^1,\cT^2,\cT^3)&=
\frac{\sum_{i=1}^{n_1}\bigl(B(t,T^1_{i-1})S^{\delta^1}(t,T^1_{i-1})-B(t,T^1_i)\bigr)-\sum_{j=1}^{n_2}\bigl(B(t,T^2_{j-1})S^{\delta^2}(t,T^2_{j-1})-B(t,T^2_j)\bigr)}{\delta_3\sum_{\ell=1}^{n_3}B(t,T^3_{\ell})}.
\end{align*}
It is interesting to observe that, prior to the financial crisis, the value of $K^{BSW}$ used to be (approximately) zero.

\subsubsection{Products with optionality features}	\label{sec:option}
In this section,  we report general valuation formulas  for plain vanilla interest rate products such as European caplets and swaptions. 

\emph{Caplet.}
The price at time $t$ of a caplet with strike price $K$, maturity $T$, settled in arrears at $T+\delta$, is given by
\begin{align*}
\Pi^{CPLT}(t; T,T+\delta,K,N)&=NB_t\delta\,\Excond{}{\frac{1}{B_{T+\delta}}\Bigl(L_T(T,T+\delta)-K\Bigr)^+}{\cF_t}.
\end{align*}
This valuation formula above admits the following representation in terms of $S^{\delta}(t,T)$.
\begin{align}
\Pi^{CPLT}(t; T,T+\delta,K,N)
&=N\Excond{}{\frac{B_t}{B_T}\Bigl(S^{\delta}(T,T)-(1+\delta K)B(T,T+\delta)\Bigr)^+}{\cF_t}	\label{eq:caplet_gen}
\end{align}
\begin{remark}
Note that the valuation formula \eqref{eq:caplet_gen} in the classical single curve setting (i.e., under the assumption that $S^{\delta}(T,T)$ is identically equal to one), reduces to the classical relationship between a caplet and a put option on a zero-coupon bond with strike $1/(1+\delta K)$. 
\end{remark}

\emph{Swaption.}
We consider a standard European payer swaption with maturity $T$, written on a (payer) interest rate swap starting at $T_0=T$ and payment dates $T_1,..., T_n$, with $T_{i+1}-T_i=\delta$ for all $i=1,\ldots,n-1$, with notional $N$. Due to formula~\eqref{eq:IRS}, the value of such a claim at time $t$ is given by
\[
\begin{aligned}
\Pi^{SWPTN}(t; T_{1},T_{n},K,N)
&=N\mathbb{E}^{}\left[\frac{B_t}{B_T}\left(\sum_{i=1}^{n}B(T,T_{i-1})S^{\delta}(T,T_{i-1})-(1+\delta K)B(T,T_i)\right)^+\biggr|\cF_t\right].
\end{aligned}
\]

%

\subsection{Models based on affine processes}\label{affinespecification}

In this section, we propose a flexible and tractable specification of the general framework of Section~\ref{sec:model_spread} based on affine processes. Concerning the analytical properties and the full characterization of such Markov processes on different state spaces we refer to 
\cite{dfs03}, \cite{cfmt11}, and \cite{krm12}.
Throughout this section $\mathbb{T}$ denotes a fixed time horizon and we work on a filtered probability space $(\Omega,\cF,(\cF_t)_{0\leq t\leq \mathbb{T}},\QQ)$, where $\QQ$ is a risk neutral measure under which OIS bonds and FRA contracts when discounted with the OIS bank account are martingales.
In order to define \emph{affine multiple yield curve models} we consider a stochastic process $\cX=\left(X_t,Y_t,Z_t\right)_{0\leq t\leq \mathbb{T}}$ whose state space is denoted by $D=D_X\times \RR^{n+1}$,  meaning that $X$ takes values on $D_X$, which is assumed to be a closed convex subset of a real Euclidean vector space $V$ with scalar product $\langle \cdot, \cdot\rangle$ and $(Y,Z)$ is $\RR^{n+1}$-valued.  The process $X$ will represent the general driving process, the process $Y$ is related to the log-spot spread
exactly as in Section~\ref{sec:model_spread} and $Z$ to the OIS bank account via $B=\exp(-Z)$. In order to qualify for an affine multiple yield curve model the process $\cX=(X,Y,Z)$ has to satisfy the ``affine property'' in the following sense:
\begin{itemize}
\item[A1)] $\cX$ is a stochastically continuous time-homogeneous Markov process with state space $D$;
\item[A2)] the Fourier-Laplace transform of $\cX_t=(X_t,Y_t,Z_t)$ has exponentially affine dependence on the initial states $(x,y,z)$, that is, there exists function $(t,v,u,w)\mapsto \phi(t,v,u,w)$ and $(t,v,u,w)\mapsto \psi(t,v,u,w)$ such that
\begin{align*}
\mathbb{E}\left[e^{\langle v,X_t\rangle+u^{\top}Y_t+wZ_t}\right]=e^{\phi(t,v,u,w)+\langle \psi(t,v,u,w),x\rangle+u^{\top}y+ wz},
\end{align*}
for all $(x,y,z) \in D$, $(v,u,w)\in\mfU$ and $t\in[0,\mathbb{T}]$, where the set $\mfU$ is defined by
\[
\mfU:=\bigl\{\left.\zeta\in (V+\im V)\times \CC^{n+1}\right| \mathbb{E}\bigl[e^{\langle \zeta,\cX_t\rangle}\bigr]<\infty, \, \forall t \in [0,\mathbb{T}]\bigr\}.
\]
\end{itemize}

\begin{remark}
Notice that the above form of the Fourier-Laplace transform implies that the characteristics of $(Y,Z)$ only depend on $X$, whence we do not consider all possible affine processes on $D$.
\end{remark}

We now introduce affine multiple yield curve models via the following definition (see also~\cite{CFGaffine}).

\begin{definition}	\label{def:aff_model}
An \emph{affine multiple yield curve model} is defined via
\begin{enumerate}
\item a process $\cX=\left(X,Y,Z\right)$ satisfying A1) and A2), with $X$ and $(Y,Z)$ taking values in $D_X$ and in $\RR^{n+1}$, respectively, and vectors $u_0=0,u_1,\ldots,u_m\in\RR^n$ satisfying $(0,u_i,1)\in\mfU$, for all $i=0,1,\ldots,m$;
\item a (risk-free) OIS bank account given by $B_t=e^{-Z_t}=e^{\int_0^tr_sds}$, for all $t\geq0$, where
the OIS short rate (determining the process $Z$) satisfies
\[
r_t=l+\langle \lambda,X_t\rangle, \quad t \in [0, \mathbb{T}],
\]
for some $l\in\RR$ and $\lambda\in\RR^n$;
\item a family of (multiplicative) spot spreads $\bigl\{\bigl(S^{\delta_i}(t,t)\bigr)_{t\in[0,\mathbb{T}]},i \in \{1, \ldots, m\}\bigr\}$ modeled as 
\[
S^{\delta_i}(t,t)=e^{u_i^{\top} Y_t},\quad t \in [0, \mathbb{T}], \quad i \in \{1, \ldots, m\},
\]
for the vectors $u_1,\ldots,u_m\in\RR^n$.
\end{enumerate}
\end{definition}

\begin{remark}
Affine multiple yield curve models represent the natural extension of classical affine short rate models to the multi-curve setting, since the latter are usually specified through (i) and (ii) in the above definition.
\end{remark}

\begin{proposition}
In an affine multiple yield curve model the OIS bonds $\bigl\{\bigl(B(t,T)\bigr)_{t\in[0,T]},T\leq\mathbb{T}\bigr\}$
and the multiplicative spreads $\bigl\{\bigl(S^{\delta_i}(t,T)\bigr)_{t\in[0,T]},T\leq\mathbb{T},i\in \{1, \ldots,m\}\bigr\}$ are exponentially affine. More precisely, we have
\begin{equation}	\label{bondaffinemodel}
B(t,T)=\mathbb{E}\left[B_t/B_T|\mathcal{F}_t\right]=\mathbb{E}\left[e^{Z_T-Z_t}|\mathcal{F}_t\right]=e^{\phi(T-t,0,0,1)+\langle\psi(T-t,0,0,1),X_t\rangle},
\end{equation}
and
\begin{equation}	\label{spreadaffinemodel}
S^{\delta_i}(t,T)=\frac{\Excond{}{e^{Z_T+u_i^\top Y_T}}{\cF_t}}{\Excond{}{e^{Z_T}}{\cF_t}}=e^{u_i^\top Y_t+\phi(T-t,0,u_i,1)-\phi(T-t,0,0,1)+\langle \psi(T-t,0,u_i,1)-\psi(T-t,0,0,1),X_t\rangle},
\end{equation}
where $\phi$ and $\psi$ denote the characteristic exponents of the affine process $\mathcal{X}=(X,Y,Z)$ as given in A2).
\end{proposition}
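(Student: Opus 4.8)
The plan is to derive both formulas from three ingredients: the risk-neutrality assumed throughout Section~\ref{affinespecification}, the Markov property together with the time-homogeneity of $\cX=(X,Y,Z)$, and the affine transform formula A2). I would treat the OIS bonds first and then the spreads.

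For the bonds, the starting point is that, by risk-neutrality, $\bigl(B(t,T)/B_t\bigr)_{t\in[0,T]}$ is a $\QQ$-martingale with terminal value $B(T,T)/B_T=1/B_T$; hence $B(t,T)=B_t\,\Econd{1/B_T}{\cF_t}=\Econd{B_t/B_T}{\cF_t}$, and since $B_s=e^{-Z_s}$ this equals $\Econd{e^{Z_T-Z_t}}{\cF_t}$, which is the first chain of equalities in \eqref{bondaffinemodel}. To evaluate it I would condition on $\cF_t$ and invoke the Markov property and time-homogeneity of $\cX$ to restart the process from $\cX_t=(X_t,Y_t,Z_t)$, so that $\Econd{e^{Z_T}}{\cF_t}=\Ex{(X_t,Y_t,Z_t)}{e^{Z_{T-t}}}$; then A2) with $(v,u,w)=(0,0,1)$ --- which belongs to $\mfU$ since $(0,u_0,1)=(0,0,1)\in\mfU$ by Definition~\ref{def:aff_model}(i) --- gives $\Econd{e^{Z_T}}{\cF_t}=e^{\phi(T-t,0,0,1)+\langle\psi(T-t,0,0,1),X_t\rangle+Z_t}$, and multiplying by $e^{-Z_t}$ yields \eqref{bondaffinemodel}.

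For the multiplicative spreads, Lemma~\ref{lem:spreadQT} shows that $\bigl(S^{\delta_i}(t,T)\bigr)_{t\in[0,T]}$ is a $\QQ^T$-martingale, and its terminal value is $S^{\delta_i}(T,T)=e^{u_i^\top Y_T}$ by Definition~\ref{def:aff_model}(iii); hence $S^{\delta_i}(t,T)=\Excond{\QQ^T}{e^{u_i^\top Y_T}}{\cF_t}$. Using the density process $\frac{d\QQ^T}{d\QQ}\big|_{\cF_t}=\frac{B(t,T)}{B_tB(0,T)}$ and Bayes' formula, this rewrites as $\frac{\Econd{e^{u_i^\top Y_T}/B_T}{\cF_t}}{B(t,T)/B_t}=\frac{\Econd{e^{Z_T+u_i^\top Y_T}}{\cF_t}}{\Econd{e^{Z_T}}{\cF_t}}$, which is the first equality in \eqref{spreadaffinemodel}. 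Finally, exactly as for the bonds, the Markov property, time-homogeneity and A2) --- now with $(v,u,w)=(0,u_i,1)\in\mfU$ for the numerator and $(0,0,1)\in\mfU$ for the denominator --- give $\Econd{e^{Z_T+u_i^\top Y_T}}{\cF_t}=e^{\phi(T-t,0,u_i,1)+\langle\psi(T-t,0,u_i,1),X_t\rangle+u_i^\top Y_t+Z_t}$ and $\Econd{e^{Z_T}}{\cF_t}$ as before; taking the quotient, the $Z_t$-terms cancel and \eqref{spreadaffinemodel} follows.

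The hardest part will not be the affine computation, which is mechanical once A2) is available, but the reduction of $S^{\delta_i}(t,T)$ to a ratio of $\QQ$-conditional expectations: this rests on the $\QQ^T$-martingale property of Lemma~\ref{lem:spreadQT} and on the change-of-num\'eraire/Bayes step, and it requires checking that all conditional expectations involved are finite --- which is exactly what the memberships $(0,u_i,1)\in\mfU$, for $i=0,1,\ldots,m$, are there to guarantee --- as well as confirming that the Markov property of $\cX$ under $\QQ$ legitimately allows one to restart the process from $\cX_t$ when passing from the unconditional transform in A2) to its conditional counterpart.
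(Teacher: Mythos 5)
Your proposal is correct and follows essentially the same route as the paper's proof: express $B(t,T)$ as $\Econd{B_t/B_T}{\cF_t}$ via risk neutrality, use the $\QQ^T$-martingale property of $S^{\delta_i}(\cdot,T)$ (Lemma~\ref{lem:spreadQT}) together with $S^{\delta_i}(T,T)=e^{u_i^\top Y_T}$ and Bayes' formula to reduce the spread to the ratio $\Econd{e^{Z_T+u_i^\top Y_T}}{\cF_t}/\Econd{e^{Z_T}}{\cF_t}$, and then apply the affine property A2). Your additional remarks on conditioning via the Markov property and time-homogeneity and on the memberships $(0,u_i,1)\in\mfU$ simply make explicit the details that the paper subsumes under the phrase ``the affine property''.
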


\begin{proof}
The form of the bond prices is simply a consequence of the affine property of $\mathcal{X}$ (under the risk neutral measure $\mathbb{Q}$).
Concerning the form of the spreads it suffices to note that 
\begin{align*}
S^{\delta_i}(t,T)&=\mathbb{E}^{\mathbb{Q}^T}[S^{\delta_i}(T,T)| \mathcal{F}_t]=\mathbb{E}^{\mathbb{Q}^T}[e^{u_i^{\top} Y_T}|\mathcal{F}_t]
=\frac{\mathbb{E}^{\mathbb{Q}}[e^{u_i^{\top} Y_T}/B_T|\mathcal{F}_t]}{\frac{B(t,T)}{B_t}}=\frac{\Excond{}{e^{Z_T+u_i^\top Y_T}}{\cF_t}}{\Excond{}{e^{Z_T}}{\cF_t}}\,.
\end{align*}
The right hand side of~\eqref{spreadaffinemodel} then follows again from the affine property.
\end{proof}

Note that if $Y$ lies in some cone $C \subset \mathbb{R}^n$ and $u_1, \ldots u_m$ satisfy the requirements of Corollary~\ref{cor:ordered}, the multiplicative spreads are greater than one and ordered with respect to the tenor's length. Besides the modeling flexibility and tractability ensured by affine processes, this represents one of the main advantages of the spread specification~\eqref{spreadaffinemodel}.

\begin{remark}
Let us briefly comment on possible specifications of the process $(X,Y,Z)$, in particular with a view to obeying the order relations.
One choice is to take $X$ with values in a cone,  for instance $\mathbb{R}^d_+$ or  the set of positive semidefinite matrices  $S_d^+$, and to specify $Y$  via  $Y=g(X)+L$, where $g$ is an affine function taking values in $\mathbb{R}^n_+$ and $L$ denotes an $\mathbb{R}^n_+$-valued L\'evy process. One concrete simple specification with only one spread is to specify $X$ as $S_2^+$-valued Wishart process (see~\cite{bru91}), set and $Y=X_{11}$ and $Z=-\int_0^{\cdot} X_{22,s} ds$. 
In line with Remark~\ref{rem:short_spread} and in analogy to the bank account, $Y$ can also be specified as $Y=\int_0^{\cdot} q(X_s) ds$, where $q: D_X \to \mathbb{R}^n$ here denotes an affine function.
Concrete model specifications and pricing of interest derivatives in the above introduced affine multiple yield curve setting are considered in the follow-up paper~\cite{CFGaffine}.
\end{remark}

The above definition of affine multiple yield curve model can be directly mapped into the general setup of HJM-type multiple yield curve models from Section \ref{sec:model_spread} via the following proposition, which also shows that the risk neutral property (see Definition~\ref{def:riskneutral}) is satisfied by construction.

\begin{proposition}
Every affine multiple yield curve model is a risk neutral HJM-type multiple yield curve model where
\begin{enumerate}
\item the driving process is $X$;
\item the bank account is given by $B_t=e^{-Z_t}$, for all $t\geq0$;
\item the log-spot spread is given by $\log S^{\delta_i}(t,t)=u_i^\top Y_t$, for all $i=1,\ldots,m$ and $t\geq0$;
\item the forward rate $f_t(T)$ and the forward spread rates $\eta^i_t(T)$ are given by
\begin{align}
f_t(T)&=-F\bigl(\psi(T-t,0,0,1),0,1\bigr)-\bigl\langle R\bigl(\psi(T-t,0,0,1),0,1\bigr),X_t\bigr\rangle,\\
\begin{split}
\eta^i_t(T)&=F\bigl(\psi(T-t,0,u_i,1),u_i,1\bigr)-F\bigl(\psi(T-t,0,0,1),0,1\bigr)\\
&\quad+\bigl\langle R\bigl(\psi(T-t,0,u_i,1),u_i,1\bigr)-R\bigl(\psi(T-t,0,0,1),0,1\bigr),X_t\bigr\rangle,
\end{split}
\end{align}
for all $t\leq T$, $T\geq0$ and $i=1,\ldots,m$, where $F(\psi(t,v,u,w),u,w)=\partial_t \phi(t,v,u,w)$ and $R(\psi(t,v,u,w),u,w)=\partial_t\psi(t,v,u,w)$.
\end{enumerate}
\end{proposition}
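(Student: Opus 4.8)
The plan is to verify directly that the data of an affine multiple yield curve model satisfy Definition~\ref{def:multcurve} and Definition~\ref{def:riskneutral}, using the exponentially-affine expressions~\eqref{bondaffinemodel}--\eqref{spreadaffinemodel} for $B(t,T)$ and $S^{\delta_i}(t,T)$ together with the regularity of the characteristic exponents $(\phi,\psi)$ of the affine process $\mathcal{X}=(X,Y,Z)$, which solve the generalized Riccati equations and are therefore smooth in $t$ on $[0,\mathbb{T}]$, with all time-derivatives locally bounded. The first step is to identify the forward rates: set $f_t(T):=-\partial_T\log B(t,T)$ and $\eta^i_t(T):=\partial_T\log S^{\delta_i}(t,T)$, and differentiate the right-hand sides of~\eqref{bondaffinemodel} and~\eqref{spreadaffinemodel} in $T$, using $\partial_T\bigl[g(T-t)\bigr]=g'(T-t)$ and the identities $\partial_s\phi(s,v,u,w)=F(\psi(s,v,u,w),u,w)$, $\partial_s\psi(s,v,u,w)=R(\psi(s,v,u,w),u,w)$. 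This reproduces exactly the closed-form expressions for $f_t(T)$ and $\eta^i_t(T)$ asserted in part~(iv). The boundary conditions $\phi(0,\cdot)=0$ and $\psi(0,v,u,w)=v$ give $B(t,t)=1$ and $S^{\delta_i}(t,t)=e^{u_i^\top Y_t}$, so the fundamental theorem of calculus yields $B(t,T)=e^{-\int_t^T f_t(u)\,du}$ and $S^{\delta_i}(t,T)=e^{u_i^\top Y_t+\int_t^T\eta^i_t(u)\,du}$, i.e.\ the defining relation of Definition~\ref{def:bond_price_model}(vi) and relation~\eqref{eq:HJMmodel} for the spreads with log-spot rate $u_i^\top Y$.

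The second step is to produce the semimartingale decompositions~\eqref{eq:fwd_classical} and~\eqref{eq:eta}. Both $f_t(T)$ and $\eta^i_t(T)$ are of the form $g_0(t,T)+\langle g_1(t,T),X_t\rangle$ with deterministic coefficients that are smooth in $t$ on $[0,\mathbb{T}]$; in particular $\eta^i_t(T)$ depends on $X_t$ only, not on $Y_t$, as required by the HJM-type model structure. Since $\mathcal{X}$, and hence $X$ and $u_i^\top Y$, is an It\^o-semimartingale, integration by parts gives
\[
f_t(T)=f_0(T)+\int_0^t\bigl(\partial_s g_0(s,T)+\langle\partial_s g_1(s,T),X_s\rangle\bigr)\,ds+\int_0^t\langle g_1(s,T),dX_s\rangle ,
\]
from which one reads off $\widetilde\sigma_t(T)=g_1(t,T)=-R(\psi(T-t,0,0,1),0,1)$ and $\widetilde\alpha_t(T)=\partial_t g_0(t,T)+\langle\partial_t g_1(t,T),X_t\rangle$; an entirely analogous computation applied to~\eqref{spreadaffinemodel} yields $\sigma^i_t(T)$ and $\alpha^i_t(T)$. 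Predictability, $\mathcal{P}\otimes\mathcal{B}(\RR_+)$-measurability and the integrability conditions of Definition~\ref{def:HJMtype}(iii) and Definition~\ref{def:bond_price_model}(iii)--(iv) follow from the local boundedness of $(\phi,\psi)$ and their time-derivatives on $[0,\mathbb{T}]$ together with the affine, hence locally bounded, form of the differential characteristics of $X$. This shows that $(B,f_0,\widetilde\alpha,\widetilde\sigma,X)$ is a bond price model and that each $(u_i^\top Y,\eta^i_0,\alpha^i,\sigma^i,X)$ is an HJM-type model for $\{(S^{\delta_i}(t,T))_{t\in[0,T]},T\geq0\}$, so we have an HJM-type multiple yield curve model in the sense of Definition~\ref{def:multcurve}.

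Finally, risk neutrality is immediate from the affine property. Taking $(v,u,w)=(0,0,1)$ in A2) gives $B(t,T)/B_t=\mathbb{E}[e^{Z_T}\mid\mathcal{F}_t]$, a $\QQ$-martingale, which is condition~(i) of Definition~\ref{def:riskneutral}; and, as in the proof of the preceding proposition, $S^{\delta_i}(t,T)\,B(t,T)/B_t=\mathbb{E}[e^{Z_T+u_i^\top Y_T}\mid\mathcal{F}_t]$ is a $\QQ$-martingale, which by Bayes' formula (exactly as in Lemma~\ref{lem:spreadQT}) is equivalent to $\bigl(S^{\delta_i}(t,T)\bigr)_{t\in[0,T]}$ being a $\QQ^T$-martingale, i.e.\ condition~(ii); a posteriori one may also check that the drifts $\widetilde\alpha,\alpha^i$ just computed coincide with~\eqref{eq:drift_1}--\eqref{eq:drift_2}, as forced by Theorem~\ref{th:multicurve}. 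I expect the only genuinely delicate point to be the second step: justifying the $T$-differentiations and the $t$-integration by parts and checking that the resulting $\widetilde\alpha,\alpha^i,\widetilde\sigma,\sigma^i$ satisfy all the measurability and integrability requirements — everything there rests on the smoothness and local boundedness of the affine exponents $(\phi,\psi)$ on the compact interval $[0,\mathbb{T}]$, for which one invokes the standard regularity theory of affine processes.
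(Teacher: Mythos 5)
Your proposal is correct and follows essentially the same route as the paper: differentiate the exponentially affine expressions for $B(t,T)$ and $S^{\delta_i}(t,T)$ in $T$ (using the Riccati relations $\partial_t\phi=F(\psi)$, $\partial_t\psi=R(\psi)$ and the regularity of affine processes) to obtain $f_t(T)$ and $\eta^i_t(T)$, with risk neutrality holding by construction since discounted bonds and $S^{\delta_i}(\cdot,T)B(\cdot,T)/B$ are conditional expectations under the risk-neutral measure $\QQ$. The only difference is that you spell out the It\^o/integration-by-parts identification of the drift and volatility processes and the measurability/integrability checks, which the paper subsumes under ``direct inspection of the definition''.
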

\begin{proof}
The first three claims follow upon direct inspection of the definition. The expressions for $f_t(T)$ and $\eta^i_t(T)$ are obtained from \eqref{bondaffinemodel} and \eqref{spreadaffinemodel} by simply noting that
\begin{align*}
-\int_t^Tf_t(s)ds&=\phi(T-t,0,0,1)+\langle\psi(T-t,0,0,1),X_t\rangle,\\
\int_t^T\eta_t(s)ds&=\phi(T-t,0,u_i,1)-\phi(T-t,0,0,1)+\langle \psi(T-t,0,u_i,1)-\psi(T-t,0,0,1),X_t\rangle
\end{align*}
and by differentiating both sides, which is possible due to regularity of affine processes (see~\cite{CT:13}). The risk neutral property follows from the fact that $\mathbb{Q}$ is a risk neutral measure. 
\end{proof}


\subsubsection*{A deterministic shift extension} 	\label{sec:shift}
In view of practical implementations, a relevant issue is represented by the capability of the model to provide an exact fit to the initially observed term structures $\{B^M(0,T), T\leq\mathbb{T}\}$ and $\{S^{\delta_i, M}(0,T), T\leq\mathbb{T},\delta_i\in\mathcal{D}\}$ of risk-free bonds and spreads (at date $t=0$). 
In the spirit of \cite{bm01}, we can extend our affine specification in order to ensure an exact fit to the initially observed term structures via the following definition. 

\begin{definition}\label{def:deterministiShift}
A \emph{shifted affine multiple yield curve model} is defined by introducing the following specifications for OIS bonds and multiplicative spreads (compare with Definition~\ref{def:aff_model}):
\begin{align*}
B(t,T)&=\frac{B^M(0,T)}{B^M(0,t)}\frac{e^{\phi(t,0,0,1)+\langle\psi(t,0,0,1),X_0\rangle}}{e^{\phi(T,0,0,1)+\langle\psi(T,0,0,1),X_0\rangle}}e^{\phi(T-t,0,0,1)+\langle\psi(T-t,0,0,1),X_t\rangle}\\
S^{\delta_i}(t,T)&=S^{\delta_i,M}(0,T)\frac{e^{u_i^\top Y_t+\phi(T-t,0,u_i,1)-\phi(T-t,0,0,1)+\langle \psi(T-t,0,u_i,1)-\psi(T-t,0,0,1),X_t\rangle}}{e^{u_i^\top Y_0+\phi(T,0,u_i,1)-\phi(T,0,0,1)+\langle \psi(T,0,u_i,1)-\psi(T,0,0,1),X_0\rangle}}.
\end{align*}
\end{definition}
With the above specification, the term structures obtained by bootstrapping market data can be regarded as inputs of the model, while model parameters are calibrated to volatility surfaces of derivatives like caplets or swaptions. Let us remark that with the above shift extension it is not a priori guaranteed that $S^\delta(t,T)>1$. 
This generalization is related to the Hull-White extension for affine models in the sense that the state independent characteristics of the underlying affine process become time dependent deterministic functions. 
Deterministic shift extensions have been recently employed also in \cite{GM:14}.

\section{Relations with other multiple yield curve modeling approaches}
\label{sec:relations}

In the present section, we briefly discuss how our HJM-type framework relates to several multiple yield curve models that have been recently proposed in the literature. In particular, it will be shown that most of the existing modeling approaches can be recovered from our general setting. For consistency of exposition, we shall adapt to our notation the original notation used in the papers mentioned below. 
For a more detailed study of the relations between our general HJM-type framework and multiple curve models based on affine process we refer to~\cite{CFGaffine}.

\subsection{HJM models}	\label{sec:HJMmodels}

The multiple curve HJM models proposed in \cite{CGNS:13,cre12,mopa10} can be rather easily recovered from our general framework. In particular, similarly as in our approach, \cite{CGNS:13} directly model FRA rates, while the risk-free term structure is modeled as in the classical HJM setup. Discounted prices of risk-free bonds are specified as $B(t,T)/B_t=B(0,T)\exp\bigl(-\int_0^t\widetilde{A}(s,T)ds-\int_0^t\widetilde{\Sigma}(s,T)dX_s\bigr)$, for all $t \leq T$ and $T\geq0$, where $\widetilde{A}$ and $\widetilde{\Sigma}$ are deterministic functions and $X$ is a multivariate L\'evy process. 
The martingale property of $B(\cdot,T)/B$, for every $T\geq0$, is ensured by the classical drift condition $\widetilde{A}(t,T)=\Psi^X\bigl(-\widetilde{\Sigma}(t,T)\bigr)$, for all $0\leq t \leq T$, with $\Psi^X$ denoting the L\'evy exponent of $X$, and the function $\widetilde{\Sigma}$ is assumed to be uniformly bounded.
FRA rates are also modeled via an HJM approach which, up to a deterministic shift, corresponds to the following specification of $S^{\delta}(t,T)$: 
\[	\begin{aligned}
S^{\delta}(t,T)
&= S^{\delta}(0,T)\exp\left(-\int_0^t\Psi^{T,X}\bigl(\Sigma^{\delta}(s,T)\bigr)ds
+\int_0^t\Sigma^{\delta}(s,T)dX_s\right),
\end{aligned}	\]
where $\Sigma^{\delta}(t,T):=\varsigma(t,T,T+\delta)-\widetilde{\Sigma}(t,T+\delta)+\widetilde{\Sigma}(t,T)$, according to the notation of~\cite{CGNS:13}, and $\Psi^{T,X}$ denotes the local exponent of $X$ under the $T$-forward measure $\mathbb{Q}^T$ and where we have used the drift condition (12) of \cite{CGNS:13}. In particular, by means of straightforward computations, one can then obtain a representation of the form $S^{\delta}(t,T)=\exp\bigl(Z^{\delta}_t+\int_t^T\eta^{\delta}_t(s)ds\bigr)$, for all $t \leq T$ and $T\geq0$.
By relying on analogous considerations, it can be shown that the multi-curve HJM models of \cite{cre12} and \cite{mopa10} can be also recovered from our framework\footnote{In particular, in the paper \cite{cre12} (according to the notation used therein), it holds that $Z^{\delta}_t=\int_t^{t+\delta}g_t(s)\,ds$, for any tenor $\delta>0$, with $g_t(T)$ representing the spread between risky and risk-free instantaneous $T$-forward rates.}.

\subsection{Short rate models}

As mentioned in the introduction, models based on short rates have also been proposed for modeling multiple curves, see in particular \cite{ken10,kitawo09,MR14}. This modeling approach can also be embedded within our general framework. In particular, \cite{kitawo09} formulate a simple short rate model that allows for the consistent pricing of fixed income products related to different curves. For simplicity, let us consider the case of two interest rate curves: the discounting curve, denoted by $D$, and the Libor curve, denoted by $L$. To the two curves $D$ and $L$, \cite{kitawo09} associate the short rate processes $(r^D_t)_{t\geq0}$ and $(r^L_t)_{t\geq0}$, with corresponding bond prices
\begin{equation}	\label{KTW-1}
B(t,T) := \EE^{\QQ}\left[e^{-\int_t^Tr^D_udu}\bigr|\cF_t\right]
\qquad\text{and}\qquad
B^{\delta}(t,T) := \EE^{\QQ_L}\left[e^{-\int_t^Tr^L_udu}\bigr|\cF_t\right],
\end{equation}
for all $t\leq T$ and $T\geq0$, where the measure $\QQ_L\sim\QQ$ represents a risk neutral measure with respect to the ``$L$ savings account'' (see \cite{kitawo09}, Section 2).
The Libor rate with tenor $\delta$ is then given by $L_T(T,T+\delta)=\left(1/B^{\delta}(T,T+\delta)-1\right)/\delta$, for all $T\geq0$. Hence, according to our notation\footnote{Note that our multiplicative spread $S^{\delta}(t,T)$ corresponds to $K_L(t,T,T+\delta)$, according to the notation adopted in \cite{kitawo09}.}, 
\begin{equation}	\label{KTW-2}	
S^{\delta}(t,T) 
= \EE^{\QQ^{T+\delta}}\bigl[1+\delta L_T(T,T+\delta)|\cF_t\bigr]\frac{B(t,T+\delta)}{B(t,T)}
= \frac{\EE^{\QQ}\left[e^{-\int_0^Tr^D_udu}\frac{1}{H^L(T,T+\delta)}\bigr|\cF_t\right]}
{\EE^{\QQ}\left[e^{-\int_0^Tr^D_udu}|\cF_t\right]}
= \frac{\EE^{\QQ}\left[e^{Z_T+Y_T}|\cF_t\right]}
{\EE^{\QQ}\left[e^{Z_T}|\cF_t\right]},
\end{equation}
where $H^L(t,T):=B^{\delta}(t,T)/B(t,T)$, $Z_t:=-\int_0^tr^D_udu$ and $Y_t:=-\log H^L(t,t+\delta)$. 
Moreover, it holds that $H^L(t,T)=\EE^{\QQ_L}\bigl[\exp\bigl(-\int_t^Th^L_udu\bigr)|\cF_t\bigr]$, where $(h^L_t)_{t\geq0}$ is an Ornstein-Uhlenbeck spread process. Note also the similarity between the rightmost term in \eqref{KTW-2} and the general affine specification \eqref{spreadaffinemodel} of multiplicative spreads.
Analogous considerations apply to the recent paper \cite{MR14}, where risk-free and risky bond prices are modeled as in \eqref{KTW-1}, but under a common risk-neutral measure $\mathbb{Q}$, and with a ``risky'' short rate obtained by adding to the risk-free rate $(r_t)_{t\geq0}$ a stochastic spread $(s_t)_{t\geq0}$. In particular, for a given tenor $\delta>0$ and for any $0\leq t\leq T$, our multiplicative spread $S^{\delta}(t,T)$ corresponds to the ratio $\bar{\nu}_{t,T}/\nu_{t,T}$, according to the notation of \cite{MR14}.
The short rate model proposed in \cite{ken10} is also rather similar, the main difference being that the risky short rate process is assumed to be specific to each tenor.  

\subsection{Lognormal Libor market models}

Similarly as in the original article~\cite{BGM:97}, we can also obtain a lognormal Libor market model for 
$L_t(T,T+\delta)$ within the above framework. Let $\delta$ be fixed and 
consider the setting of Remark~\ref{rem:short_spread} with $Y$ one-dimensional, given by $Y_t=\int_0^t q_s ds$, and $u=1$.
Assume furthermore that the driving process $X$ in our multiple yield curve model is a standard $d$-dimensional Brownian motion $W$ and suppose that the dynamics of $L_t(T,T+\delta)$ are given by 
\[
dL_t(T,T+\delta)= L_t(T,T+\delta)\beta_t(T)dW^{T+\delta}_t,
\]
where $\beta_t(T)$ is an $\mathbb{R}^d$-valued bounded deterministic function and $(W^{T+\delta}_t)_{t\geq0}$ denotes a $\mathbb{Q}^{T+\delta}$-Brownian motion.
Recalling that 
\[
1+\delta L_t(T,T+\delta)=S^{\delta}(t,T)\bigl(1+\delta L_t^D(T,T+\delta)\bigr)=e^{\int_0^t q_sds+\int_t^T \eta^{\delta}_t(u)du+\int_T^{T+\delta} f_t(u)du},
\]
applying It\^o's formula to both sides and comparing the diffusion coefficients, we obtain
\[
\Sigma_t(T)=\frac{\delta L_t(T,T+\delta)}{1+\delta L_t(T,T+\delta)}\beta_t(T)-\bigl(\widetilde{\Sigma}_t(T+\delta)-\widetilde{\Sigma}_t(T)\bigr).
\]
Supposing differentiability of  $T \mapsto \beta_t(T)$, we can derive an expression for $\sigma_t(T)$
\begin{align*}
\sigma_t(T)&=e^{-\int_0^t q_sds-\int_t^T \eta^{\delta}_t(u)du-\int_T^{T+\delta} f_t(u)du}\left((\eta^{\delta}_t(T)+f_t(T+\delta)-f_t(T))\beta_t(T)-\partial_T \beta_t(T)\right)+\partial_T \beta_t(T)\\
&\quad -\widetilde{\sigma}_t(T+\delta)+\widetilde{\sigma}_t(T).
\end{align*}
In order to study the existence of a solution to the S(P)DE for $\eta^{\delta}$ corresponding to this volatility structure, we can switch -- similarly as in Section~\ref{sec:exist_uniq} --
to the Musiela parametrization.
Under appropriate assumptions on the involved parameters $\beta$, $q$ and $\widetilde{\sigma}$, existence and uniqueness for $\eta^{\delta}$ can be obtained similarly as in Theorem~\ref{thm:existence_uniqueness}. 
This approach thus provides a theoretical justification in the multiple yield curve setting for the market practice to price caplets by means of Black's formula.

\appendix

\section{Pricing under collateral and FRA rates}		\label{sec:FRArates}


Let us here briefly review pricing under perfect collateralization for general derivatives which we then apply to the pricing of FRAs.
For a more detailed discussion on general valuation with collateralization and funding costs, we refer to the growing literature on this topic, e.g.,~\cite{BR:13} and the references therein. We here follow closely~\cite[Section 2.2]{fitr12}. 

Throughout let $(\Omega, \mathcal{F}, (\mathcal{F}_t)_{t\geq0}, \mathbb{P})$ be a filtered probability space, where $\PP$ stands for the statistical/historical probability measure. We consider OIS zero coupon bonds as
basic traded instruments, which play the role of risk-free zero coupon bonds in the classical setting. In order to guarantee no arbitrage we assume that:
\begin{enumerate}
\item there exists an OIS bank account denoted by $(B_t)_{t\geq 0}$ such that $B_t=\exp(\int_0^t r_s ds)$, where $r$ denotes the OIS short rate;
\item there exists an equivalent probability measure $\mathbb{Q}$ such that the OIS bonds for all maturities
are $\mathbb{Q}$-martingales when denominated in units of the OIS bank account.\footnote{Let us remark that in the presence of funding costs absence of arbitrage is implied by the existence of an equivalent measure under which the 
risky assets $S$ present in the market are (local) martingales when discounted with their corresponding funding rate $r^f$. This can be embedded in the classical framework where $\mathbb{Q}$ is a risk neutral measure with the risk-free (OIS) bank account as num\'eraire, by treating $e^{\int_0^{\cdot} -r_s^f+r_s}S$ as traded asset.}
\end{enumerate}
Let now $X$ be an $\mathcal{F}_T$-measurable payoff of some derivative security. We assume here a perfect collateral agreement where  100\% of the derivative's present value $V_t$ is posted in the collateral at any time $t < T$. 
The receiver of the collateral can invest it at risk-free rate $r$, corresponding to the OIS short rate, and has to pay an agreed \emph{collateral rate} $r^c_t$ to the poster of the collateral.
Applying risk neutral pricing, we obtain the following expression for the present value of the collateralized transaction
\[
V_t=\mathbb{E}^{\mathbb{Q}}\left[ e^{-\int_t^T r_s ds}X+\int_t^T e^{-\int_t^s r_u du} (r_s -r^c_s)V_s ds \, \Big |\, \mathcal{F}_t\right].
\]
As shown in~\cite[Appendix A]{fitr12}, this formula is equivalent to 
$
V_t=\mathbb{E}^{\mathbb{Q}}[ e^{-\int_t^T r^c_s ds}X \, |\,  \mathcal{F}_t].
$
Assuming that the collateral rate $r^c$ corresponds to the OIS short rate $r$, which is usually the case, we obtain the classical risk neutral valuation formula.

Since market quotes of FRAs correspond to perfectly collateralized contracts, where the \emph{collateral rate $r^c$ is assumed to be the OIS short rate} $r$, the above pricing approach is applied for the definition of FRA rates. 
As in classical interest rate theory, the FRA rate, denoted by $L_t(T,T+\delta)$, is the rate $K$ fixed at time $t$ such that the value of the FRA contract, whose payoff at time $T+\delta$ is given by $\delta(L_T(T,T+\delta)- K)$
 has value $0$. Therefore, it holds that, for all $t\in[0,T]$ and $T\geq0$,
\[
\mathbb{E}^{\mathbb{Q}}\left[e^{-\int_t^{T+\delta} r_s ds}(L_T(T,T+\delta)-K) \, \big| \, \mathcal{F}_t\right] \stackrel{!}{=}0.
\]
Hence by Bayes formula, 
\begin{align*}
L_t(T,T+\delta)=\mathbb{E}^{\mathbb{Q}^{T+\delta}}\left[L_T(T,T+\delta)\, \big|\, \mathcal{F}_t\right],
\end{align*}
where $\mathbb{Q}^{T+\delta}$ denotes the $(T+\delta)$-forward measure associated with the num\'eraire $B(\cdot,T+\delta)$ and 
density process $\frac{d\mathbb{Q}^{T+\delta}}{d\mathbb{Q}}|_{\mathcal{F}_t}=\frac{B(t,T+\delta)}{B_t B(0,T+\delta)}$. In particular, this provides a rigorous justification for the market practice of taking expression~\eqref{eq:defLibor} as the definition of fair FRA rates.

\section{Foreign exchange analogy}	\label{app:FX}

For simplicity of presentation, let us consider a fixed tenor $\delta$ and define artificial ``risky'' bond prices $B^{\delta}(t,T)$ at time $t$ and maturity $T$ for the tenor $\delta$ by the following relation, for all $t\leq T$ and $T\geq0$,
\[
L_t(T,T+\delta)=:\frac{1}{\delta} \left(\frac{B^{\delta}(t,T)}{B^{\delta}(t,T+\delta)}-1\right).
\]
While the family $\bigl\{\bigl(B(t,T)\bigr)_{t\in[0,T]},T\geq0\bigr\}$ represents prices of \emph{domestic} risk-free bonds (in units of the domestic currency), the family $\bigl\{\bigl(B^{\delta}(t,T)\bigr)_{t\in[0,T]},T\geq0\bigr\}$ can be thought of as representing prices of zero-coupon bonds of a \emph{foreign} ``risky'' economy, expressed in units of the foreign currency.

According to this foreign exchange analogy, one is naturally led to look at the ratio
\begin{equation}	\label{eq:fwd_premium}
R^{\delta}(t,T):=\frac{B(t,T)}{B^{\delta}(t,T)},
\end{equation}
for $t\leq T$ and $T\geq0$, where $B^{\delta}(t,T)$ ($B(t,T)$, resp.) has here to be seen as the \emph{discount factor} for the foreign (domestic, resp.) economy\footnote{Indeed, as explained in~\cite[Section 4.2.1]{MusRut}, bond prices are expressed in units of the respective currencies, while discount factors are simply the corresponding real numbers.}. Note also that $R^{\delta}(T,T)=1$, for all $T\geq0$. 
Following the presentation in~\cite[Section 4.2.1]{MusRut}, the quantity $R^{\delta}(t,T)$ corresponds to the \emph{forward exchange premium} between the domestic and the foreign currency over the time interval $[t,T]$.  Indeed, in standard foreign exchange markets, there is the following no arbitrage relation between the spot exchange rate $Q_t$ (domestic price 
of one unit of the foreign currency) and the forward exchange rate $F(t,T)$ (forward price in domestic currency of one unit of the foreign currency paid at time $T$):
\[
\frac{Q_t}{F(t,T)}=\frac{B(t,T)}{B^{\delta}(t,T)}=R^{\delta}(t,T).
\]

The multiplicative spread $S^{\delta}(t,T)$ introduced in~\eqref{eq:multspread} corresponds now to
\[
S^{\delta}(t,T)
= \frac{1+\delta L_t(T,T+\delta)}{1+\delta L^D_t(T,T+\delta)}
=\frac{B^{\delta}(t,T)}{B(t,T)}\frac{B(t,T+\delta)}{B^{\delta}(t,T+\delta)}
=\frac{R^{\delta}(t,T+\delta)}{R^{\delta}(t,T)},
\]
for all $t\leq T$ and $T\geq0$, while the spot multiplicative spread is simply given by
\[ 
S^{\delta}(T,T)=R^{\delta}(T,T+\delta),
\] 
for all $T\geq0$. In particular, note that $R^{\delta}(T,T+\delta)$ corresponds exactly to the quantity $Q^{\delta}_T$ considered in Section~\ref{subsec:model} which thus has the interpretation of a foreign exchange premium over $[T,T+\delta]$.


Since Libor rates reflect the overall credit risk of the Libor panel, the exchange rate premium $R^{\delta}(t,T+\delta)$ can be seen as a market valuation (at time $t$) of the riskiness of the foreign economy, i.e., of the credit and liquidity quality of the current Libor panel over the period $[t,T+\delta]$.
Moreover, according to the same interpretation, the quantity 
\[
S^{\delta}(t,T)=R^{\delta}(t,T+\delta)/R^{\delta}(t,T)=\mathbb{E}_{\mathbb{Q}^T}[R(T,T+\delta)|\mathcal{F}_t]
\]
is thus an expectation of the riskiness of the future Libor panel over the future time period $[T,T+\delta]$, as seen from the market at time $t$ (calculated as the relative riskiness of the current Libor panel over the period $[t,T+\delta]$ relative to the one over $[t,T]$).
For instance, a large value of $S^{\delta}(t,T)$ would mean that the market anticipates a worsening of the credit quality of the Libor panel on $[T,T+\delta]$ compared to the credit quality on $[t,T]$ as seen at time $t$.

In this sense, the multiplicative spread $S^{\delta}(t,T)$ is a rather natural quantity to model in a multiple curve setting, because it represents the market's expectation at time $t$ (being computed from financial instruments traded at date $t$) of the credit and liquidity quality of the Libor panel over $[T,T+\delta]$.

\section{Local independence and semimartingale decomposition}	\label{appendix:local_ind}

In this section, we let $(X,Y)$ be a general It\^o-semimartingale taking values in $\RR^{d+n}$ and denote by $\Psi^{X,Y}$ its local exponent and by $\Psi^X$ and $\Psi^Y$ the local exponents of $X$ and $Y$, respectively, and let $\mathcal{U}^{X,Y}$ be defined as in Definition~\ref{def:expcomp}.
In view of~\cite[Lemma A.11]{KK13}, the following definition is equivalent to the notion of local independence as given in~\cite[Definition A.10]{KK13}.

\begin{definition}	\label{def:local_independence}
We say that $X$ and $Y$ are \emph{locally independent} if, outside a $d\QQ\otimes dt$-null set, it holds that
\[
\Psi_t^{X,Y}(u_t,v_t)(\omega)=\Psi_t^X(u_t)(\omega)+\Psi_t^Y(v_t)(\omega),
\qquad\text{ for all }(u,v)\in\mathcal{U}^{(X,Y)}.
\]
\end{definition}

Following~\cite[Appendix A.3]{KK13}, let us recall the notion of semimartingale decomposition of $Y$ relative to $X$. We denote by $c^{Y,X}$ and $c^X$ the second local characteristic of $(Y,X)$ and $X$, respectively, and by $K^{Y,X}$ and $K^X$ the third local characteristic of $(Y,X)$ and $X$, respectively. Denote also by $\mu^{Y,X}$ the jump measure of $(Y,X)$. Supposing that $1\in\mathcal{U}^Y$ (i.e., $Y$ is exponentially special, see Proposition~\ref{prop:expcomp}), let
\begin{equation}	\label{eq:dep_part}
Y^{\parallel,i} 
:= \log \mathcal{E}\left(\int_0^{\cdot}\bigl(c_t^{Y^i,X}(c_t^X)^{-1}\bigr)dX^c_t
+\int_0^{\cdot}\int(e^{y^i}-1)\ind_{\{x\neq0\}}\bigl(\mu^{Y,X}(dy,dx,dt)-K^{Y,X}_t(dy,dx)dt\bigr)\right),
\end{equation}
for $i=1,\ldots,n$, where $(c^X)^{-1}$ denotes the pseudoinverse of the matrix $c^X$ and $X^c$ is the continuous local martingale part of $X$ (see~\cite[Proposition I.4.27]{jashi03}). We call $Y^{\parallel}:=(Y^{\parallel,1},\ldots,Y^{\parallel,n})^{\top}$ the \emph{dependent part of $Y$ relative to $X$} and $Y^{\perp}:=Y-Y^{\parallel}$ the \emph{independent part of $Y$ relative to $X$}. The following lemma corresponds to~\cite[Lemma A.22 and Lemma A.23]{KK13}.

\begin{lemma}	\label{lem:local_independence}
Let $(X,Y)$ be an $\RR^{d\times n}$-valued It\^o-semimartingale such that $1\in\mathcal{U}^Y$. Then the following hold:
\begin{enumerate}
\item $Y\mapsto Y^{\parallel}$ is a projection, in the sense that $(Y^{\parallel})^{\parallel}=Y^{\parallel}$;
\item if $Z$ is an It\^o-semimartingale locally independent of $X$, then it holds that $(Z+Y)^{\parallel}=Y^{\parallel}$;
\item $\exp(Y^{\parallel,i})$ is a local martingale, for all $i=1,\ldots,n$;
\item $Y^{\perp}$ and $(Y^{\parallel},X)$ are locally independent semimartingales.
\end{enumerate}
\end{lemma}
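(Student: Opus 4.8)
The statement coincides with Lemmas A.22 and A.23 of~\cite{KK13}, and my plan is to argue directly from the explicit formula~\eqref{eq:dep_part}. Write $N^i$ for the semimartingale inside the stochastic exponential in~\eqref{eq:dep_part}, that is,
\[
N^i := \int_0^{\cdot}c_t^{Y^i,X}(c_t^X)^{-1}\,dX^c_t + \int_0^{\cdot}\int(e^{y^i}-1)\,\ind_{\{x\neq0\}}\bigl(\mu^{Y,X}(dy,dx,dt)-K^{Y,X}_t(dy,dx)\,dt\bigr),
\]
so that $\exp(Y^{\parallel,i})=\mathcal{E}(N^i)$. The first step is to record the elementary properties of $N^i$: its continuous part is the stochastic integral $\int_0^{\cdot}c_t^{Y^i,X}(c_t^X)^{-1}dX^c_t$, a continuous local martingale; its purely discontinuous part is a compensated jump integral, which is a local martingale because $1\in\mathcal{U}^Y$ provides the required integrability through Proposition~\ref{prop:expcomp}; and its jumps are $\Delta N^i_s=(e^{\Delta Y^i_s}-1)\,\ind_{\{\Delta X_s\neq0\}}>-1$. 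Hence $N^i$ is a local martingale with jumps strictly larger than $-1$, so $\mathcal{E}(N^i)$ is a strictly positive local martingale and $Y^{\parallel,i}=\log\mathcal{E}(N^i)$ is a well-defined semimartingale; this is already statement~(iii).

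Next I would identify the second and third local characteristics of $(Y^{\parallel},X)$. By the formula for the stochastic exponential, the continuous local martingale part of $Y^{\parallel,i}$ is $N^{i,c}=\int_0^{\cdot}c_t^{Y^i,X}(c_t^X)^{-1}dX^c_t$; since, $d\QQ\otimes dt$-a.e., each row $c^{Y^i,X}$ lies in the range of the symmetric matrix $c^X$, the pseudoinverse identity $c^{Y^i,X}(c^X)^{-1}c^X=c^{Y^i,X}$ gives $c^{Y^{\parallel,i},X}=c^{Y^i,X}$. On the jump side, $\Delta Y^{\parallel,i}_s=\log(1+\Delta N^i_s)=\Delta Y^i_s\,\ind_{\{\Delta X_s\neq0\}}$, so $(Y^{\parallel},X)$ jumps only on $\{\Delta X\neq0\}$ and there it jumps by $(\Delta Y,\Delta X)$; consequently $\ind_{\{x\neq0\}}\mu^{Y^{\parallel},X}=\ind_{\{x\neq0\}}\mu^{Y,X}$ and $\ind_{\{x\neq0\}}K^{Y^{\parallel},X}=\ind_{\{x\neq0\}}K^{Y,X}$. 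Feeding these identities into~\eqref{eq:dep_part} applied to $Y^{\parallel}$ shows that the stochastic-exponential argument is again $N^i$, whence $(Y^{\parallel})^{\parallel,i}=Y^{\parallel,i}$, which proves~(i).

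For~(ii), I would use that, by the L\'evy--Khintchine representation of the local exponent (Proposition~\ref{prop:expcomp}, together with Definition~\ref{def:local_independence} and~\cite[Lemma A.11]{KK13}), local independence of $Z$ and $X$ forces the cross second characteristic to vanish and the local jump measure $K^{Z,X}$ to be carried by $\{x=0\}\cup\{z=0\}$, i.e.\ $Z$ and $X$ almost surely have no common jumps. Hence $c^{Z^i+Y^i,X}=c^{Z^i,X}+c^{Y^i,X}=c^{Y^i,X}$ and, on $\{x\neq0\}$, the jumps of $(Z+Y,X)$ coincide with those of $(Y,X)$, so $\ind_{\{x\neq0\}}\mu^{Z+Y,X}=\ind_{\{x\neq0\}}\mu^{Y,X}$ and likewise for the compensators; substituting into~\eqref{eq:dep_part} gives $(Z+Y)^{\parallel}=Y^{\parallel}$. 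Finally, for~(iv) I set $Y^{\perp}=Y-Y^{\parallel}$ and verify the additivity of local exponents required in Definition~\ref{def:local_independence} for the pair $Y^{\perp}$ and $(Y^{\parallel},X)$: the continuous martingale part of $Y^{\perp,i}$ is $Y^{i,c}-N^{i,c}$, and the pseudoinverse identities above yield $\langle Y^{\perp,i,c},X^{j,c}\rangle=0$ and $\langle Y^{\perp,i,c},Y^{\parallel,k,c}\rangle=0$, while $\Delta Y^{\perp}_s=\Delta Y_s\,\ind_{\{\Delta X_s=0\}}$ shows that $Y^{\perp}$ never jumps simultaneously with $(Y^{\parallel},X)$. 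Inserting the resulting block-diagonal joint characteristics of $(Y^{\perp},Y^{\parallel},X)$ into the L\'evy--Khintchine formula for $\Psi^{Y^{\perp},(Y^{\parallel},X)}$ gives $\Psi_t^{Y^{\perp},(Y^{\parallel},X)}=\Psi_t^{Y^{\perp}}+\Psi_t^{Y^{\parallel},X}$ outside a $d\QQ\otimes dt$-null set, which is exactly the asserted local independence.

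The main obstacle I anticipate is the bookkeeping around the (possibly degenerate) matrix $c^X$: one must justify $d\QQ\otimes dt$-a.e.\ that each row $c^{Y^i,X}$ lies in the range of $c^X$, so that $c^{Y^i,X}(c^X)^{-1}c^X=c^{Y^i,X}$, and one must verify the integrability conditions — all coming from $1\in\mathcal{U}^Y$ via Proposition~\ref{prop:expcomp} — that turn the various compensated jump integrals into genuine local martingales. Identifying the joint jump measure of the enlarged process $(Y^{\perp},Y^{\parallel},X)$ from that of $(Y,X)$ also requires some care with null sets of jump times.
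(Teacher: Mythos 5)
Your proposal is correct, but note that the paper itself gives no proof of this lemma: it is imported verbatim from \cite{KK13} (Lemmas A.22 and A.23), so there is no in-paper argument to compare against. Your direct verification from the explicit formula \eqref{eq:dep_part} proceeds by exactly the characteristic-based route that the cited source uses, and it is sound: the identification $\Delta Y^{\parallel,i}=\Delta Y^i\,\ind_{\{\Delta X\neq0\}}$ and the continuous martingale part $N^{i,c}$ of $\log\mathcal{E}(N^i)$ are what make (i), (ii) and (iv) reduce to the two identities you isolate. The two technical points you flag are indeed the only nontrivial ones, and both go through: the rows of $c^{Y,X}$ lie $d\QQ\otimes dt$-a.e.\ in the range of $c^X$ because the joint diffusion matrix of $(Y,X)$ is positive semidefinite, so with the Moore--Penrose pseudoinverse $c^{Y^i,X}(c^X)^{-1}c^X=c^{Y^i,X}$ and likewise the brackets $\langle Y^{\perp,i,c},X^{j,c}\rangle$ and $\langle Y^{\perp,i,c},Y^{\parallel,k,c}\rangle$ vanish; and $1\in\mathcal{U}^Y$ via Proposition~\ref{prop:expcomp}(iii), combined with the $(\|\xi\|^2\wedge1)$-integrability of the jump kernel, places $(e^{y^i}-1)\ind_{\{x\neq0\}}$ in the class of admissible integrands for the compensated measure, so $N^i$ and hence $\mathcal{E}(N^i)$ are genuine local martingales, giving (iii). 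Two small additions would make the write-up airtight: in (i) and (ii) you implicitly apply \eqref{eq:dep_part} to $Y^{\parallel}$ and to $Z+Y$, so you should observe that these are again exponentially special (for $Y^{\parallel}$ because $e^{Y^{\parallel,i}}$ is a positive local martingale, for $Z+Y$ because on $\{\Delta X\neq0\}$ its jumps coincide with those of $Y$, which is all the formula requires); and in (ii) and (iv) the passage from the support statement for the compensator to "no common jump times" deserves the one-line remark that the compensator of $\sum_s\ind_{\{\Delta\cdot_s\neq0,\,\Delta X_s\neq0\}}$ vanishes, in line with Definition~\ref{def:local_independence} and \cite[Lemma A.11]{KK13}.
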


\section{Proofs of the results of Section~\ref{sec:construction}}	\label{app:proof}

\subsection*{Proof of Proposition \ref{prop:Lip}}

Let us fix any $i\in\{0,1,\ldots,m\}$.
By the same argument as in~\cite[Corollary 5.12]{F:01}, it can be shown that  $\kappa_4(H^{\lambda}_{m+1}) \subseteq H^{\lambda,0}_1 $ and $\kappa_j^i(H^{\lambda}_{m+1}) \subseteq H^{\lambda,0}_1 $ for $j=1,2$.
In the sequel, $C$ will always denote a positive constant which can vary from line to line. 
The following estimates can be derived similarly as in the proof of~\cite[Proposition 3.2]{FTT:10} (to which we refer the reader for more details), by relying on Assumption~\ref{ass:HJMMcond}, the H\"older inequality and on~\cite[Theorem 2.1]{FTT:10}. 

Concerning $\kappa_3^i(h)$ and $\kappa_5^i(h)$, we have that, for all $h \in H_{m+1}^{\lambda}$, $\xi \in \mathbb{R}^d$ and $s \in \mathbb{R}_+$,
\begin{equation*}
|(\zeta^{i0}(h)(s))^{\top}\xi|\leq C \| \zeta^{i0}(h)\|_{\lambda,d}\| \xi\|_d
\qquad\text{and}\qquad
|(\zeta^0(h)(s))^{\top}\xi| \leq C\| \zeta^0(h)\|_{\lambda,d}\| \xi\|_d,
\end{equation*}
and, for all $\xi \in \mathbb{R}^d$, $\xihat\in\RR^n$ and $s \in \mathbb{R}_+$,
\begin{align*}
\bigl|e^{u_i^{\top} \xihat+ (Z^{i0}(h)(s))^{\top}\xi}-1\bigr|&\leq Ce^{\|u_i\|_n\|\xihat\|_n+(C_0+C_i) \|\xi\|_d}(\|u_i\|_n\|\xihat\|_n+\| \zeta^{i0}(h)\|_{\lambda,d}\| \xi\|_d),\\
\bigl|e^{-(Z^0 (h)(s))^{\top}\xi}-1\bigr|&\leq Ce^{C_0 \|\xi\|_d}\| \zeta^0(h)\|_{\lambda,d}\| \xi\|_d.\\
\end{align*}
These estimates show together with~\eqref{eq:ass4} and~\eqref{eq:ass5} that $\lim_{s \to \infty}\kappa_3^i(h)(s)=0$ and 
$\lim_{s \to \infty}\kappa_5^i(h)(s)=0$.
Moreover, we have
\begin{align*}
&\int_{\mathbb{R_+}} \left(\int \bigl((\zeta^{i0}(h)(s))^{\top}\xi\bigr)^2\bigl(e^{u_i^{\top} \xihat+ (Z^{i0}(h)(s))^{\top}\xi}\bigr)K^{\Yhat,X}(d\xihat, d\xi)\right)^2 e^{\lambda s} ds
\leq C (M_0+M_i)^4K_i^2,\\
&\int_{\mathbb{R_+}} \left( \int \bigl((\zeta^0(h)(s))^{\top}\xi\bigr)^2 e^{-(Z^0 (h)(s))^{\top}\xi}F(d\xi)\right)^2 e^{\lambda s} ds 
\leq CM_0^4K_0^2,\\
&\int_{\mathbb{R_+}} \left(\int \frac{d}{ds}\bigl((\zeta^{i0}(h)(s))^{\top}\xi\bigr)\bigl(e^{u_i^{\top} \xihat+ (Z^{i0}(h)(s))^{\top}\xi}-1\bigr)K^{\Yhat,X}(d\xihat, d\xi)\right)^2e^{\lambda s} ds 
\leq C (M_0+M_i)^2K_i,\\
&\int_{\mathbb{R_+}} \left(\int \frac{d}{ds}\bigl(\zeta^0(h)\bigr)^{\top}\xi\bigl(e^{-(Z^0 (h))^{\top}\xi}-1\bigr)F(d\xi)\right)^2 e^{\lambda s} ds 
\leq CM_0^4K_0^2.
\end{align*}
In view of the form of $\partial_s \kappa^i_3(h)$ and $\partial_s \kappa_5(h)$ as given in \eqref{eq:derivativekappa3}-\eqref{eq:derivativekappa5}, this implies that $\kappa_j^i(H^{\lambda}_{m+1}) \subseteq H^{\lambda,0}_1$ for $j=3,5$. We have thus shown that $\kappa^{i}(H^{\lambda}_{m+1}) \subseteq H^{\lambda,0}_1$.

For $h_1, h_2 \in H^{\lambda}_{m+1}$ we obtain 
\begin{align*}
\| \kappa^i_1(h_1)-\kappa^i_1(h_2)\|_{\lambda,1}&\leq C(L_0+L_i)\|h_1-h_2\|_{\lambda, m+1},\\
\| \kappa^i_2(h_1)-\kappa^i_2(h_2)\|_{\lambda,1}&
\leq C(2M_i+2M_0)(L_0+L_i)\|h_1-h_2\|_{\lambda, m+1},\\
\| \kappa_4(h_1)-\kappa_4(h_2)\|_{\lambda,1}&
\leq C2M_0L_0\|h_1-h_2\|_{\lambda, m+1}.
\end{align*}
Furthermore, due to~\eqref{eq:derivativekappa3} and~\eqref{eq:derivativekappa5}, we can estimate
\begin{align*}
\| \kappa^i_3(h_1)-\kappa^i_3(h_2)\|_{\lambda,1}^2& \leq 4 (I^i_1+I^i_2+I^i_3+I^i_4),\\
\| \kappa_5(h_1)-\kappa_5(h_2)\|_{\lambda,1}^2& \leq 4 (J_1+J_2+J_3+J_4),\\
\end{align*}
where 
\begin{align*}
I^i_1&=\int_{\mathbb{R}_+}\left(\int \bigl((\zeta^{i0}(h_1)(s))^{\top}\xi\bigr)^2e^{u_i^{\top} \xihat}\bigl(e^{(Z^{i0}(h_1)(s))^{\top}\xi}-e^{(Z^{i0}(h_2)(s))^{\top}\xi}\bigr)K^{\Yhat,X}(d\xihat, d\xi)\right)^2 e^{\lambda s} ds,	\\
I^i_2&=\int_{\mathbb{R}_+}\left(\int e^{u_i^{\top} \xihat+ (Z^{i0}(h_2)(s))^{\top}\xi}
\Bigl(\bigl((\zeta^{i0}(h_1)(s))^{\top}\xi\bigr)^2-\bigl((\zeta^{i0}(h_2)(s))^{\top}\xi\bigr)^2\Bigl)
K^{\Yhat,X}(d\xihat, d\xi)\right)^2e^{\lambda s} ds,	\\
I^i_3&=\int_{\mathbb{R}_+}\left(\int \frac{d}{ds}\bigl(\zeta^{i0}(h_1)(s)\bigr)^{\top}\xi e^{u_i^{\top} \xihat}\bigl(e^{(Z^{i0}(h_1)(s))^{\top}\xi}-e^{(Z^{i0}(h_2)(s))^{\top}\xi}\bigr)K^{\Yhat,X}(d\xihat, d\xi)\right)e^{\lambda s} ds,\\
I^i_4&=\int_{\mathbb{R}_+}\left(\int \Bigl(e^{u_i^{\top} \xihat+ (Z^{i0}(h_2)(s))^{\top}\xi}-1\Bigr)
\left(\frac{d}{ds}\bigl(\zeta^{i0}(h_1)(s)\bigr)^{\top}\xi-\frac{d}{ds}(\zeta^{i0}(h_2)(s))^{\top}\xi\right)K^{\Yhat,X}(d\xihat, d\xi)\right)^2e^{\lambda s} ds,	\\
J_1&=\int_{\mathbb{R}_+}\left(\int \bigl((\zeta^0(h_1)(s))^{\top}\xi\bigr)^2 \bigl(e^{-(Z^0 (h_1)(s))^{\top}\xi}-e^{-(Z^0 (h_2)(s))^{\top}\xi}\bigr)F(d\xi)\right)^2e^{\lambda s} ds,\\
J_2&=\int_{\mathbb{R}_+}\left(\int  e^{-(Z^0 (h_2)(s))^{\top}\xi} \Bigl(\bigl((\zeta^0(h_1)(s))^{\top}\xi\bigr)^2-\bigl((\zeta^0(h_2)(s))^{\top}\xi\bigr)^2\Bigr)F(d\xi)\right)^2 e^{\lambda s} ds\\
J_3&=\int_{\mathbb{R}_+}\left(\int \frac{d}{ds}\bigl(\zeta^0(h_1)(s)\bigr)^{\top}\xi\bigl(e^{-(Z^0 (h_1)(s))^{\top}\xi}-e^{-(Z^0 (h_2)(s))^{\top}\xi}\bigr)F(d\xi)\right)^2e^{\lambda s} ds,\\
J_4&=\int_{\mathbb{R}_+}\left(\int \bigl(e^{-(Z^0 (h_2)(s))^{\top}\xi}-1\bigr)\left(\frac{d}{ds}\bigl(\zeta^0(h_1)(s)\bigr)^{\top}\xi-\frac{d}{ds}\bigl(\zeta^0(h_2)(s)\bigr)^{\top}\xi\right)F(d\xi)\right)^2e^{\lambda s} ds,
\end{align*}
We get for all $\xi \in \mathbb{R}^d$, $s \in \mathbb{R}_+$
\begin{align*}
\bigl|e^{(Z^{i0}(h_1)(s))^{\top}\xi}-e^{(Z^{i0}(h_2(s)))^{\top}\xi}\bigr|&\leq Ce^{(C_0+C_i)\| \xi\|_d}\bigl(
\|\zeta^{i}(h_1)-\zeta^{i}(h_2)\|_{\lambda,d}+\|\zeta^{0}(h_1)-\zeta^{0}(h_2)\|_{\lambda,d}\bigr)\| \xi\|_d,\\
\bigl|e^{-(Z^0 (h_1))^{\top}\xi}-e^{-(Z^0 (h_2))^{\top}\xi}\bigr| &\leq  Ce^{C_0\| \xi\|_d}\|\zeta^{0}(h_1)-\zeta^{0}(h_2)\|_{\lambda,d}\| \xi\|_d.
 \end{align*}
 Therefore,
 \begin{align*}
I^i_1&\leq C\int_{\mathbb{R}_+}\bigg(\int \bigl((\zeta^{i0}(h_1)(s))^{\top}\xi\bigr)^2e^{\|u_i\|_n \|\xihat\|_n+(C_0+C_i)\| \xi\|_d}\\
&\times\bigl(
\|\zeta^{i}(h_1)-\zeta^{i}(h_2)\|_{\delta,d}+\|\zeta^{0}(h_1)-\zeta^{0}(h_2)\|_{\lambda,d}\bigr)\| \xi\|_d
 K^{\Yhat,X}(d\xihat, d\xi)\bigg)^2 e^{\lambda s} ds,\\
 & \leq CK^2_i(M_i^4+M_0^4)(L^2_i+L_0^2)\|h_1-h_2\|^2_{\lambda,m+1},\\
 J_1&\leq C\int_{\mathbb{R}_+}\left(\int \bigl((\zeta^0(h_1)(s))^{\top}\xi\bigr)^2 e^{C_0\| \xi\|_d}\|\zeta^{0}(h_1)-\zeta^{0}(h_2)\|_{\lambda,d}\| \xi\|_dF(d\xi)\right)^2e^{\lambda s} ds\\
 &\leq C K^2_0M_0^4L_0^2\|h_1-h_2\|^2_{\lambda,m+1}.
 \end{align*}
 Moreover, for every $s \in \mathbb{R}_+$, we obtain
 \begin{align*}
  &\int e^{\|u_i\|_n \|\xihat\|_n+ (C_0+C_i)\|\xi\|_d}
  \bigl((\zeta^{i0}(h_1)(s))^{\top}\xi+(\zeta^{i0}(h_2)(s))^{\top}\xi\bigr)^2
K^{\Yhat,X}(d\xihat, d\xi)\leq 2C(M_i^2+M_0^2)K_i,\\
&\int e^{C_0\|\xi\|_d} \bigl((\zeta^0(h_1)(s))^{\top}\xi+(\zeta^0(h_2)(s))^{\top}\xi\bigr)^2F(d\xi)\leq 2CM_0^2K_0.
 \end{align*}
 Hence,
 \begin{align*}
 I^i_2&\leq 2C(M_i+M_0)K_i\int e^{\|u_i\|_n \|\xihat\|_n+ (C_0+C_i)\|\xi\|_d}\\
 &\quad \times \int_{\mathbb{R}_+}\left(\bigl(\zeta^{i}(h_1)(s)-\zeta^{i}(h_2)(s)\bigr)^{\top}\xi+\bigl(\zeta^0(h_2)(s)-\zeta^0(h_1)(s)\bigr)^{\top}\xi\right)^2e^{\lambda s} ds K^{\Yhat,X}(d\xihat, d\xi)\\
 &\leq  2C(M_i+M_0)K_i^2 (L_0^2+L_i^2)\|h_1-h_2\|^2_{\lambda,m+1},\\
 J_2 &\leq 2CM_0K_0\int e^{C_0\|\xi\|_d}\int_{\mathbb{R}_+}\left(\bigl(\zeta^0(h_2)(s)-\zeta^0(h_1)(s)\bigr)^{\top}\xi\right)^2e^{\lambda s} ds F(d\xi)\\
 &\leq 2CM_0K_0^2 L_0^2\|h_1-h_2\|^2_{\lambda,m+1}.
 \end{align*}
 Moreover,
 \begin{align*}
 I^i_3&\leq CK_i(L^2_i+L_0^2)\|h_1-h_2\|^2_{\lambda,m+1}\int e^{\|u_i\|_n \|\xihat\|_n+(C_0+C_i)\| \xi\|_d}\int_{\mathbb{R}_+}  \left(\frac{d}{ds}\bigl(\zeta^{i0}(h_1)(s)\bigr)^{\top}\xi\right)^2
  e^{\lambda s} ds K^{\Yhat,X}(d\xihat, d\xi)\\
  &\leq  CK^2_i(L^2_i+L_0^2)(M_0+M_i)\|h_1-h_2\|^2_{\lambda,m+1},\\
 J_3 &\leq CK_0L_0^2\|h_1-h_2\|^2_{\lambda,m+1}\int e^{C_0\|\xi\|_d}\int_{\mathbb{R}_+}\left(\frac{d}{ds}\bigl(\zeta^0(h_1)(s)\bigr)^{\top}\xi\right)^2 e^{\lambda s} ds F(d\xi)\\
  &\leq  CK^2_0L_0^2M_0\|h_1-h_2\|^2_{\lambda,m+1}.
 \end{align*}
Finally, we have 
 \begin{align*}
 I^i_4&\leq C\int_{\mathbb{R}_+}\bigg(\int e^{\|u_i\|_n\|\xihat\|_n+(C_0+C_i) \|\xi\|_d}\bigl(\|u_i\|_n\|\xihat\|_n+\| \zeta^{i0}(h)\|_{\lambda,d}\|\xi\|_d\bigr)\\
&\quad \times\left(\frac{d}{ds}\bigl(\zeta^{i0}(h_1)(s)\bigr)^{\top}\xi-\frac{d}{ds}\bigl(\zeta^{i0}(h_2)(s)\bigr)^{\top}\xi\right)K^{\Yhat,X}(d\xihat, d\xi)\bigg)^2e^{\lambda s} ds\\
&\leq C(u_i+M_0+M_i)^2K_i^2(L_0+L_i)^{2}\|h_1-h_2\|^{2}_{\lambda,m+1},
\end{align*}
\begin{align*}
J_4&\leq C\int_{\mathbb{R}_+}\left(\int e^{C_0 \|\xi\|_d}\| \zeta^0(h)\|_{\lambda,d}\| \xi\|_d \left(\frac{d}{ds}\bigl(\zeta^0(h_1)(s)\bigr)^{\top}\xi-\frac{d}{ds}\bigl(\zeta^0(h_2)(s)\bigr)^{\top}\xi\right)F(d\xi)\right)^2e^{\lambda s} ds,\\
 &\leq CM_0^2K_0^2L_0^2\|h_1-h_2\|^{2}_{\lambda,m+1}.
 \end{align*}
Summing up, we have shown that there exist constants $Q_i>0$ such that condition~\eqref{eq:kappaLip} is satisfied for all $h_1,h_2 \in H_{m+1}^{\lambda}$.

\subsection*{Proof of Proposition~\ref{prop:existenceY}}

For every $(\omega,\omega')\in\widetilde{\Omega}$ and $t\geq0$, let us define $Y^{\perp}_t(\omega,\omega') := y_0 + J_t(\omega')$, for some starting value $y_0\in\RR_+$. Clearly, $Y^{\perp}$ is a pure jump $(\cG_t)$-adapted process.
In order to prove the existence of a probability measure $\widetilde{\mathbb{Q}}$ such that the jump measure of $Y^{\perp}$ with respect to the two filtrations $(\cG_t)_{t\geq0}$ and $(\overline{\cG}_t)_{t\geq0}$ is given by $K_t\bigl(\omega, Y^{\perp}_{t-}(\omega,\omega'),d\xi\bigr)dt$ and $\widetilde{\mathbb{Q}}|_{\mathcal{F}}=\mathbb{Q}$ holds true, we shall rely on~\cite[Theorem 3.6]{jacod7475}. 

For all $\omega\in\Omega$, $y\in\RR_+$ and $t\geq0$, let us first extend the definition of $K_t(\omega,y, d\xi)$ as of~\eqref{eq:momentproblem}-\eqref{eq:mart_cond} to $y\in\RR_-$ by requiring that it is supported on $[-|y|, \infty)$ and by setting $p_t(y,\omega)=p_t(-y, \omega)$ for $y\in\RR_-$. Due to Assumption~\ref{ass:ex_sol_moment}, the measure $K_t(\omega,Y^{\perp}_{t-}(\omega,\omega'), d\xi)dt$ defined via the moment problem~\eqref{eq:momentproblem}-\eqref{eq:mart_cond} is a positive random measure on $\mathbb{R}_+ \times \mathbb{R}$.
Since $Y^{\perp}$ is c\`adl\`ag and $(\cG_t)$-adapted and since $\bigl(p_t(\cdot,y)\bigr)_{t\geq0}$ is $(\cF_t)$-predictable and depends in a measurable way on $y$,  the process $p_t\bigl(\omega, Y^{\perp}_{t-}(\omega,\omega')\bigr)$ is $(\cG_t)$-predictable and the same $(\cG_t)$-predictability and, hence, $(\overline{\cG}_t)$-predictability, is inherited by $K_t\bigl(\omega,Y^{\perp}_{t-}(\omega,\omega'), d\xi\bigr)$.
Let us then define the $(\overline{\cG}_t)$-predictable random measure $\nu$ by
\[
\nu(\widetilde{\omega},dt,d\xi)=\left\{ \begin{array}{ll}
K_t\bigl(\omega, Y^{\perp}_{t-}(\omega,\omega'),d\xi\bigr)dt, & t < T_{\infty};\\
0, & t \geq T_{\infty}.
\end{array} \right.
\] 
 \cite[Theorem 3.6]{jacod7475} implies that there exists a unique probability kernel $\mathbb{P}$ from $(\Omega,\mathcal{F})$ to $\mathcal{H}$, such that $\nu$ is the $(\overline{\cG}_t)$-compensator of the random measure $\mu$ associated with the jumps of $J$. On $(\widetilde{\Omega}, \mathcal{G})$ we then define the probability measure $\widetilde{\mathbb{Q}}$ by $\widetilde{\mathbb{Q}}(d\widetilde{\omega})=\mathbb{Q}(d\omega)\mathbb{P}(\omega,d\omega')$, whose restriction to $\mathcal{F}$ is equal to $\mathbb{Q}$.
Moreover, since $Y^{\perp}$ is $(\cG_t)$-adapted and $K_t\bigl(\omega,Y^{\perp}_{t-}(\omega,\omega'), d\xi\bigr)$ is $(\cG_t)$-predictable, the random measure $\nu$ is also the $(\cG_t)$-compensator of the jump measure of $J$.
Since, for every $\omega\in\Omega$, $y\in\RR_+$ and $t\geq0$, the measure $K_t(\omega,y,d\xi)$ is supported by $[-y,\infty)$, the process $Y^{\perp}$ takes values in $\RR_+$. 

It remains to show that $Y^{\perp}$ is of finite activity or, equivalently, that $T_{\infty}=\infty$ $\widetilde{\mathbb{Q}}$-a.s. Since $g_{m+1}(\xi)\geq1$ for all $\xi\in\RR$ and due to condition \eqref{eq:mart_cond}, it holds that
\begin{align*}
\EE^{\widetilde{\QQ}}\bigl[\mu\bigl([0,T]\times\mathbb{R}\bigr)\bigr]
&=\EE^{\widetilde{\QQ}}\bigl[\nu\bigl([0,T]\times\mathbb{R}\bigr)\bigr]
=\EE^{\widetilde{\QQ}}\left[\int_0^T
K_t\bigl(\omega, Y_{t-}(\omega,\omega'), \mathbb{R}\bigr)dt\right]\\
&\leq \EE^{\QQtilde}\left[\int_0^T\int g_{m+1}(\xi)K_t\bigl(\omega, Y_{t-}(\omega,\omega'),d\xi\bigr)dt\right]
= \EE^{\QQtilde}\left[\int_0^Tp^{m+1}_t\bigl(\omega,Y_{t-}(\omega,\omega')\bigr)dt\right] \leq \bar{H}T,
\end{align*}
due to the uniform boundedness of the processes $\{\bigl(p^{m+1}_t(\cdot,y)\bigr)_{t\geq0};y\in\RR_+\}$.
This implies that $\mu\bigl([0,T]\times \mathbb{R}\bigr)<\infty$, $\widetilde{\mathbb{Q}}$-a.s. for all $T \geq 0$ and, hence, $\widetilde{\mathbb{Q}}[T_{\infty}<\infty]=0$.

\subsection*{Proof of Lemma~\ref{lemma:immersion}}
(i):
for $t\geq0$, let $H$ be a bounded $\cH_t$-measurable random variable, $F$ a bounded $\cF_t$-measurable random variable and $A\in\cF_{\infty}$. As can be deduced from the proof of the previous proposition, the $\mathcal{F}_{\infty}$-conditional law of $\bigl(\omega'(s)\bigr)_{s\in[0,t]}$ under $\QQtilde$ is $\cF_t$-measurable (compare also with~\cite[part (iv) of Theorem 5.1]{FOS:11}). In particular, this means that $\EE^{\QQtilde}[H|\cF_{\infty}]=\EE^{\QQtilde}[H|\cF_t]$. In turn, this implies that
\[	\begin{aligned}
\EE^{\QQtilde}[FH\ind_A]
= \EE^{\QQtilde}[F\,\EE^{\QQtilde}[H|\cF_{\infty}]\ind_A]
= \EE^{\QQtilde}[F\,\EE^{\QQtilde}[H|\cF_t]\ind_A]
= \EE^{\QQtilde}[\EE^{\QQtilde}[FH|\cF_t]\ind_A].
\end{aligned}	\]
By a monotone class argument, this means that $\EE^{\QQtilde}[G|\cF_{\infty}]=\EE^{\QQtilde}[G|\cF_t]$ for every bounded $\cG_t$-measurable random variable $G$.
It is well-known (see e.g. \cite[Proposition 5.9.1.1]{JYC}) that the latter property is equivalent to the fact that all $\bigl(\QQtilde,(\cF_t)_{t\geq0}\bigr)$-martingales are also $\bigl(\QQtilde,(\cG_t)_{t\geq0}\bigr)$martingales. Since $\QQtilde|_{\cF_{\infty}}=\QQ$, this implies that all $\bigl(\QQ,(\cF_t)_{t\geq0}\bigr)$-local martingales are also $\bigl(\QQtilde,(\cG_t)_{t\geq0}\bigr)$-local martingales. As a consequence, every $\bigl(\QQ,(\cF_t)_{t\geq0}\bigr)$-semimartingale is also a $\bigl(\QQtilde,(\cG_t)_{t\geq0}\bigr)$-semimartingale. Moreover, since semimartingale characteristics can be characterized in terms of local martingales (see e.g. \cite[Theorem II.2.21]{jashi03}), this implies that $(X,\Yhat)$ is a semimartingale with respect to $(\QQtilde,(\cG_t)_{t\geq0})$ with unchanged characteristics.

(ii): since $Y^{\perp}$ is a pure jump process, in order to prove its local independence with respect to $(X,\Yhat)$, it suffices to show that $Y^{\perp}$ and $(X,\Yhat)$ do never jump together. In view of \eqref{eq:dep_part}, this reduces to show that $\QQtilde\bigl(\exists\, t>0 | \Delta Y^{\perp}_t\neq0\text{ and }\Delta X_t\neq0\bigr)=0$. Let $\mathfrak{T}$ be the set of jump times of $X$. Since $X$ is c\`adl\`ag, the set $\mathfrak{T}$ is countable (see e.g. \cite[Proposition I.1.32]{jashi03}) and, similarly as in \cite[Theorem 4.7]{KK13},
\[
\QQtilde\bigl(\exists\, t>0 | \Delta Y^{\perp}_t\neq0\text{ and }\Delta X_t\neq0\bigr)
\leq \EE^{\QQtilde}\left[\sum_{t\in\mathfrak{T}}\ind_{\{\Delta Y_t^{\perp}\neq0\}}\right]
= \EE^{\QQtilde}\left[\sum_{t\in\mathfrak{T}}\EE^{\QQtilde}[\ind_{\{\Delta Y_t^{\perp}\neq0\}}|\cF_{\infty}]\right]=0,
\]
where the last equality follows from the fact that $\EE^{\QQtilde}[\ind_{\{\Delta Y_t^{\perp}\neq0\}}|\cF_{\infty}]=0$ for all $t>0$, since, due to Proposition \ref{prop:existenceY}, the jump measure of $Y^{\perp}$ with respect to the larger filtration $(\overline{\cG}_t)_{t\geq0}$ (which satisfies $\overline{\cG}_0=\cF_{\infty}\otimes\{\emptyset,\Omega'\}$) is absolutely continuous with respect to the Lebesgue measure, so that $Y^{\perp}$ does not have any fixed time of discontinuity (see e.g. \cite[Lemma II.2.54]{jashi03}).

In order to prove the last assertion, note that condition \eqref{eq:mart_cond} implies that condition $I(0,1)$ from \cite{KS:02} is satisfied, since, for all $y\in\RR_+$, $T\geq0$ and $i=1,\ldots,m$,
\begin{gather*}
\sup_{t\in[0,T]}\EE^{\QQtilde}\left[\exp\left(\int_0^t\int\bigl(e^{u_i\xi}(u_i\xi-1)+1\bigr)K_s\bigl(\omega,Y^{\perp}_{s-}(\omega,\omega'),d\xi\bigr)ds\right)\right]	\\
\leq
\sup_{t\in[0,T]}\EE^{\QQtilde}\left[\exp\left((1+u_m)\int_0^t\int g_{m+1}(\xi)K_s\bigl(\omega,Y^{\perp}_{s-}(\omega,\omega'),d\xi\bigr)ds\right)\right] \\
=
\sup_{t\in[0,T]}\EE^{\QQtilde}\left[\exp\left((1+u_m)\int_0^tp^{m+1}_s\bigl(\omega,Y^{\perp}_{s-}(\omega,\omega')\bigr)
dt\right)\right] 
\leq e^{(1+u_m)T\Hline}<\infty.
\end{gather*}
Moreover, condition \eqref{eq:mart_cond} can be easily shown to imply that $\int_0^T\!\int |\xi e^{u_i\xi}-\xi|K_t\bigl(\omega,Y^{\perp}_{t-}(\omega,\omega'),d\xi\bigr)dt$ is $\QQtilde$-a.s. finite for all $T\geq0$.
Hence, \cite[Theorem 3.2]{KS:02} implies that $\bigl(\exp(u_iY^{\perp}_t-\int_0^t\Psi^{Y^{\perp}}_s(u_i)ds)\bigr)_{t\in[0,T]}$ is a uniformly integrable $\bigl(\QQtilde,(\overline{\cG}_t)_{t\geq0}\bigr)$-martingale, for all $i=1,\ldots,m$. In turn, since $T\geq0$ is arbitrary, this proves the $\bigl(\QQtilde,(\overline{\cG}_t)_{t\geq0}\bigr)$-martingale property of $\bigl(\exp(u_iY^{\perp}_t-\int_0^t\Psi^{Y^{\perp}}_s(u_i)ds)\bigr)_{t\geq0}$, for all $i=1,\ldots,m$.
Finally, since the latter process is $(\cG_t)$-adapted, it is also a martingale in the smaller filtration $(\cG_t)_{t\geq0}$.

\subsection*{Proof of Theorem~\ref{thm:constr_final}}
Due to Lemma \ref{lemma:immersion}, the local exponent of $\Yhat$ with respect to the extended filtered probability space $(\Omega,\cG,(\cG_t)_{t\geq0},\QQtilde)$ is still given by $\Psi^{\Yhat}$ and $Y^{\parallel}=(\Yhat)^{\parallel}=\Yhat$. Since $Y^{\perp}$ and $Y-Y^{\perp}=\Yhat$ are locally independent (see Lemma \ref{lemma:immersion}), the consistency condition \eqref{eq:consistencyspreadQ} directly follows from condition \eqref{eq:momentproblem}.

In order to prove the martingale property of the process given in equation \eqref{eq:martspreadQ}, note first that the $\bigl(\QQtilde,(\cG_t)_{t\geq0}\bigr)$-martingale property of the process $\Bigl(\exp\bigl(u_iY^{\perp}_t-\int_0^t\Psi^{Y^{\perp}}_s(u_i)ds\bigr)\Bigr)_{t\geq0}$ (see Lemma \ref{lemma:immersion}), condition (iv) in Definition~\ref{def:BB} and the local independence of $Y^{\perp}$ and $(X,\Yhat)$ on $(\Omega,\cG,(\cG_t)_{t\geq0},\QQtilde)$, implies that the process given in equation \eqref{eq:martspreadQ} is a $\bigl(\QQtilde,(\cG_t)_{t\geq0}\bigr)$-local martingale, for all $i=1,\ldots,m$. 
Being a non-negative local martingale, it is also a supermartingale by Fatou's lemma. Hence, to establish the true martingale property, it suffices to observe that, since $\overline{\cG}_0=\cF_{\infty}\otimes\{\emptyset,\Omega'\}$, it holds that, for all $T\geq0$ and $i=1,\ldots,m$,
\begin{align*}
&\EE^{\QQtilde}\left[\exp\left(u_iY_T+\int_0^T\bigl(\Sigma^i_s(T)-\widetilde{\Sigma}_s(T)\bigr)dX_s
-\int_0^T\Psi^{Y,X}_s\left(\bigl(u_i,\Sigma^{i\top}_s(T)-\widetilde{\Sigma}^{\top}_s(T)\bigr)^{\top}\right)ds\right)\right]	\\
&= \EE^{\QQtilde}\left[\exp\left(u_i\Yhat_T+\int_0^T\bigl(\Sigma^i_s(T)-\widetilde{\Sigma}_s(T)\bigr)dX_s
-\int_0^T\Psi^{\Yhat,X}_s\left(\bigl(u_i,\Sigma^{i\top}_s(T)-\widetilde{\Sigma}^{\top}_s(T)\bigr)^{\top}\right)ds\right)\right. \\
&\qquad\left.\EE^{\QQtilde}\biggl[\exp\left(u_iY_T^{\perp}-\int_0^T\Psi^{Y^{\perp}}_s\left(u_i\right)ds\right)\,\Bigr|\,\overline{\cG}_0\biggr]\right] \\
&= \EE^{\QQtilde}\left[\exp\left(u_i\Yhat_T+\int_0^T\bigl(\Sigma^i_s(T)-\widetilde{\Sigma}_s(T)\bigr)dX_s-\int_0^T\Psi^{\Yhat,X}_s\left(\bigl(u_i,\Sigma^{i\top}_s(T)-\widetilde{\Sigma}^{\top}_s(T)\bigr)^{\top}\right)ds\right)\right]e^{u_iY^{\perp}_0} \\
&= \EE^{\QQ}\bigl[\exp(u_i Y_0)\bigr],
\end{align*}
where in the last equality we have used the fact that $\QQtilde|_{\cF}=\QQ$ and the $\bigl(\QQ,(\cF_t)_{t\geq0}\bigr)$-martingale property of \eqref{eq:martYX}.

\bibliographystyle{abbrv}

\bibliography{biblio150505}

\end{document}